\newcolumntype{Y}{>{\raggedright\arraybackslash}X}
\newcolumntype{Z}{>{\centering\arraybackslash}X}
\numberwithin{equation}{section}
\numberwithin{figure}{section}
\theoremstyle{plain}
\newtheorem{thm}{Theorem}
\numberwithin{thm}{section}  %
\theoremstyle{plain}
\newtheorem{lem}[thm]{Lemma}
\theoremstyle{plain}
\newtheorem*{lem-safety-net}{Lemma \ref{lem:safety-net}}
\theoremstyle{plain}
\newtheorem{prop}[thm]{Proposition}
\theoremstyle{plain}
\theoremstyle{definition}
\newtheorem{defn}[thm]{Definition}
\newcommand{\npc}{{NP}}
\newcommand{\OO}{\ensuremath{\mathcal{O}}}
\newcommand{\algoname}[1]{\textsf{\textsc{#1}}}
\newcommand{\opt}{\algoname{Opt}}
\newcommand{\alg}{\algoname{A}}%
\newcommand{\Exp}[1]{\mathbb{E}\left[\,#1\,\right]}
\newcommand{\I}{\ensuremath{\mathcal{I}}} %
\newcommand{\abc}[3]{\ensuremath{\textup{#1}|\,#2\,|#3}}
\newcommand{\Part}[1]{\ensuremath{P_{\!#1}}}%
\newcommand{\first}[1]{\ensuremath{\textnormal{first}(#1)}}%
\newcommand{\Ir}{\ensuremath{\mathrm{Ir}}}
\newcommand{\Rel}{\ensuremath{\mathrm{Rel}}}
\newcommand{\IS}{\ensuremath{\mathcal{S}}}
\newcommand{\C}{\ensuremath{\mathcal{C}}}
\global\long\def\N{\mu}
\global\long\def\R{E}
\newcommand{\Q}{\ensuremath{\mathbb{Q}}}
\newcommand{\eps}{\varepsilon}
\renewcommand{\epsilon}{\eps}
\newcommand{\ee}{\left(1+\eps\right)}
\newcommand{\e}{1+\eps}
\newcommand{\eo}{1+O(\eps)}
\title{A New Approach to Online Scheduling:\\ Approximating the Optimal Competitive Ratio}
\author{ Elisabeth G\"unther\thanks{Department of Mathematics, Technische Universit{\"a}t
       Berlin, Germany.  Email: \{\texttt{eguenth,maurer,nmegow}\}\texttt{@math.tu-berlin.de}.}
     \hspace{1pt}\thanks{Supported by the DFG Research center \textsc{Matheon}
        {\em Mathematics for key technologies} in Berlin.}%
      \and %
      Olaf Maurer\footnotemark[1]\,\hspace{2pt}\footnotemark[2] \and %
      Nicole Megow\footnotemark[1]\,\hspace{2pt}\thanks{Supported by the German Science Foundation (DFG) under contract  ME 3825/1.}\and %
      Andreas Wiese\thanks{Department of Computer and System Sciences, Sapienza
        University of Rome, Italy. Email:
        \texttt{wiese@dis.uniroma1.it}.}
      \hspace{1.5pt}\thanks{Supported by the DFG Focus Program 1307 and by a fellowship within the Postdoc-Programme of the 
German Academic Exchange Service (DAAD).}}
\begin{document}
\date{}
\maketitle
\thispagestyle{empty}

\begin{abstract}
  We propose a new approach to competitive analysis in online scheduling by introducing the
  novel concept of competitive-ratio approximation schemes. Such a scheme
  algorithmically constructs an online algorithm with a competitive ratio arbitrarily close to the best possible competitive ratio
  for \emph{any} online algorithm.  We study the problem of scheduling
  jobs online to minimize the weighted sum of completion times on
  parallel, related, and unrelated machines, and we derive both
  deterministic and randomized algorithms which are almost best
  possible among all online algorithms of the respective settings. We also generalize our techniques to arbitrary monomial cost functions and apply them to the makespan objective. Our
  method relies on an abstract characterization of online algorithms
  combined with various simplifications and
  transformations. %
We also contribute algorithmic means to compute the actual value of 
the best possible competitive ratio up to an arbitrary accuracy. 
This strongly contrasts (nearly) all previous manually obtained competitiveness
results and, most importantly, it reduces the search
for the optimal competitive ratio to a question that a computer can
answer. We believe that our concept can also be applied to many other problems
and yields a new perspective on online algorithms in general.
We believe that our concept can also be applied to many other problems
and yields a new perspective on online algorithms in general.
\end{abstract}

\textbf{Keywords}: online scheduling, competitive analysis, jobs arrive over time, min-sum objective, makespan

\newpage
\setcounter{page}{1}
\section{Introduction}

Competitive analysis~\cite{sleatorT85,karlin88} is the most popular
method for studying the performance of online algorithms. It provides
an effective framework to analyze and classify algorithms based on
their worst-case behavior compared to an optimal offline algorithm
over an infinite set of input instances.  For some problem types,
e.g., online paging, competitive analysis may not be adequate to
evaluate the performance of algorithms, but for a vast majority of
online problems it is practical, natural, and yields meaningful
results. A classical such problem is online scheduling to minimize the weighted average completion time. It
has received a lot of attention in the past two decades. For different
machine environments, a long sequence of papers emerged introducing
new techniques and algorithms, improving upper and lower bounds on the
competitive ratio of particular algorithms as well as on the best
possible competitive ratio that any online algorithm can achieve.
Still, unsatisfactory gaps remain. As for most online problems, a
provably optimal online algorithm, w.r.t.\ competitive analysis,
among {\em all} online algorithms is only known for very special
cases.

In this work we close these gaps and present nearly optimal online
scheduling algorithms. We provide {\em competitive-ratio approximation schemes} that compute algorithms
with a competitive ratio that is at most a factor~$1+\eps$ larger than
the optimal ratio for any~$\eps>0$. To that end, we introduce a new
way of designing online algorithms. Apart from structuring and
simplifying input instances, we find an abstract description of online scheduling
algorithms, which allows us to reduce the infinite-size set of all online
algorithms to a relevant set of finite size. This is the key for
eventually allowing an enumeration scheme that finds an online
algorithm with a competitive ratio arbitrarily close to the optimal
one. 
Besides improving on previous algorithms, our method also
provides an algorithm to compute the competitive ratio of the designed
algorithm, and even the best possible competitive ratio, up to any
desired accuracy. This is clearly in
strong contrast to all previously given (lower) bounds that stem from
{\em manually} designed input instances. 
We are aware of only very few online problems for which a competitive ratio, or even the optimal competitive ratio,
   are known to be {\em computable} by some algorithm (for a not inherently finite problem).
   Our result is surprising, as there are typically no means of enumerating all possible input
   instances and all possible online algorithms. Even for only one given
   algorithm, usually one cannot compute its competitive ratio 
   simply due to difficulties like the halting problem. We overcome these issues and pave the way for computer-assisted design
  of~online~algorithms.

We believe that our concept of abstraction for online algorithms can be
applied successfully to other %
problems. 
We show this for other scheduling problems with jobs arriving online over time. We hope that our new approach to competitive
analysis contributes to a better understanding of online algorithms
and may lead to a new line of~research~in~online~optimization.

\subsection{Problem Definition and Previous Results}

{\bf Competitive analysis.} Given a minimization problem, a
deterministic online algorithm \alg\ is called~{\em
  $\rho$-competitive} if, for any problem instance~$\I$, it achieves a
solution of value~$\alg(\I) \leq \rho \cdot \opt(\I)$,
where~$\opt(\I)$ denotes the value of an optimal offline solution for
the same instance~$\I$. A randomized online algorithm is
called~$\rho$-competitive, if it achieves in expectation a solution of
value~$\Exp{\alg(\I)} \leq \rho \cdot \opt(\I)$ for any instance~$\I$.
The \emph{competitive ratio}~$\rho_{\alg}$ of~$\alg$ is the infimum
over all~$\rho$ such that~$\alg$ is~$\rho$-competitive. The minimum
competitive ratio~$\rho^*$ achievable by any online algorithm is
called {\em optimal}. 
Note that there are no requirements on the
  computational complexity of competitive algorithms. Indeed, the
    competitive ratio measures the best possible performance under the
    lack of information given unbounded computational resources.

We define a competitive-ratio approximation scheme as a procedure that computes
a nearly optimal online algorithm and at the same time provides a
nearly exact estimate of the optimal competitive ratio.
\begin{defn}
  A {\em competitive-ratio approximation scheme} computes for a given~$\eps>0$
  an online algorithm~$\alg$ with a competitive ratio~$\rho_{\alg}
  \leq (1+\eps)\rho^*$. Moreover, it determines a value~$\rho'$ such
  that $\rho'\leq \rho^* \leq (1+\eps) \rho'$.
\end{defn}

\noindent {\bf Online scheduling.} A scheduling instance consists of a
fixed set of~$m$ machines and a set of jobs~$J$, where each job~$j\in
J$ has processing time~$p_j\in \Q^+$, weight~$w_j\in \Q^+$, and %
release date~$r_j\in \Q^+$. %
The jobs arrive online over time, i.e., each job becomes known to
the scheduling algorithm only at its release date. We consider three
different machine environments: identical parallel machines~(denoted
by~P), related machines~(Q) where each machine $i$ has associated a
speed~$s_i$ and processing a job~$j$ on machine~$i$ takes $p_j/s_i$
time, and unrelated machines~(R) where the processing time of a job
$j$ on each machine $i$ is explicitly given as a value~$p_{ij}$. The main problem considered in this paper is to schedule the jobs on the given set of machines so as
to minimize $\sum_{j\in J}w_jC_j$, where~$C_j$ denotes the completion
time of job~$j$. We consider the problem with and without preemption.
Using standard scheduling notation~\cite{grahamLLR79}, we denote
the non-preemptive (preemptive) problems that we consider in this
paper by \abc{Pm}{r_j, (pmtn)}{\sum w_jC_j}, \abc{Qm}{r_j,
  (pmtn)}{\sum w_jC_j}, and \abc{Rm}{r_j,pmtn}{\sum w_jC_j}.
We also briefly consider more general min-sum objectives~$\sum_{j\in J}w_jf(C_j)$, where~$f$ is an arbitrary monomial function~$f(x)=k\cdot x^{\alpha}$, with constant~$\alpha\geq 1, k >0$, and the classical makespan~\mbox{$C_{\max}:=\max_{j\in J}C_j$}.

\medskip
\noindent {\bf Previous results.}
The offline variants of nearly all
problems under consideration are
\npc-hard~\cite{GareyJohnson79,labetoulle84,lenstra77}, but in most
cases polynomial-time approximation schemes have been
developed~\cite{afrati99,LenstraShmoysTardos90,HochbaumShmoys87,HochbaumShmoys88}.

The corresponding online settings have been a highly active field of research in the
past fifteen years. A whole sequence of papers appeared introducing
new algorithms, new relaxations and analytical techniques that
decreased the gaps between lower and upper bounds on the optimal
competitive
ratio~\cite{goemans02,schulzS02,hallSSW97,sitters10-ipco,sitters10-orl,andersonP04,
  chekuriMNS01,hoogeveenV96,correaW09,goemans97,megowS04,megow06-diss,chungNS10,
  schulzS02-unrelated,liuL09,luSS03,stougieV02,phillipsST98,chakrabartiPSSSW96,
  seiden00,epsteinvS01,chenV97}.
Interestingly, despite the considerable effort, optimal competitive
ratios are known only for~\abc{1}{r_j, pmtn}{\sum
  C_j}~\cite{schrage68} and for non-preemptive single-machine
scheduling~\cite{andersonP04,stougieV02,hoogeveenV96,chekuriMNS01}.
In all other scheduling settings remain unsatisfactory, 
even~quite~significant~gaps. 
See Appendix~\ref{app:related_work} 
for a detailed description of the 
state of the art for each individual problem.

Very recently, our new concept of competitive-ratio approximation
schemes was applied 
also to job shop scheduling \abc{Jm}{r_j,op\leq \mu}{C_{\max}} and
non-preemptive scheduling on unrelated machines~\abc{Rm}{rj}{C_{\max}}~\cite{kurpiszMG12}.

To the best of our knowledge, there are only very few problems
  in online optimization for which an optimal competitive ratio can be
  determined, bounded, or approximated by computational means. Lund and Reinhold~\cite{LundReingold1994}
present a framework for upper-bounding the optimal competitive ratio
of randomized algorithms 
by a linear program. For certain cases, e.g., the 2-server problem in a space of
three points, this yields a provably optimal competitive
ratio. Ebenlendr et al. \cite{EbenlendrJaworSgall2009,EbenlendrS11}
study various online and semi-online variants of scheduling preemptive
jobs on uniformly related machines to minimize the makespan.
In contrast to our model, they assume the jobs to be given one by one (rather than over time). They
prove that the optimal competitive ratio can be computed by a linear
program for any given set of speeds. In terms of approximating the
best possible performance guarantee, the work by Augustine, Irani, and
Swamy~\cite{AugustineIraniSwamy2008} is closest to ours. They show how
to compute a nearly optimal power-down strategy for a processor with a
finite number of power states.

\subsection{New Results and Methodology}
\vspace{-0.5ex}
In this paper, we introduce the concept of competitive-ratio approximation schemes
and present such schemes for various scheduling problems with jobs arriving online over time. 
We present our technique focussing on the problems~\abc{Pm}{r_j, (pmtn)}{\sum w_jC_j}, \abc{Qm}{r_j,
(pmtn)}{\sum w_jC_j}~(assuming a constant range of machine speeds 
without preemption), and \abc{Rm}{r_j,pmtn}{\sum w_jC_j}, and we 
comment on how it applies to other cost functions such as the 
makespan, $C_{\max}$, and~$\sum_{j\in J}w_jf(C_j)$, where~$f$ is an 
arbitrary monomial function with fixed exponent. %
For any~$\eps>0$, we show that the competitive ratios of our new algorithms
are by at most a factor~$\e$ larger than the respective
optimal competitive ratios.
We obtain such nearly optimal online algorithms for the deterministic
as well as the randomized setting, for any number of machines
$m$.Moreover, we give an algorithm
which estimates the optimal competitive ratio for these problems to any desired
accuracy. Thus, we reduce algorithmically the performance gaps for all considered problems 
to an arbitrarily small value. These results reduce the long-time ongoing search for the best possible competitive ratio for the
considered problems to a question that can be answered by~a~finite~algorithm.

To achieve our results, we introduce a new and unusual way of designing
online scheduling algorithms. We present an abstraction in which online
algorithms are formalized as \emph{algorithm maps}. Such a map
receives as input a set of unfinished jobs %
together with the schedule computed so far. Based on this information,
it returns a schedule for the next time instant. This view captures
exactly how online algorithms operate under limited information. The
total number of algorithm maps is unbounded. However, we show that
there is a finite subset which approximates the entire set.  More
precisely, for any algorithm map there is a map in our subset whose
competitive ratio is at most by a factor~$\e$ larger. To achieve this
reduction, we first apply several standard techniques, such as
geometric rounding, time-stretch, and weight-shift,
to transform and simplify the input problem without increasing the
objective value too much; see, e.g.,~\cite{afrati99}. The key,
however, is the insight that it suffices for an online algorithm to
base its decisions on the currently unfinished jobs and a very {\em
  limited part} of the so far computed schedule---rather than the
entire history. This allows for an enumeration of all relevant
algorithm maps (see also \cite{ManasseMcGeochSleator1988} for an enumeration routine
for online algorithms for a fixed task system with finitely many states). 
For randomized algorithms we even show that we can
restrict to instances with only constantly many jobs. As all our
structural insights also apply to offline algorithms for the same
problems, %
they might turn out to be useful for other~settings~as~well.

Our algorithmic scheme contributes more than %
an improved competitive ratio. It also outputs~(up to a
factor~$1+\eps$) the exact value of the competitive ratio of the
derived algorithm, which implies a $(1+\eps)$-estimate for the optimal
competitive ratio.
 This contrasts strongly all earlier results
where~(matching) upper and lower bounds on the competitive ratio of a
particular and of all online algorithm
 had to be derived {\em manually}, instead of
executing an algorithm using, e.g., a computer.
In general, there are no computational means to determine the
competitive ratio of an algorithm---even when it is a constant. %
It is simply not 
possible to enumerate all possible input instances.
Even more, there are no general means of enumerating all possible online algorithms to
determine the optimal competitive ratio.
However, for the scheduling problems studied in this paper our
extensive simplification of input instances and our abstract view on
online algorithms allow us to overcome these obstacles, losing only a
factor of $1+\epsilon$ in the objective.

Although the enumeration scheme for identifying the (nearly)
  optimal online algorithm heavily exploits unbounded computational
resources, the resulting algorithm itself has polynomial %
running time. As a consequence, there are efficient online algorithms for the considered problems with
almost optimal competitive ratios. Hence, the granted additional, even unbounded, computational power of online algorithms does not yield any significant benefit here.


\medskip
\noindent {\bf Outline of the paper.} In Section~\ref{sec:simplifications} we introduce
several general transformations and observations that simplify the structural complexity of online scheduling in the setting of~\abc{Pm}{r_j, pmtn}{\sum w_jC_j}. Based on this, we present our
abstraction of online algorithms and develop a competitive-ratio approximation
scheme in Section~\ref{sec:Abstraction-online-algos}. Next, we sketch in
Section~\ref{sec:Extensions} how to extend these techniques %
to the non-preemptive setting and more general machine environments such that 
the approximation scheme~(Sec.~\ref{sec:Abstraction-online-algos}) remains applicable. 
In Section~\ref{sec:randomized}, we present competitive-ratio approximation schemes for the 
randomized setting. Finally, in Section~\ref{sec:other-objectives}, we extend our results to other 
objective functions.
\vspace{-1ex}

\section{General Simplifications and Techniques}
\label{sec:simplifications}
\vspace{-1ex}

In this section, we discuss several transformations that simplify the
input and reduce the structural complexity of online schedules
for~\abc{Pm}{r_j, pmtn}{\sum w_jC_j}.  Later, we outline how to adapt
these for more general settings.
Our construction combines several transformation techniques known for offline PTASs~(see~\cite{afrati99} and the references therein) and a new technique to subdivide an instance online into parts which can be handled separately.
We will use the terminology that {\em at~$\eo$ loss we can restrict} to instances or schedules with certain properties.
This means that we lose at most a factor~$\eo$, as $\eps\rightarrow 0$, by
limiting our attention to those. We bound several relevant
parameters by constants. If not stated differently, any mentioned constant
depends only on~$\eps$ and~$m$.

\begin{lem}[\!\!\cite{afrati99}]
\label{lem:first-rounding}At $1+O(\eps)$ loss we can restrict
to instances where all processing times, release dates, and weights are powers of $1+\eps$, no job is released before time $t=1$, and $r_{j}\ge\eps\cdot p_{j}$ for all jobs $j$.
\end{lem}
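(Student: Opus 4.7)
The plan is to apply three successive transformations to each instance, each losing at most a factor $1+O(\eps)$, so that their composition yields all claimed properties simultaneously. Every transformation acts either on a single job's own parameters or as a global rescaling that can be fixed in advance, so it can be carried out online: given an online algorithm $A$ with competitive ratio $\rho$ on the class of transformed instances, I would define an online algorithm $A'$ on general instances by applying the transformations to each job as it arrives, feeding the transformed parameters to $A$, and executing the schedule $A$ returns. Since transformed instances form a subclass of general instances, the optimal competitive ratio $\rho^*_\eps$ on the transformed class satisfies $\rho^*_\eps \le \rho^*$, so it suffices to show that $A'$ loses only a $1+O(\eps)$ factor relative to $A$.

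First I would round every $p_j$, $w_j$, and $r_j$ up to the next power of $1+\eps$, yielding an instance $\hat{\I}$ with $\hat p_j, \hat w_j, \hat r_j$ at most $1+\eps$ times their original values. To bound $\opt(\hat{\I})$, I would take an optimal schedule $\sigma$ for $\I$ and stretch it in time by $1+\eps$: the stretched start times $(1+\eps)s_j \ge (1+\eps)r_j \ge \hat r_j$ respect the rounded release dates, and the stretched processing intervals of length $(1+\eps)p_j \ge \hat p_j$ accommodate the rounded processing times, so the result is feasible for $\hat{\I}$ with cost at most $(1+\eps)^2\opt(\I)$. Conversely, any schedule returned by $A$ for $\hat{\I}$ is also feasible for $\I$, and measured with the smaller original weights its cost is at most $A(\hat{\I}) \le \rho\cdot\opt(\hat{\I}) \le (1+\eps)^2\rho\cdot\opt(\I)$.

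Next, to enforce $r_j \ge \eps\cdot p_j$, I would replace the release date of every job with $r_j < \eps p_j$ by the smallest power of $1+\eps$ exceeding $\eps p_j$. Since any feasible schedule satisfies $C_j \ge p_j$, delaying $j$ by at most $\eps p_j$ increases $C_j$ by at most a $1+\eps$ factor; a uniform time-stretch of any candidate schedule by $1+\eps$ absorbs all such delays at once. Finally, to guarantee that no job is released before $t=1$, I would scale the entire time axis by a single power of $1+\eps$ so that the smallest release date (already a power of $1+\eps$ after the previous steps) equals $1$. This rescaling preserves the ratio $\alg(\I)/\opt(\I)$ exactly and keeps all values as powers of $1+\eps$.

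The main concern will be checking that each rounding step is implementable online and that the three multiplicative losses do not compound badly. Both are routine: each rounding of $p_j, w_j, r_j$ depends only on $j$'s own parameters (announced to the online algorithm at release), and the losses compose to $(1+\eps)^{O(1)} = 1+O(\eps)$. Combining the bounds across the three steps yields $A'(\I) \le (1+O(\eps))\rho^*_\eps\cdot\opt(\I) \le (1+O(\eps))\rho^*\cdot\opt(\I)$, which is precisely the claim that we may restrict to instances with the stated properties at $1+O(\eps)$ loss.
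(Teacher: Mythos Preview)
The paper does not supply its own proof; the lemma is quoted with a citation to Afrati et al.\ and referred to as a ``standard geometric rounding procedure.'' Your argument reproduces exactly that procedure---geometric rounding of $p_j,w_j,r_j$, raising $r_j$ to at least $\eps p_j$ via the bound $C_j\ge p_j$ together with a time-stretch, and a global time rescaling---and is correct.

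One remark on the ordering, in case you intend the wrapper $A'$ to actually carry out step~3: after step~2 the smallest modified release date need not belong to the first-arriving job (a first job with large $p_j$ can have its release date pushed past that of a later job), so the required scale factor would not be known when the first job appears. Either perform the scaling before step~2---scale by the first job's rounded release date so that all $r_j\ge 1$, and then step~2 only increases release dates---or, as you already note, treat the rescaling as a pure WLOG normalization that leaves $\alg/\opt$ invariant and hence needs no online implementation at all.
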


This standard \emph{geometric rounding} procedure used in the lemma above
allows us to see intervals of the form~$I_{x}:=[R_{x},R_{x+1})$
with~$R_{x}:=(1+\eps)^{x}$ as atomic entities.
An online algorithm can define the corresponding schedule at the
beginning of an interval since no further jobs are released until the
next interval. Moreover, we assume at $\e$ loss that all jobs that finish within~$I_{x}$ have
completion time~$R_{x+1}$.

\paragraph{Simplification within intervals.}%
\label{par:simple-interval-input}

Our goal is to reduce the number of situations that can arise at the
beginning of an interval. To this end, we partition the set of jobs
released at time~$R_x$ into the set of \emph{large}
jobs~$L_x$, %
with processing times at least~$\eps^3R_x$, and the set of
\emph{small} jobs~$S_{\!x}$ with all remaining jobs.  Running {\em
  Smith's Rule}~\cite{smith56} on small jobs allows us to group very
small jobs to job \emph{packs}, which we treat as single jobs.
Together with Lemma~\ref{lem:first-rounding} we obtain bounds on the
lengths of jobs of each release date.

\begin{lem}\label{lem:constant-num-proc-times}%
  At~$\eo$ loss, we can assume that for each interval $I_x$ there are
  lower and upper bounds for the lengths of the jobs $S_x \cup L_x$
  that are within a constant factor of $R_x$ and the constants are
  independent of~$x$. Also, the number of distinct processing times of
  jobs in each interval is upper-bounded by a constant.
\end{lem}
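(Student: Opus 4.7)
The plan is to deduce the two bounds on processing times from Lemma~\ref{lem:first-rounding} together with the separation into large and small jobs, and then to count distinct processing times by exploiting that they are all powers of $1+\eps$ inside a bounded range. The upper bound is immediate: Lemma~\ref{lem:first-rounding} guarantees $r_j\ge\eps\cdot p_j$ for every job, so every $j\in S_x\cup L_x$ (which has $r_j=R_x$) satisfies $p_j\le R_x/\eps$, giving an upper bound $c_2\cdot R_x$ with $c_2=1/\eps$. The lower bound is equally immediate for $L_x$, since by definition $p_j\ge\eps^3R_x$ there; the real work lies in packing the small jobs $S_{\!x}$.

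Following the hint preceding the lemma, I would pack $S_{\!x}$ using Smith's Rule. At time $R_x$, sort $S_{\!x}$ by decreasing $w_j/p_j$ and, scanning in this order, open a new pack each time the total processing time of the current pack first exceeds $\eps^3R_x$. Every pack other than possibly the last then has processing time in $[\eps^3R_x,\,2\eps^3R_x]$; an undersized last pack can be merged into its predecessor, yielding lengths at most $3\eps^3R_x$, and in the degenerate case where $S_{\!x}$ itself has total work below $\eps^3R_x$ we simply pad the single pack up to length $\eps^3R_x$ at negligible cost. Rounding each pack's length up to the next power of $1+\eps$ costs another $1+\eps$ factor and keeps us inside the rounded model of Lemma~\ref{lem:first-rounding}. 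To bound the cost overhead of packing I would invoke a standard adjacent-interchange argument: in any schedule, the jobs of a pack can be reordered into Smith order without increasing $\sum w_jC_j$, after which treating the pack as a single super-job (charging every constituent the pack's completion time $C_{\text{pack}}$) over-counts the true cost by at most $\sum_{\text{packs}}\sum_{j\in\text{pack}}w_j(C_{\text{pack}}-C_j)\le 3\eps^3\sum_x R_x W_x$, where $W_x$ is the total weight released at time $R_x$. Since every feasible schedule has cost at least $\sum_j w_jr_j\ge\sum_x R_xW_x$, the packing overhead is an $O(\eps^3)$ fraction of the cost of any solution, so the competitive ratio (a quotient of two such costs) changes by at most $1+O(\eps)$.

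For the second assertion, after packing and rounding every job in $S_x\cup L_x$ has processing time that is a power of $1+\eps$ lying in $[\eps^3R_x,\,R_x/\eps]$, a range of ratio $\eps^{-4}$. The number of powers of $1+\eps$ in such a range is $\log_{1+\eps}(\eps^{-4})=O(\eps^{-1}\log(1/\eps))$, a constant depending only on $\eps$. The main obstacle I anticipate is the packing-cost estimate: the interchange argument must apply uniformly to both $\opt$ and to the online algorithms whose competitive ratio we later analyse, and the constants must be tracked carefully so that the aggregate loss, combined with the rounding of Lemma~\ref{lem:first-rounding}, is genuinely $1+O(\eps)$. The boundary cases---a final undersized pack, or an interval whose total small work is already below $\eps^3R_x$---also require brief separate handling so that the lower bound on job length holds uniformly in $x$.
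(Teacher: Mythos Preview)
Your overall plan matches the paper's: the upper bound $p_j\le R_x/\eps$ from Lemma~\ref{lem:first-rounding}, the lower bound for $L_x$ by definition, packing $S_{\!x}$ in Smith order to get a lower bound there, and then counting powers of $1+\eps$ in the resulting range. The gap is in your packing-cost estimate. You claim that after reordering the jobs of a pack into Smith order the over-count $\sum_{j\in P}w_j(C_{P}-C_j)$ is bounded by the pack length times the pack weight, hence at most $3\eps^3 R_x\sum_{j\in P}w_j$. But an adjacent interchange only reassigns jobs among the \emph{existing} time slots; it does not make those slots contiguous. Even in an optimal schedule, two small jobs $j_1,j_2$ with consecutive Smith ratios may be separated by a large job $\ell\in L_x$ whose ratio $w_\ell/p_\ell$ lies between theirs: then $j_1$ finishes near $R_x$, $j_2$ only after $\ell$, and $C_P-C_{j_1}\approx p_\ell$ can be of order $R_x$ rather than $\eps^3 R_x$. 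So your inequality on the over-count fails already for $\opt$, and you have not established $\opt(I')\le(1+O(\eps))\,\opt(I)$ for the packed instance~$I'$.

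The paper closes this gap with structure put in place \emph{before} packing. It first shows that at $1+O(\eps)$ loss the small jobs are processed in Smith order, never preempted, and never cross an interval boundary; combined with the safety net (each job released at $R_x$ finishes by $R_{x+s}$), this means that in each interval at most one pack per release date---hence at most a constant number per machine---is only partially processed. A single time-stretch then yields $\eps\,|I_x|$ of free space per machine, enough to complete every broken pack inside its interval, after which packs are genuinely atomic jobs. This is also why the paper packs at a threshold $\frac{\eps}{2d}\,|I_x|$ with a constant $d$ chosen so that the broken packs fit into the $\eps\,|I_x|$ slack, rather than at the large/small boundary $\eps^3 R_x$ you use.
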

We look for jobs in $S_{\!x}$ and $L_x$ which can be
excluded from processing within~$I_x$ at a loss of not more
than~$\eo$.  This allows us to bound the number of released jobs per
interval.

\begin{lem}\label{lem:constant-num-jobs-released}%
  At $\eo$ loss, we can restrict to instances where for each $x$, the
  number of jobs released at time $R_x$ is bounded by a
  constant~$\Delta$.
\end{lem}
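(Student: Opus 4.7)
The plan is to leverage Lemma~\ref{lem:constant-num-proc-times}: each job or pack released at $R_x$ has processing time at least $c_1 R_x$ for a constant $c_1 = c_1(\eps,m)$. This forces, via a machine-capacity argument, that the bulk of the jobs released at $R_x$ cannot complete in $I_x$, or even in the first few subsequent intervals, so their release dates can be shifted forward at negligible cost to $\opt$.

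Concretely, the work executable across $I_x, I_{x+1}, \ldots, I_{x+k}$ totals $m\eps R_x \sum_{j=0}^{k} (1+\eps)^j$, hence at most $N(k) = O((m/c_1)(1+\eps)^k)$ of the jobs released at $R_x$ can complete by time $R_{x+k+1}$ in any feasible (preemptive or non-preemptive) schedule. Pick $\Delta = \Theta(m/\eps^3)$, sort the jobs released at $R_x$ by Smith's ratio (ties broken deterministically), and for each \emph{excess} job $j$ of rank $\ell > \Delta$ move its release date forward to $R_{x+k(\ell)}$, where $k(\ell)$ is the smallest $k$ with $\ell \le N(k)$. By construction the new release date never exceeds $j$'s completion time in the original optimal schedule, so that schedule remains feasible for the modified instance $I'$, giving $\opt(I') \le (1+O(\eps))\,\opt(I)$.

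To lift an online algorithm $\alg'$ that is $\rho$-competitive on instances satisfying the conclusion of the lemma, define $\alg$ on an arbitrary input $I$ by constructing the virtual instance $I'$ on the fly---withholding each excess job from $\alg'$ until its virtual release date---and mirroring the schedule that $\alg'$ outputs. Then $\alg(I) = \alg'(I') \le \rho \cdot \opt(I') \le (1+O(\eps))\rho \cdot \opt(I)$, which is the competitive-ratio bound promised by the lemma. The subtle step, which I anticipate being the main obstacle, is verifying that $I'$ itself satisfies the new release-date bound: from a single source $R_x$ the excess jobs deferred to a target $R_y$ number $\Theta((m\eps/c_1)(1+\eps)^{y-x})$, which grows geometrically in $y-x$, so for targets far in the future even a single source can overshoot $\Delta$. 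I would address this by spreading the excess jobs of a given rank class across several consecutive later targets and iterating the transformation until a fixed point is reached, charging the combined delay each job incurs against the $\eo$ slack available in its completion time.
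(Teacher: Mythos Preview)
Your capacity argument is sound---at most $N(k)$ jobs released at $R_x$ can finish by $R_{x+k+1}$ in \emph{any} schedule---but the conclusion you draw from it is not. The bound tells you that the $\ell$-th job \emph{in order of completion time under $\opt$} has $C_j^{\opt}\ge R_{x+k(\ell)}$. It says nothing about the $\ell$-th job \emph{in Smith's-ratio order}. A job with low Smith's ratio can perfectly well be one of the first to finish in $\opt$ (e.g., a short, light job that fits into an early gap while a heavy long job waits). So your central claim, ``the new release date never exceeds $j$'s completion time in the original optimal schedule,'' is false as stated, and with it the feasibility of $\opt$'s schedule for $I'$ collapses. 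You could try to repair this by an exchange argument or by bounding the aggregate extra cost of the mis-deferred jobs against $\opt(\I)$, but neither is in the proposal, and the second route is not obviously a $1+O(\eps)$ argument.

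The paper sidesteps both this gap and your acknowledged accumulation problem with a much simpler device: push excess jobs only to the \emph{next} interval $R_{x+1}$, and do so \emph{losslessly}. For large jobs, an exchange within each processing-time class lets one assume the $m/\eps^{2}+m$ highest-weight jobs are the only ones touched in $I_x$; the rest are untouched, so moving their release to $R_{x+1}$ costs nothing. For small jobs, one first argues (via Smith's rule plus a time-stretch) that at $1+\eps$ loss small jobs are chosen greedily by Smith's ratio, so anything beyond total volume $m\,|I_x|$ is not scheduled in $I_x$ anyway and can again be pushed to $R_{x+1}$ for free; packing the tiny jobs then bounds their number. Because every single push is lossless, iterating to a fixed point incurs no cumulative error, so the accumulation issue never arises. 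Your direct-jump approach would need a substantive new argument to match this.
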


To prove the above lemmas, we use the technique of~\emph{time-stretching},
see~\cite{afrati99}. %
In an online interpretation of this method, we shift the work assigned
to any interval~$I_x$ to the interval~$I_{x+1}$. This can be done at a
loss of $1+\eps$ and we obtain free space of size~$\eps\cdot I_{x'-1}$
in each interval $I_{x'}$. Again using time-stretching, we can show that no
job needs to be completed later than constantly many intervals after its
release interval.
\begin{lem}
  \label{lem:safety-net}There is a constant $s$ such that at $\eo$
  loss we can restrict to schedules such that for each interval
  $I_{x}$ there is a subinterval of $I_{x+s-1}$ 
  which is large enough to process all jobs released at~$R_x$ and during which only those jobs are executed. 
  We call this subinterval the \emph{safety net} of
  interval $I_{x}$. We can assume that each job released at~$R_x$ finishes before time~$R_{x+s}$.
\end{lem}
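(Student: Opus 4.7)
The plan is to combine the standard time-stretching technique with the bounds on jobs per interval established in Lemmas~\ref{lem:constant-num-jobs-released} and~\ref{lem:constant-num-proc-times}. I will pick~$s$ large (but constant) enough that a small constant fraction of interval~$I_{x+s-1}$ can accommodate all work released at~$R_x$, and then modify any given schedule so that this portion of~$I_{x+s-1}$ is reserved exclusively for those jobs.

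First I would bound the work released per date. By Lemma~\ref{lem:constant-num-jobs-released} at most~$\Delta$ jobs are released at any~$R_x$, and by Lemma~\ref{lem:constant-num-proc-times} each has processing time at most~$cR_x$ for a constant~$c$ independent of~$x$. Hence the total work released at $R_x$ is $W_x\le \Delta c R_x$, while the capacity of the last $\eps$-fraction of interval $I_{x+s-1}$ summed over the $m$ machines is $m\eps\cdot|I_{x+s-1}|=m\eps^{2}(1+\eps)^{s-1}R_x$. Requiring this to exceed~$W_x$ reduces to $(1+\eps)^{s-1}\ge \Delta c/(m\eps^{2})$, which is solvable by a constant~$s$ depending only on $\eps$ and $m$; if needed, $s$ is inflated by an additive constant so that the safety net is also long enough to host a single longest job (of length~$\le cR_x$).

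With $s$ fixed, I would apply a time-stretch to any given schedule: compress all processing originally assigned to $I_y$ into a $(1-\eps)$-prefix of $I_y$ (equivalently, scale time by $1/(1-\eps)$ and re-round to the geometric grid). This inflates every completion time by at most $1/(1-\eps)=\eo$, and hence the objective by $\eo$, as required. The remaining trailing~$\eps$-fraction of $I_{x+s-1}$ is now idle, and I would designate it as the safety net of $I_x$; since this designation depends only on~$x$ (not on the instance or on the algorithm's random bits), an online algorithm can respect it and defer any job released at~$R_x$ that is still unfinished at time $R_{x+s-1}$ into this safety net, inside which the job both fits and finishes no later than $R_{x+s}$.

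The main obstacle will be making the compression rigorous for the preemptive multi-machine setting, i.e., verifying that the work originally loaded into each interval truly fits into its $(1-\eps)$-prefix without pushing any job before its release date. The standard argument handles this by first stretching the entire time axis by $1/(1-\eps)$---which does not violate release dates because those are stretched as well---and then re-rounding all release dates and completion times to the nearest powers of~$1+\eps$, absorbing the resulting rounding error into the same~$\eo$ factor. A small additional check is needed to confirm that the safety nets for different~$x$'s do not overlap, which is automatic since the safety net of $I_x$ lies in $I_{x+s-1}$ and the assignment $x\mapsto x+s-1$ is a bijection.
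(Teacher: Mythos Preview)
Your argument has the right shape---bound the total work released at $R_x$ by a constant times $R_x$, then time-stretch to free up an $\eps$-fraction of each interval---but there is a circularity in how you obtain the work bound. You invoke Lemma~\ref{lem:constant-num-jobs-released} to cap the number of jobs at $\Delta$, yet in the paper's actual logical order that lemma is established only \emph{after} the safety net: bounding the number of small jobs requires first grouping the tiny ones into packs of length $\Theta(\eps I_x)$, and that packing argument already uses the constant $s$ (to know that at most $s$ release dates can have a partially scheduled pack pending in any interval). The lower bound on job lengths in Lemma~\ref{lem:constant-num-proc-times} is circular for the same reason; only the upper bound $p_j\le r_j/\eps$, which you also use, is safe since it comes directly from Lemma~\ref{lem:first-rounding}.

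The paper's proof sidesteps this by bounding the total processing volume $p(S_x)+p(L_x)$ without ever counting small jobs. For large jobs it multiplies the count $|L_x|\le(m/\eps^2+m)\cdot 4\log_{1+\eps}(1/\eps)$ by the maximal length $R_x/\eps$; for small jobs it uses Smith's rule to get $p(S_x)\le m\,|I_x|$ outright (any surplus small work simply has its release date pushed to $R_{x+1}$ at no cost). Both ingredients are available before the safety net, so no circularity arises. Your plan is easily repaired along the same lines: replace the step $W_x\le \Delta c R_x$ by $W_x\le m|I_x| + |L_x|\cdot R_x/\eps$. The time-stretching part is then essentially what the paper does, though note that ``compressing into a $(1-\eps)$-prefix'' is not literally possible; the correct picture is the one in your parenthetical---stretch time by $1+\eps$ so that the work formerly in $I_y$ lands in $I_{y+1}$, leaving a slot of length $\eps|I_y|$ free at the end of $I_{y+1}$.
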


We can also simplify the complexity of the computed schedules by
limiting the way jobs are preempted.
We say that two large jobs are of the same \emph{type} if they have
the same processing time and the same release date. A job is
\emph{partially processed} if it has been processed, but not yet
completed.
\begin{lem}
  \label{lem:large-job-atoms}There is a constant $\N>0$ such that at
  $1+O(\epsilon)$ loss we can restrict to schedules such that
  \begin{compactitem}
  \item at the end of each interval, there are at most $m$ large jobs
    of each type which are partially processed and each of them is
    processed to an extent which is a multiple of $p_{j}\cdot\N$ and
  \item each small job finishes without preemption in the same
    interval where it started.
  \end{compactitem}
\end{lem}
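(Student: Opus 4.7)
I plan to transform an arbitrary feasible schedule into one satisfying both structural properties at a $1+O(\eps)$ loss, by performing three modifications in sequence: consolidating partial large jobs of the same type, rounding the amount processed on partial large jobs, and postponing small jobs into safety-nets. For the bound of $m$ partial large jobs per type at each interval boundary $R_{x+1}$, I would use a consolidation argument. Two large jobs of the same type share processing time and release date, hence they are interchangeable as far as the machine load profile is concerned. Since at most $m$ jobs run in parallel at any instant, at most $m$ type-$\tau$ jobs are ``in progress'' at time $R_{x+1}$. For every type-$\tau$ job that has been started but is not running on any machine at that moment, I reassign (relabel) its already-completed processing to one of the $m$ in-progress jobs of the same type; this relabelling is valid because the jobs have identical processing time, and when weights differ I order it so that jobs of highest weight finish first, which can only decrease the objective.

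Next, I would set $\N := \eps/s$, where $s$ is the constant from Lemma~\ref{lem:safety-net}. At the end of every interval and for each partial large job $j$, I round the already-processed amount down to the nearest multiple of $p_{j}\cdot\N$, and the discarded fraction is re-done later. Since every job lives inside at most $s$ intervals (Lemma~\ref{lem:safety-net}), it contributes at most $s\cdot p_{j}\cdot\N=\eps\cdot p_{j}$ of extra work overall. Summed over all jobs, the total additional workload is an $\eps$-fraction of the total workload, which can be absorbed by an additional round of time-stretching at a further $1+O(\eps)$ loss in the objective.

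For the small-job bullet I would use Lemma~\ref{lem:constant-num-jobs-released} to bound the number of small jobs released in $I_{x}$ by the constant $\Delta$. Each such job has processing time at most $\eps^{3}R_{x}$, so the total small-job workload released at $R_{x}$ is at most $\Delta\cdot\eps^{3}R_{x}$, which is negligible compared with the length of the safety-net subinterval of $I_{x+s-1}$ guaranteed by Lemma~\ref{lem:safety-net}. I would therefore postpone the entire processing of every small job released in $I_{x}$ to that safety-net and schedule those jobs back-to-back, non-preemptively, on a single machine; the condition ``starts and finishes in the same interval'' then holds because the safety-net lies entirely inside $I_{x+s-1}$. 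Any work done on those small jobs earlier in the original schedule is simply discarded, and the freed machine time is absorbed by the same time-stretch.

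The main obstacle I expect is simultaneously accommodating the re-processing caused by the rounding, the discarded work caused by the postponement, and the safety-net reservations of Lemma~\ref{lem:safety-net} in one coherent schedule without exceeding the budget, since all three effects compete for the same free space inside each interval. My plan is to perform a single overall time-stretch by a factor $1+c\eps$ for a sufficiently large constant $c$ depending on $s$ and $\Delta$, so that every interval gains enough spare room to host all three effects at once. This keeps the cumulative objective growth within $1+O(\eps)$, as required.
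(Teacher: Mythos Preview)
Your consolidation argument for bounding the number of partially processed large jobs per type, and your rounding approach for the $p_j\N$ granularity, are both on target and close to the paper's reasoning. The paper phrases the rounding slightly differently---it bounds the total remaining processing time of unfinished large jobs at each $R_x$ by $c\cdot R_x$ and then uses the time-stretch space to \emph{add} up to $p_j\N$ units to every unfinished large job---but your ``round down and re-do'' variant is essentially equivalent. One imprecision worth fixing: you bound the \emph{total} extra workload as an $\eps$-fraction of the total processing volume, whereas time-stretching hands you free space proportional to the length of each \emph{individual} interval. You should argue per interval instead: in $I_x$ there are only constantly many unfinished large jobs, each needing at most $p_j\N$ extra, and compare this against $\eps|I_x|$. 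This goes through once $\N$ is chosen small enough.

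The genuine gap is your treatment of the small jobs. Pushing every small job released at $R_x$ all the way to its safety net in $I_{x+s-1}$ multiplies its completion time by a factor of up to $(1+\eps)^{s}$. The constant $s$ from Lemma~\ref{lem:safety-net} is \emph{not} independent of $\eps$; from its defining equation one gets $(1+\eps)^{s}$ of order $\mathrm{poly}(1/\eps)$, so this is not a $1+O(\eps)$ loss. The safety net is a fallback for jobs that would otherwise linger, not a default destination for jobs that might well finish in their release interval. The paper handles this bullet via the earlier Smith's-rule lemma: schedule small jobs in order of $w_j/p_j$, observe that (after packing tiny jobs) each small job has length at least a constant fraction of its interval, so only constantly many small jobs can straddle any interval boundary, and then a \emph{single} time-stretch creates enough room to finish each straddler inside the interval in which it started. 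That keeps every small job non-preempted and confined to its starting interval at only $1+\eps$ loss.
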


\paragraph{\label{sub:periods-of-schedule}Irrelevant history.}

The schedule that an online algorithm computes for an interval may
depend on the set of currently unfinished jobs and possibly the entire
schedule used so far.  In the remainder of this section we show why we
can assume that an online algorithm only takes a finite amount of
history into account in its decision making, namely, the jobs with
relatively large weight released in the last constantly many
intervals.

Our strategy is to partition the time horizon into \emph{periods}.
For each integer~$k\ge0$, we define a period~$Q_{k}$ which consists of
the~$s$ consecutive intervals~$I_{k\cdot s},...,I_{(k+1)\cdot s-1}$.
For ease of notation, we will treat a period~$Q$ as the set of jobs
released in that period. For a set of jobs~$J$ we denote
by~$rw(J):=\sum_{j\in J}r_{j}w_{j}$ their \emph{release weight}. Note
that $rw(J)$ forms a lower bound on the quantity that these jobs must
contribute to the objective in \emph{any} schedule%
. Due to Lemma~\ref{lem:safety-net}, we also obtain an upper bound
of~$\ee^{s}\cdot rw(J)$ for the latter quantity.

\begin{lem}
\label{lem:dominated-periods}%
Let~$Q_{k},...,Q_{k+p}$ be consecutive periods such that period~$Q_{k+p}$ is the
first of this series with
\(rw(Q_{k+p})\le\frac{\epsilon}{\ee^{s}}\cdot\sum_{i=0}^{p-1}rw(Q_{k+i}).\)
Then at~$\e$ loss we can move all jobs in~$Q_{k+p}$ to their safety
nets.
\end{lem}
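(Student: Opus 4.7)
The plan is to take any feasible schedule $\sigma$ satisfying the structural restrictions of Lemmas \ref{lem:first-rounding}--\ref{lem:large-job-atoms}, and construct a modified schedule $\sigma'$ that agrees with $\sigma$ on every job outside $Q_{k+p}$ and processes every job of $Q_{k+p}$ exclusively inside its safety net. Applying this transformation to an optimum schedule shows that the best schedule which moves all of $Q_{k+p}$ into its safety nets loses at most a factor $\e$, which is exactly what the lemma asserts.

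The first step is feasibility. For every interval $I_x$ with $x \in \{(k+p)s,\dots,(k+p+1)s-1\}$, Lemma \ref{lem:safety-net} provides a dedicated subinterval of $I_{x+s-1}$ which is large enough to accommodate all jobs released at $R_x$ and in which only these jobs are processed. Since the safety nets of different release intervals are disjoint and reserved exclusively for their jobs, placing the whole set $Q_{k+p}$ into these safety nets creates no conflicts with the remaining jobs, which keep their slots from $\sigma$. Removing the $Q_{k+p}$ jobs from their original slots in $\sigma$ only frees capacity, so no other job is delayed.

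The second step is the cost estimate. By the guarantee of Lemma \ref{lem:safety-net}, each job $j \in Q_{k+p}$ in the new schedule completes by $R_{x+s} = \ee^{s}\cdot r_j$, where $r_j = R_x$. Hence the contribution of the moved jobs in $\sigma'$ is at most
\[
\sum_{j \in Q_{k+p}} w_j\, C_j \;\le\; \ee^{s}\cdot rw(Q_{k+p}).
\]
On the other hand, since every job contributes at least $r_j w_j$ to any feasible schedule, the cost of the jobs in the earlier periods $Q_k,\dots,Q_{k+p-1}$ in $\sigma$ is at least $\sum_{i=0}^{p-1} rw(Q_{k+i})$, which is a lower bound on the total cost of $\sigma$.

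Combining the two bounds with the hypothesis $rw(Q_{k+p}) \le \frac{\eps}{\ee^{s}}\sum_{i=0}^{p-1} rw(Q_{k+i})$, the extra cost induced by the move is at most $\ee^{s}\cdot rw(Q_{k+p}) \le \eps\sum_{i=0}^{p-1} rw(Q_{k+i}) \le \eps\cdot\mathrm{cost}(\sigma)$, and therefore $\mathrm{cost}(\sigma') \le \e\cdot\mathrm{cost}(\sigma)$. The proof is essentially a bookkeeping argument; the only technical point to check is that Lemma \ref{lem:safety-net} yields enough independent, reserved capacity to simultaneously absorb every release interval of $Q_{k+p}$, which is automatic because the $s$ safety nets involved live in $s$ distinct intervals and belong to distinct release dates.
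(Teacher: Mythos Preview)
Your proof is correct and follows essentially the same route as the paper: bound the new contribution of the moved jobs by $\ee^{s}\cdot rw(Q_{k+p})$, invoke the hypothesis to get $\eps\cdot\sum_{i=0}^{p-1}rw(Q_{k+i})$, and compare this to the release-weight lower bound on the cost of the earlier periods. Your additional discussion of feasibility (that the safety nets are disjoint reserved subintervals) is a nice explicit point that the paper leaves implicit.
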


The above observation defines a natural partition of a given instance~$\I$ into
\emph{parts} by the insignificant periods.
Formally, let $a_{1},...,a_{\ell}$ be all ordered indices such that~$Q_{a_{i}}$ is insignificant compared to the preceding periods according to Lemma~\ref{lem:dominated-periods}~($a_{0}:=0$). Let $a_{\ell+1}$ be the index of the last period.
For each~$i\in\{0,...,\ell\}$ we define a part~$\Part{i}$
consisting of all periods~$Q_{a_{i}+1},...,Q_{a_{i+1}}$.
Again, identify with~$\Part{i}$ all jobs released in this part.
We treat now each part~$\Part{i}$ as a separate instance that we present
to a given online algorithm. For the final output, we concatenate
the computed schedules for the different parts.
It then suffices to bound~$\alg(\Part{i})/\opt(\Part{i})$
for each part~$\Part{i}$ since~$\alg(\I)/\opt(\I)\le\max_{i}\{\alg(\Part{i})/\opt(\Part{i})\}
\cdot(1+O(\epsilon)).$

\begin{lem}
  \label{lem:split-the-instance-pmtn}%
  At~$1+O(\epsilon)$ loss we can restrict to instances which consist of
  only one part.
\end{lem}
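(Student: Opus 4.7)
The plan is to formalize the observation made right before the lemma statement: after applying Lemma~\ref{lem:dominated-periods} to the insignificant periods, the different parts $\Part{0},\ldots,\Part{\ell}$ decouple, so that one may run a single-part online algorithm on each part independently and concatenate the resulting schedules, losing only an extra $1+O(\eps)$ factor in the competitive ratio.

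First I would apply Lemma~\ref{lem:dominated-periods} to every insignificant period $Q_{a_1},\ldots,Q_{a_\ell}$, shifting all their jobs into their safety nets. Although this is done once per insignificant period, the defining condition $rw(Q_{a_i})\le\frac{\eps}{\ee^{s}}\sum_{k<a_i}rw(Q_k)$ lets me charge the additive cost of each shift to the periods immediately preceding within the same part, so the accumulated loss stays bounded by $1+O(\eps)$ per part and therefore $1+O(\eps)$ in total. After this step, each part $\Part{i}$ can be viewed as a standalone scheduling instance living in its own (essentially disjoint) block of intervals.

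Next I would establish the two-sided comparison between $\I$ and its parts. For the lower bound, restricting any optimal schedule for $\I$ to the jobs of a fixed part $\Part{i}$ yields a feasible schedule for the standalone instance $\Part{i}$, so $\opt(\I)\ge\sum_{i=0}^{\ell}\opt(\Part{i})$. For the upper bound, let $\alg'$ denote the algorithm that applies $\alg$ separately to each $\Part{i}$ in its own block of intervals and concatenates the results. Because consecutive parts are separated by the buffer of a trailing insignificant period---now occupied only by a few safety-net jobs of tiny release weight---this concatenation is a feasible schedule for $\I$ and satisfies $\alg'(\I)\le(1+O(\eps))\sum_{i=0}^{\ell}\alg(\Part{i})$. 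Combining via the standard inequality $(\sum_i A_i)/(\sum_i B_i)\le\max_i A_i/B_i$ then gives
\[
\frac{\alg'(\I)}{\opt(\I)} \;\le\; (1+O(\eps))\cdot\max_{0\le i\le\ell}\frac{\alg(\Part{i})}{\opt(\Part{i})},
\]
so a $\rho$-competitive algorithm for single-part instances yields a $(1+O(\eps))\rho$-competitive algorithm for $\I$.

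The hard part will be verifying that the concatenation is genuinely feasible: safety-net jobs of one part occupy intervals that formally belong to the next part, so naively scheduling $\Part{i+1}$ from scratch could create machine conflicts at its first intervals. Lemma~\ref{lem:safety-net} is what saves us here, because the overlap is confined to the constant number $s$ of intervals immediately following each insignificant period, and the overlapping jobs have release weight bounded by the insignificance threshold. Any resulting conflict can therefore be resolved by a one-time time-stretch within the first $s$ intervals of each part, at $1+O(\eps)$ additional cost, preserving all the inequalities above.
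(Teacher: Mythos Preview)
Your proposal is correct and follows essentially the same approach as the paper: treat each part as a separate instance, run the single-part algorithm on each, and bound the overall ratio via the mediant inequality $\frac{\sum_i A_i}{\sum_i B_i}\le\max_i A_i/B_i$. The paper's proof is terser---it obtains the comparison with $\opt$ by passing through an auxiliary $(1+O(\eps))$-approximate offline schedule $\opt'$ that already respects the part boundaries (so that $\opt'(\I)=\sum_i\opt'(\Part{i})\ge\sum_i\opt(\Part{i})$), rather than your direct restriction argument, and it leaves the concatenation-feasibility issue you carefully address implicit.
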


Each but the last period of one part fulfills the opposite condition of the one
from Lemma~\ref{lem:dominated-periods}. This implies exponential growth for the
series of partial sums of release weights (albeit with a small growth factor).
From this observation, we get:

\begin{lem}
\label{lem:K-periods}
There is a constant $K$ such that the following holds:
Let~$Q_{1},Q_{2},...,Q_{p}$ be consecutive periods such
that~$rw(Q_{i+1})>\frac{\epsilon}{\ee^{s}}\cdot\sum_{\ell=1}^{i}rw(Q_{\ell})$
for all~$i$. Then in any schedule
in which each job~$j$ finishes 
no later than by time~$r_{j}\cdot(1+\epsilon)^{s}$
(e.g., using the safety net)
it holds
that~$\sum_{i=1}^{p-K-1}\sum_{j\in
Q_{i}}w_{j}C_{j}\le\epsilon\cdot\sum_{i=p-K}^{p}\sum_{j\in Q_{i}}w_{j}C_{j}.$
\end{lem}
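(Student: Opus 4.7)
The plan is to reduce the desired inequality to a purely arithmetic statement about the partial sums
\[T_i \;:=\; \sum_{\ell=1}^{i} rw(Q_\ell),\]
and then to choose $K$ large enough that geometric growth of the sequence $(T_i)$ forces the cost of the last $K+1$ periods to dominate the cost of the earlier ones. The heavy lifting has essentially been done by the completion-time hypothesis and by the lower bound on $rw(Q_{i+1})$; the rest is a short arithmetic computation.

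First, I would unpack the release-weight hypothesis. The assumption $rw(Q_{i+1}) > \frac{\eps}{\ee^{s}} T_i$ is equivalent to $T_{i+1} > \gamma \cdot T_i$ with $\gamma := 1 + \eps/\ee^{s} > 1$. Iterating gives $T_p > \gamma^{K+1} \, T_{p-K-1}$ for every $K \ge 0$, so the ratio $T_p / T_{p-K-1}$ can be made arbitrarily large by choosing $K$ large.

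Next, I would use the schedule hypothesis $C_j \le r_j \cdot \ee^{s}$ together with the trivial bound $C_j \ge r_j$ to sandwich every period's contribution:
\[rw(Q_i) \;\le\; \sum_{j \in Q_i} w_j C_j \;\le\; \ee^{s} \cdot rw(Q_i).\]
Applying the upper bound for $i \le p-K-1$ and the lower bound for $i \ge p-K$, the claim is implied by
\[\ee^{s} \cdot T_{p-K-1} \;\le\; \eps \cdot (T_p - T_{p-K-1}),\]
which rearranges to $T_p \ge T_{p-K-1} \cdot (1 + \ee^{s}/\eps)$. Combined with the geometric growth, it suffices to have $\gamma^{K+1} \ge 1 + \ee^{s}/\eps$, so the choice $K := \lceil \log(1 + \ee^{s}/\eps) / \log \gamma \rceil$ works. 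Since $s$ (from Lemma~\ref{lem:safety-net}) and $\eps$ are fixed, $K$ depends only on $\eps$ and $m$, as required.

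The main obstacle is conceptual rather than technical: one has to notice that the safety-net hypothesis pulls double duty, providing the upper bound on the cost of an old period in terms of $rw(Q_i)$, while the trivial inequality $C_j \ge r_j$ handles the lower bound on the cost of a new period. Once this sandwich is in place, the statement collapses to a claim about the scalar sequence $(T_i)$, and the hypothesis on $(rw(Q_i))$ was engineered in Lemma~\ref{lem:dominated-periods} precisely so that $(T_i)$ grows geometrically at a rate bounded away from $1$ by a quantity depending only on $\eps$ and~$s$, which is exactly what makes a constant value of $K$ suffice.
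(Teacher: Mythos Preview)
Your proof is correct and follows essentially the same approach as the paper: both reduce the claim to showing that $(1+\eps)^{s}\,T_{p-K-1}\le \eps\,(T_p-T_{p-K-1})$ via the sandwich $rw(Q_i)\le\sum_{j\in Q_i}w_jC_j\le(1+\eps)^{s}rw(Q_i)$, and both exploit that the hypothesis $rw(Q_{i+1})>\delta'\,T_i$ forces geometric growth $T_{i+1}>(1+\delta')\,T_i$ of the partial sums. Your presentation via the growth factor $\gamma=1+\delta'$ is in fact slightly cleaner than the paper's equivalent formulation using the shrinkage factor $1-\delta'/(1+\delta')=1/\gamma$, and your explicit choice $K=\lceil \log(1+(1+\eps)^{s}/\eps)/\log\gamma\rceil$ is more informative than the paper's ``choosing $K$ sufficiently large''.
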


The objective value of one part is therefore dominated by the contribution
of the last $K$ periods of this part.
We will need this later to show that
at $\e$ loss we can assume that an online algorithm bases its decisions
only on a constant amount of information. Denote the corresponding
number of important intervals by~$\Gamma := Ks$.

This enables us to partition
the jobs into relevant and irrelevant jobs. Intuitively, a job is irrelevant if
it is released very early (cf.\ Lemma~\ref{lem:K-periods}) or its weight is very
small in comparison to some other job.
The subsequent lemma states that the irrelevant jobs can almost be
ignored for the objective value of a schedule.

\begin{defn}
\label{def:irrelevant_jobs}
A job~$j$ is \emph{irrelevant at time~$R_{x}$} if it was irrelevant
at time~$R_{x-1}$, or~$r_{j}<R_{x-\Gamma}$, or
it is \emph{dominated at time~$R_{x}$}. This is the case if
there is a job~$j'$,
either released at time~$R_{x}$ or already relevant at time~$R_{x-1}$ with
release date at least~$R_{x-\Gamma}$, such that%
~$w_{j}<\frac{\epsilon}{\Delta\cdot\Gamma\cdot(1+\epsilon)^{\Gamma+s}}w_{j'}$.
Otherwise, a job released until~$R_x$
is \emph{relevant at time~$R_{x}$.}
Denote the respective subsets of some job set~$J$ by~$\Rel_{x}(J)$
and~$\Ir_{x}(J)$.
\end{defn}

\begin{lem}\label{lem:irrelevant-jobs}
Consider a schedule of one part in which no job~$j$ finishes later than at
time~$r_{j}\cdot(1+\epsilon)^{s}$ (e.g., using the safety net) and let~$x$ be an interval index
in this part. Then $\sum_{j\in\Ir_{x}(J)}w_{j}C_{j}\le
O(\epsilon)\cdot\sum_{j\in\Rel_{x}(J)}w_{j}C_{j}$.
\end{lem}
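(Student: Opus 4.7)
My plan is to split $\Ir_x(J)$ into two groups by the reason each job is irrelevant. Let
\[
O := \{j \in \Ir_x(J) : r_j < R_{x-\Gamma}\} \quad \text{and} \quad D := \Ir_x(J) \setminus O,
\]
so that every $j \in D$ has $r_j \in [R_{x-\Gamma}, R_x]$ and became irrelevant through weight-domination rather than age. I will bound $\sum_{j \in O} w_j C_j$ and $\sum_{j \in D} w_j C_j$ separately by $O(\eps)\sum_{j \in \Rel_x(J)} w_j C_j$.

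For the old jobs $O$ I would invoke Lemma~\ref{lem:K-periods}. Since we operate within a single part (Lemma~\ref{lem:split-the-instance-pmtn}), every prefix of periods in the part satisfies the relative-growth condition required by that lemma. Applied to the periods ending with the one that contains $R_x$, and noting $\Gamma = Ks$, it yields
\[
\sum_{j \in O} w_j C_j \;\leq\; \eps \sum_{j : r_j \geq R_{x-\Gamma}} w_j C_j.
\]
Decomposing the right-hand side into contributions from $\Rel_x(J)$ and from $D$ reduces the task to bounding $\sum_{j \in D} w_j C_j$, after which the $D$-contribution on the right can be absorbed.

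For each $j \in D$ I would produce a \emph{relevant witness} $j^* \in \Rel_x(J)$ with $r_{j^*} \geq R_{x-\Gamma}$ and $w_j \leq c' w_{j^*}$, where $c' = O\bigl(\eps/(\Delta\,\Gamma\,(1+\eps)^{\Gamma+s})\bigr)$. By the definition of irrelevance there exists a first time $R_y$ with $r_j \leq R_y \leq R_x$ at which $j$ became dominated, with some witness $j^{(1)}$ satisfying $w_{j^{(1)}} > w_j/c$ and $r_{j^{(1)}} \geq R_{y-\Gamma}$. If $j^{(1)} \in \Rel_x(J)$ and sits in the $R_x$-window we stop; otherwise we iterate on $j^{(1)}$. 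The weights strictly increase by factor $1/c$ along the resulting chain, and since the interval window contains only constantly many jobs with weights drawn from a bounded geometric range (Lemmas~\ref{lem:first-rounding} and~\ref{lem:constant-num-proc-times}), the chain terminates at some $j^* \in \Rel_x(J)$ with $r_{j^*} \geq R_{x-\Gamma}$, with only a constant-factor deterioration in the weight bound that can be folded into $c'$. Combining $w_j \leq c' w_{j^*}$ with $C_j \leq R_x (1+\eps)^s$ (safety-net assumption) and $C_{j^*} \geq r_{j^*} \geq R_{x-\Gamma}$ gives
\[
w_j C_j \;\leq\; c'(1+\eps)^{\Gamma+s} w_{j^*} C_{j^*} \;=\; O\!\left(\tfrac{\eps}{\Delta\,\Gamma}\right) w_{j^*} C_{j^*}.
\]
Since each interval releases at most $\Delta$ jobs (Lemma~\ref{lem:constant-num-jobs-released}), $|D| \leq \Delta\,\Gamma$, so summing yields $\sum_{j \in D} w_j C_j \leq O(\eps) \sum_{j^* \in \Rel_x(J)} w_{j^*} C_{j^*}$, and combining with the bound on $\sum_{j \in O} w_j C_j$ concludes the proof.

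The main obstacle will be the witness-chain argument: the definition of ``dominated'' only requires the witness to have been relevant at $R_{y-1}$ or released at $R_y$, so the immediate dominator of a job in $D$ may itself be irrelevant at $R_x$, and its own witness may have release date $\geq R_{y-\Gamma} < R_{x-\Gamma}$ and thus fall out of the current window. A cleaner execution might be to directly pick the maximum-weight job released in $[R_{x-\Gamma}, R_x]$ and argue by contradiction with maximality that it is relevant at $R_x$, thereby producing a single global witness for all of $D$ and avoiding any chain-chasing.
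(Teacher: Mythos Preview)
Your decomposition $\Ir_x(J)=O\cup D$ and the use of Lemma~\ref{lem:K-periods} for $O$ are exactly what the paper does (it writes $\Ir_x^{\mathrm{old}}$ and $\Ir_x^{\mathrm{new}}$), and the counting $|D|\le\Delta\Gamma$ is used identically. The difference lies only in how you handle the witnesses for~$D$.

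You insist on producing, for each $j\in D$, a witness in $\Rel_x(J)$, and you correctly identify that neither the chain argument nor the maximum-weight shortcut delivers this cleanly: the chain can terminate at a job in $O$, and the maximum-weight job in the window may itself be dominated by a heavier job with release date in $[R_{x-2\Gamma},R_{x-\Gamma})$, so maximality within the window yields no contradiction. The paper sidesteps both difficulties by simply \emph{not} requiring the witness to be relevant. It takes the direct dominator $j'$, which satisfies $w_j<c\,w_{j'}$ and $r_j\le r_{j'}(1+\eps)^{\Gamma}$, wherever it happens to lie, and obtains
\[
(1+\eps)^{s}\,rw(D)\;\le\;\eps\bigl(rw(O)+rw(\Rel_x(J))\bigr).
\]
Paired with the bound coming from Lemma~\ref{lem:K-periods},
\[
(1+\eps)^{s}\,rw(O)\;\le\;\eps\bigl(rw(D)+rw(\Rel_x(J))\bigr),
\]
one adds the two inequalities, moves the term $\eps\,rw(\Ir_x(J))$ to the left, and gets $rw(\Ir_x(J))\le O(\eps)\,rw(\Rel_x(J))$ directly. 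The safety-net assumption then converts release weights into weighted completion times on both sides. So the obstacle you flag disappears once you relax the demand that the witness be relevant and instead treat the two bounds as a small linear system; no chains and no global maxima are needed.
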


The above lemma implies that at $1+O(\epsilon)$ loss we can restrict to
online algorithms which schedule the remaining part of a job in its safety net,
once it has become irrelevant.

\section{Abstraction of Online Algorithms}
\label{sec:Abstraction-online-algos}
\vspace{-1.5ex}

In this section we show how to construct a competitive-ratio approximation
scheme based on the observations of Section~\ref{sec:simplifications}.
To do so, we restrict ourselves to such simplified instances and schedules.
The key idea is to characterize the behavior of an online algorithm by
a map: For each interval, the map gets as input the schedule computed
so far and all information about the currently unfinished jobs.  Based
on this information, the map outputs how to schedule the available
jobs within this interval.

More precisely, we define the input by a \emph{configuration} and the
output by an \emph{interval-schedule}.
\begin{defn}
  An \emph{interval-schedule}~$S$ for an interval~$I_{x}$ is defined
  by
  \begin{compactitem}

  \item the index~$x$ of the interval,
  \item a set of jobs~$J(S)$ available for processing in~$I_x$ together with the
    properties~$r_{j}, p_{j}, w_{j}$ of each job~$j\in J(S)$ and its already
    finished part~$f_{j}<p_{j}$ up to~$R_{x}$,
  \item for each job~$j\in J(S)$ the information whether~$j$ is relevant
    at time~$R_{x}$, and
  \item for each job $j\in J(S)$ and each machine $i$ a value $q_{ij}$ specifying
    for how long $j$ is processed by~$S$ on machine $i$ during~$I_{x}$.
  
    \end{compactitem}
      An interval-schedule is called \emph{feasible} if there is a feasible
  schedule in which the jobs of~$J(S)$ are processed corresponding
  to the~$q_{j}$ values within the interval~$I_{x}$. Denote the
  set of feasible interval-schedules as~$\IS$.
\end{defn}
\begin{defn}
  A \emph{configuration}~$C$ for an interval~$I_{x}$ consists of
  \begin{compactitem}
  \item the index~$x$ of the interval,
  \item a set of jobs~$J(C)$ released up to time~$R_{x}$ together with the
    properties~$r_{j}, p_{j}, w_{j}, f_{j}$ of each job~$j\in J(C)$,
  \item an interval-schedule for each interval~$I_{x'}$ with~$x'<x$.
  \end{compactitem}
  The set of all configurations is denoted by~$\C$. An \emph{end-configuration} is a configuration~$C$ for an interval~$I_{x}$ such that at time~$R_{x}$,
  and not earlier, no jobs are left unprocessed.
\end{defn}

We say that an interval-schedule~$S$ is \emph{feasible} \emph{for a
configuration~$C$} if
the set of jobs in~$J(C)$ which are unfinished at time~$R_{x}$
matches the set~$J(S)$ with respect to release dates, total and remaining
processing time, weight and relevance of the jobs.

Instead of online algorithms we work from now on with~\emph{algorithm
  maps}, which are defined as functions~$f:\C\rightarrow\IS$. An
algorithm map determines a schedule~$f(\I)$ for a given scheduling
instance~$\I$ by iteratively applying~$f$ to the corresponding
configurations.  W.l.o.g.\ we consider only algorithm maps~$f$ such
that~$f(C)$ is feasible for each configuration~$C$ and~$f(\I)$ is
feasible for each instance~$I$.
Like for online algorithms, we define the competitive ratio~$\rho_{f}$
of an algorithm map~$f$ by~$\rho_{f}:=\max_{\I}f(\I)/\opt(\I)$. Due to
the following observation, algorithm maps are a natural generalization
of online algorithms.
\begin{prop}
  For each online algorithm~$\alg$ there is an algorithm
  map~$f_{\alg}$ such that when~$\alg$ is in configuration~$C\in\C$ at
  the beginning of an interval~$I_{x}$, algorithm~$\alg$ schedules the
  jobs according to~$f_{\alg}(C)$.
\end{prop}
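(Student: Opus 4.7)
The plan is to define $f_{\alg}$ by simulation: for every configuration $C\in\C$ for an interval~$I_x$, run $\alg$ on the partial history recorded in $C$ and declare $f_{\alg}(C)$ to be the interval-schedule that $\alg$ then produces for~$I_x$. Concretely, since $C$ records every job released before time~$R_x$ together with its release date, processing time, weight, and already-processed amount, as well as the interval-schedules $S_0,\dots,S_{x-1}$ used on the preceding intervals, we can reconstruct a legal input prefix together with a legal execution trace of $\alg$ on that prefix up to time $R_x$. The decision $\alg$ takes for interval~$I_x$ in this simulation is the object we define $f_{\alg}(C)$ to be.

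There are two small issues to address. The first is \emph{well-definedness}: different instances $\I,\I'$ could in principle both put $\alg$ into the same configuration $C$ by time $R_x$, and we need the prescribed interval-schedule to be the same in both simulations. This is exactly the defining property of a deterministic online algorithm: its decision at time~$R_x$ depends only on the jobs already released (with their attributes) and on the scheduling choices made so far, all of which are encoded in~$C$. Hence the simulation output depends only on $C$ and not on the particular instance used to reach $C$. The second issue is that not every $C\in\C$ is actually reachable by $\alg$ on some input---for instance, $C$ might prescribe past interval-schedules that $\alg$ would never have chosen. On such unreachable configurations we simply pick any feasible interval-schedule for $C$ (which exists by the definition of a configuration), since the value of $f_{\alg}$ there is irrelevant to the correctness statement.

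It remains to check that $f_{\alg}$ is an algorithm map in the sense of the paper and that it satisfies the claimed coupling with~$\alg$. Feasibility of $f_{\alg}(C)$ for each $C$ holds by construction, because in the reachable case we inherit feasibility from $\alg$ and in the unreachable case we chose a feasible interval-schedule explicitly. For the coupling, consider any instance~$\I$ and any interval~$I_x$, and let $C$ be the configuration that results from running $\alg$ on~$\I$ up to time~$R_x$. By the definition above, $f_{\alg}(C)$ is precisely the interval-schedule $\alg$ produces for $I_x$ on~$\I$. Iterating this argument over $x=0,1,2,\dots$ shows that~$f_{\alg}(\I)$ coincides with the schedule $\alg(\I)$ on every instance, which is the statement of the proposition.

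The only genuine subtlety---what I expect to be the main conceptual point rather than a technical obstacle---is the well-definedness argument: one must verify that the configuration really encodes \emph{all} information an online algorithm could base a decision on. In our setting this is immediate because online algorithms by definition have access only to already-released jobs and to their own past scheduling decisions, both of which are explicit components of~$C$; nothing else (e.g.\ future arrivals, internal randomness, or wall-clock time beyond~$R_x$) may legitimately influence~$\alg$'s choice at the start of~$I_x$.
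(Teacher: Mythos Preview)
Your argument is correct, and in fact the paper itself provides no proof of this proposition at all: it is stated as an observation immediately after the definition of algorithm maps, with the surrounding text merely remarking that algorithm maps are a natural generalization of online algorithms. Your simulation-based construction, together with the handling of unreachable configurations by an arbitrary feasible choice, is exactly the intended reading and makes explicit what the paper leaves implicit.
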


Recall, that we restrict our attention to algorithm maps describing
online algorithms which obey the simplifications introduced in
Section~\ref{sec:simplifications}. The essence of such online
algorithms are the decisions for the relevant jobs. To this end, we
define equivalence classes for configurations and for interval-schedules.
Intuitively, two interval-schedules (configurations) are equivalent if we can
obtain one from the other by scalar multiplication with the same
value, while ignoring the irrelevant jobs.
\begin{defn}\label{def:equivalence-interval-schedule}
  Let~$S,S'$ be two feasible interval-schedules for two
  intervals~$I_{x},I_{x'}$.  Denote by~$J_{\Rel}(S)\subseteq J(S)$
  and~$J_{\Rel}(S')\subseteq J(S')$ the relevant jobs in~$J(S)$
  and~$J(S')$. Let further~$\sigma:J_{\Rel}(S)\rightarrow
  J_{\Rel}(S')$ be a bijection and $y$ an integer. The
  interval-schedules~$S,S'$ are~$(\sigma,y)$-\emph{equivalent}
  if~$r_{\sigma(j)}=r_{j}(\e)^{x'-x}, p_{\sigma(j)}=p_{j}(\e)^{x'-x},
  f_{\sigma(j)}=f_{j}(\e)^{x'-x}, q_{\sigma(j)}=q_{j}(\e)^{x'-x}$
  and~$w_{\sigma(j)}=w_{j}\ee^{y}$
  for all~$j\in J_{\Rel}(S)$.  The interval-schedules $S,S'$ are
  \emph{equivalent} (denoted by $S\sim S'$) if a map $\sigma$ and an
  integer $y$ exist such that they are $(\sigma,y)$-equivalent.
\end{defn}

\begin{defn}
  Let~$C,C'$ be two configurations for two intervals~$I_{x},I_{x'}$.
  Denote by~$J_{\Rel}(C),J_{\Rel}(C')$ the jobs which are relevant at
  times~$R_{x},R_{x'}$ in~$C,C'$, respectively. Configurations~$C,C'$
  are \emph{equivalent} (denoted by $C\sim C'$) if there is a
  bijection~$\sigma:J_{\Rel}(C)\rightarrow J_{\Rel}(C')$ and an
  integer~$y$ such that
  \begin{compactitem}
  \item $r{}_{\sigma(j)}=r_{j}(1+\epsilon)^{ x'-x },
    p{}_{\sigma(j)}=p_{j}(1+\epsilon)^{x'-x},
    f{}_{\sigma(j)}=f_{j}(1+\epsilon)^{x'-x}$
    and~$w_{\sigma(j)}=w_{j}\ee^{y}$ for all~$j\in J_{\Rel}(C)$, and
  \item the interval-schedules of~$I_{x-k}$ and~$I_{x'-k}$
    are~$(\sigma,y)$-equivalent when restricted to the jobs
    in~$J_{\Rel}(C)$ and~$J_{\Rel}(C')$ for each~$k\in\mathbb{N}$.
  \end{compactitem}
\end{defn}

A configuration~$C$ is \emph{realistic} for an algorithm map~$f$ if
there is an instance~$\I$ such that if~$f$ processes~$\I$ then at
time~$R_x$ it is in configuration~$C$.  The following lemma shows that
we can restrict the set of algorithm maps under consideration to those
which treat equivalent configurations equivalently. We call algorithm
maps obeying this condition in addition to the restrictions of
Section~\ref{sec:simplifications} \emph{simplified algorithm maps}.
\begin{lem}
  \label{lem:equal-confs} At~$1+O(\epsilon)$ loss we can restrict to
  algorithm maps~$f$ such that $f(C)\sim f(C')$ for any two equivalent
  configurations~$C,C'$.
\end{lem}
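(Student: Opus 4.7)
The plan is to take an arbitrary simplified algorithm map $f$ with competitive ratio $\rho_f$, and construct a map $f'$ that is constant on each equivalence class of configurations and whose competitive ratio is at most $(1+O(\epsilon))\rho_f$. I would first choose a canonical representative $\hat{C}$ in each equivalence class. For any other configuration $C$ in the class of $\hat{C}$ (witnessed by a bijection $\sigma$ on relevant jobs and an integer $y$, with interval shift $x-\hat{x}$), I would define $f'(C)$ to be the interval-schedule obtained from $f(\hat{C})$ by rescaling the processing-time allocations of each relevant job by $(1+\epsilon)^{x-\hat{x}}$ and mapping machine assignments through $\sigma^{-1}$. By construction $f'(C) \sim f(\hat{C})$, so $f'(C) \sim f'(C')$ whenever $C \sim C'$.

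Next I would specify what $f'$ does with irrelevant jobs, which the equivalence leaves free. By the remark following Lemma~\ref{lem:irrelevant-jobs}, at $1+O(\epsilon)$ loss we may insist that every job is pushed to its safety net the moment it becomes irrelevant. I would have $f'$ do exactly this: irrelevant jobs are processed only in the safety nets provided by Lemma~\ref{lem:safety-net}. These safety nets are sized to accommodate all jobs released in their associated interval and live in later intervals, so they do not conflict with the time $f'(C)$ devotes to relevant jobs, and feasibility is preserved.

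To bound the competitive ratio I would fix an arbitrary instance $\I$; by Lemma~\ref{lem:split-the-instance-pmtn} I may assume it forms a single part. Running $f'$ on $\I$ produces a sequence of configurations $C^{(1)}, C^{(2)}, \ldots$, each with canonical representative $\hat{C}^{(k)}$. The rescalings relating $C^{(k)}$ to $\hat{C}^{(k)}$ act identically on the relevant-job part of the objective: since intervals are treated as atomic, the interval-atomic completion times scale by the time-shift factor and the weights by $(1+\epsilon)^y$, and the same scalings apply to the corresponding part of $\opt$. Hence the contribution of relevant jobs to $f'(\I)/\opt(\I)$ can be matched against $\rho_f$ on the canonical trajectory. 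Lemma~\ref{lem:irrelevant-jobs} then absorbs the irrelevant contribution into an additional $1+O(\epsilon)$ factor, yielding the claimed bound.

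The main obstacle is that the successive rescalings applied by $f'$ along a trajectory need not compose into a single global scaling of $\I$, so one cannot simply match $f'(\I)$ against $f$ on a single rescaled copy of the instance. I would handle this interval by interval: at each step the equivalence $f'(C^{(k)}) \sim f(\hat{C}^{(k)})$ controls how much the relevant jobs completed in interval $I_{x_k}$ contribute to the final objective, relative to what $f$ would pay starting from $\hat{C}^{(k)}$; summing these per-step discrepancies, and using that release weights in consecutive relevant periods grow geometrically so that earlier contributions are dominated by later ones (Lemma~\ref{lem:K-periods}), together with the $O(\epsilon)$-slack from Lemma~\ref{lem:irrelevant-jobs}, gives the aggregate $1+O(\epsilon)$-factor bound on $\rho_{f'}/\rho_f$.
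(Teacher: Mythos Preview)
Your construction of $f'$ in the first three paragraphs matches the paper's, but the competitive-ratio analysis has a genuine gap, and the ``main obstacle'' you identify is actually a symptom of a missing idea rather than a difficulty to be worked around.

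The paper's proof does not attempt any interval-by-interval accounting. Instead, it chooses the representative of each equivalence class to be \emph{realistic for $f$}, i.e., a configuration that $f$ actually reaches on some instance. With this choice, one shows by induction that every configuration $\bar f$ (your $f'$) encounters is equivalent to a configuration that is realistic for $f$. In particular, the \emph{end}-configuration $\bar C$ of $\bar f$ on $\bar\I$ is equivalent to a realistic end-configuration $C$ for $f$, witnessed by a single instance $\I$. Because the equivalence relation on configurations requires the entire history of interval-schedules (restricted to relevant jobs) to be $(\sigma,y)$-equivalent for \emph{one} pair $(\sigma,y)$, the equivalence $\bar C\sim C$ already gives a single global scaling relating the relevant part of $\bar f(\bar\I)$ to the relevant part of $f(\I)$, and likewise for the two optima. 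Lemma~\ref{lem:irrelevant-jobs} then finishes the argument in one stroke: $\bar f(\bar\I)/\opt(\bar\I)\le(1+O(\eps))\,f(\I)/\opt(\I)\le(1+O(\eps))\rho_f$.

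Your proposal omits the realism requirement on representatives, and without it there is no instance against which $f$'s competitive ratio can be invoked: the quantity ``what $f$ would pay starting from $\hat C^{(k)}$'' has no meaning if $\hat C^{(k)}$ is not reachable by $f$, and even if it is, a single interval-schedule $f(\hat C^{(k)})$ says nothing about $\rho_f$. The per-step summation you sketch, using Lemma~\ref{lem:K-periods} to damp earlier contributions, does not produce a comparison of the form $f'(\I)/\opt(\I)\le(1+O(\eps))\,f(\I')/\opt(\I')$ for any concrete $\I'$, which is what is needed. The fix is exactly the paper's: insist on realistic representatives, carry the induction, and then compare once at the end-configuration; the composition problem you flag then simply does not arise.
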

\begin{proof}
  Let $f$ be an algorithm map. For each equivalence class
  $\C'\subseteq\C$ of the set of configurations we pick a
  representative $C$ which is realistic for~$f$. For each configuration $C'\in\C'$, we define a new
  algorithm map $\bar{f}$ by setting $\bar{f}(C')$ to be the
  interval-schedule for $C'$ which is equivalent to $f(C)$.
  One can show by induction that $\bar{f}$ is always in a
  configuration such that an equivalent configuration is realistic for
  $f$. Hence, equivalence classes without realistic configurations
  for~$f$ are not relevant. We claim that
  $\rho_{\bar{f}}\le(\eo)\rho_{f}$.

  Consider an instance $\bar{\I}$. We show that there is an
  instance~$\I$ such
  that~$\bar{f}(\bar{\I})/\opt(\bar{\I})\le(\eo)f(\I)/\opt(\I)$ which
  implies the claim. Consider an interval~$I_{\bar{x}}$. Let~$\bar{C}$
  be the end-configuration obtained when~$\bar{f}$ is applied
  iteratively on~$\bar{\I}$. Let~$C$ be the representative of the
  equivalence class of~$\bar{C}$, which was chosen above and which is
  realistic for~$f$ ($C$ is also an end-configuration). Therefore,
  there is an instance~$\I$ such that~$C$ is reached at time~$R_{x}$
  when $f$ is applied on~$\I$. Hence, $I$ is the required instance
  since the relevant jobs dominate the objective value~(see
  Lemma~\ref{lem:irrelevant-jobs}) and~$C \sim \bar{C}$.
\end{proof}
\begin{lem}\label{lem:const-numb-maps}
  There are only constantly many simplified algorithm maps. Each
  simplified algorithm map can be described using finite
  information.
\end{lem}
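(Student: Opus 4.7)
The plan is to observe that any simplified algorithm map $f$ factors through the equivalence relation $\sim$ on configurations (by Lemma~\ref{lem:equal-confs}) and maps into equivalence classes of interval-schedules. Hence $f$ is determined by a function from the quotient $\C/{\sim}$ to a system of representatives of $\IS/{\sim}$. It therefore suffices to bound both quotients by a constant, from which the total number of simplified algorithm maps and the fact that each has finite description will follow immediately.

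First I would bound $|\C/{\sim}|$ by exhibiting a canonical representative per class. Using the scaling freedom of the equivalence, translate the current interval to $I_0$ and rescale weights so that the largest relevant weight equals $1$. After this normalization the class is determined entirely by the relevant jobs together with the restrictions of the preceding interval-schedules to them. By Definition~\ref{def:irrelevant_jobs} only jobs released in the last $\Gamma$ intervals can be relevant, and by Lemma~\ref{lem:constant-num-jobs-released} at most $\Delta$ jobs are released per interval, hence there are at most $\Delta\Gamma$ relevant jobs. For each one, the release date is a power of $1+\eps$ drawn from $\Gamma$ intervals around $I_0$ (constantly many options); the processing time takes constantly many values by Lemma~\ref{lem:constant-num-proc-times}; the weight is pinned between $1$ and the explicit power of $1+\eps$ appearing in Definition~\ref{def:irrelevant_jobs} (so again constantly many power-of-$(1+\eps)$ values); and the finished part $f_j$ is a multiple of $p_j\N$ strictly below $p_j$ by Lemma~\ref{lem:large-job-atoms}, giving constantly many choices. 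The $\Gamma$ preceding interval-schedules, restricted to the relevant jobs, each involve $O(\Delta\Gamma)$ jobs spread over $m$ machines with $q_{ij}$ a multiple of $p_j\N$ (Lemma~\ref{lem:large-job-atoms}), again leaving only constantly many inequivalent choices. Multiplying these constants bounds $|\C/{\sim}|$ by a constant depending only on $\eps$ and $m$.

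Second, for each configuration representative, the set of interval-schedules feasible for it modulo $\sim$ is bounded by essentially the same count applied once more: the set $J(S)$ is already fixed up to equivalence, the $q_{ij}$ for large jobs are multiples of $p_j\N$ with $q_{ij}\le p_j$, and small jobs run uninterrupted within the interval (Lemma~\ref{lem:large-job-atoms}), so only constantly many assignments over $m$ machines arise. Consequently a simplified algorithm map is a function between two finite sets of constant size, and there are only constantly many such functions. Each one is described by its finite table of (representative configuration, representative interval-schedule) pairs, whose entries are tuples of integer exponents of $1+\eps$ and finite integers indexing machines.

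The main obstacle is bookkeeping rather than conceptual: I must verify that the canonical representative is well-defined despite the choices of bijection $\sigma$ and scaling integer $y$ in Definition~\ref{def:equivalence-interval-schedule} not being unique, and that each counting bound above is genuinely a constant depending only on $\eps$ and $m$ through $\Delta,\Gamma,\N$ and the constant range of processing times in Lemma~\ref{lem:constant-num-proc-times}. Once these constancy statements are checked, the argument collapses to a finite Cartesian product count, and the finite-description claim follows because every coordinate in a representative is either a bounded integer exponent of $1+\eps$ or a bounded integer indexing a machine or a job.
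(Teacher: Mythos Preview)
Your proposal is correct and follows essentially the same approach as the paper: factor a simplified algorithm map through $\C/{\sim}$ into $\IS/{\sim}$, then bound both quotients by constants using the simplifications of Section~\ref{sec:simplifications}. The paper's own proof is considerably terser---it simply asserts that the simplifications yield constantly many equivalence classes on each side and that knowing $C$ together with the class of $f(C)$ pins down $f(C)$ (since irrelevant jobs go to their safety net)---whereas you spell out the parameter-by-parameter count (release dates via $\Gamma$, number of jobs via $\Delta$, processing times via Lemma~\ref{lem:constant-num-proc-times}, weight range via Definition~\ref{def:irrelevant_jobs}, finished parts and $q_{ij}$ via the granularity $\N$ of Lemma~\ref{lem:large-job-atoms}). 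Your added detail is sound and makes explicit exactly what the paper leaves implicit.
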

\begin{proof}
  From the simplifications introduced in Section~\ref{sec:simplifications}
  follows that the domain of the
  algorithm maps under consideration contains only constantly many
  equivalence classes of configurations. Also, the target space
  contains only constantly many equivalence classes of interval-schedules.  For
  an algorithm map $f$ which obeys the restrictions of
  Section~\ref{sec:simplifications}, the interval-schedule $f(C)$ is fully
  specified when knowing only $C$ and the equivalence class which
  contains $f(C)$ (since the irrelevant jobs are moved to their safety
  net anyway). Since $f(C)\sim f(C')$ for a simplified algorithm map
  $f$ if $C\sim C'$, we conclude that there are only constantly many
  simplified algorithm maps. Finally, each equivalence class of
  configurations and interval-schedules can be characterized using only finite
  information, and hence the same holds for each~simplified~algorithm~map.
\end{proof}
The next lemma shows that up to a factor $1+\epsilon$ worst case
instances of simplified algorithm maps span only constantly many
intervals.  Using this property, we will show in the subsequent lemmas
that the competitive ratio of a simplified algorithm map can be
determined algorithmically up to a $1+\epsilon$ factor.
\begin{lem}
  \label{lem:worst-instance-length} There is a constant $\R$ such that
  for any instance~$I$ and any simplified algorithm map~$f$ there is a
  realistic end-configuration~$\tilde{C}$ for an
  interval~$I_{\tilde{x}}$ with~$\tilde{x}\leq\R$ which is equivalent
  to the corresponding end-configuration when~$f$ is applied to~$I$.
\end{lem}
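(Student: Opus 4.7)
The plan is to reduce an arbitrary instance $I$, on which $f$ may run for arbitrarily many intervals, to a rescaled and truncated instance $\tilde I$ spanning only constantly many intervals, and then invoke the fact that $f$, being simplified, maps equivalent configurations to equivalent interval-schedules.

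First I would invoke Lemma~\ref{lem:split-the-instance-pmtn} to assume $I$ is a single part, and let $C^*$ denote the end-configuration produced by $f$ on $I$, reached at some interval $I_x$. By Lemma~\ref{lem:K-periods} together with Lemma~\ref{lem:irrelevant-jobs} and the restriction stated immediately after it, the jobs relevant at time $R_x$ all have release dates in $[R_{x-\Gamma},R_x]$, there are at most $\Delta\cdot\Gamma$ of them, their weights lie within a constant factor of each other by the domination bound in Definition~\ref{def:irrelevant_jobs}, and every irrelevant job is served entirely in its safety net and therefore does not compete with relevant jobs in any other interval.

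Next I would construct $\tilde I$ by fixing $\tilde x := \Gamma+1$, choosing the integer $y$ for which the rescaled weights are normalized so that the largest weight equals~$1$, and replacing every job $j$ that is relevant at time $R_x$ by a job $\tilde j$ with $r_{\tilde j}=r_j(\e)^{\tilde x-x}$, $p_{\tilde j}=p_j(\e)^{\tilde x-x}$ and $w_{\tilde j}=w_j(\e)^{-y}$; all irrelevant jobs of $I$ are discarded. All simplifications of Section~\ref{sec:simplifications} remain valid: the release dates stay powers of $\e$ in the range $[R_1,R_{\tilde x}]$, the ratios $r_j/p_j\ge\eps$ are preserved, and Lemmas~\ref{lem:constant-num-proc-times}--\ref{lem:constant-num-jobs-released} still hold with the same constants, so $\tilde I$ is a valid input for $f$.

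I would then prove by induction on $k\in\{0,\dots,\tilde x\}$ that the configuration reached by $f$ on $\tilde I$ at interval $I_k$ is equivalent to the configuration reached by $f$ on $I$ at interval $I_{k+(x-\tilde x)}$. The base case $k=0$ is trivial because on both sides the set of relevant jobs is empty, and configuration equivalence is defined purely in terms of relevant jobs. For the inductive step I would use that $f$ is a simplified algorithm map (Lemma~\ref{lem:equal-confs}), so its output on equivalent configurations is an equivalent interval-schedule, and that the relevant jobs newly released at $I_k$ in $\tilde I$ correspond bijectively under the scaling $\sigma$ to those newly released at $I_{k+(x-\tilde x)}$ in $I$ by construction. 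After $\tilde x$ intervals all jobs of $\tilde I$ are released and, by Lemma~\ref{lem:safety-net}, $f$ completes them by interval $I_{\tilde x+s}$, producing an end-configuration $\tilde C$ equivalent to $C^*$; setting $\R := \tilde x+s=\Gamma+s+1$, a quantity depending only on $\eps$ and $m$, finishes the proof. The main obstacle is the inductive step: one must verify that the irrelevant jobs present in $I$ but absent from $\tilde I$ cannot force $f$ to deviate on the relevant jobs; this relies on the restriction after Lemma~\ref{lem:irrelevant-jobs} (so that the irrelevant jobs occupy only their safety nets) combined with the fact that Definition~\ref{def:equivalence-interval-schedule} ignores irrelevant jobs entirely when comparing interval-schedules.
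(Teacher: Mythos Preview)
Your construction has a genuine gap in the inductive argument. You build $\tilde I$ only from the jobs that are relevant \emph{at the final time} $R_x$, and then try to show that the configuration of $f$ on $\tilde I$ at interval $I_k$ is equivalent to the configuration of $f$ on $I$ at interval $I_{k+(x-\tilde x)}$. But equivalence of configurations compares the jobs that are relevant \emph{at that moment}, not at the final time. At an intermediate interval $I_{x'}$ with $x-\Gamma<x'<x$ in $I$, the relevant jobs include those released in $[R_{x'-\Gamma},R_{x-\Gamma})$, and these are absent from $\tilde I$ by construction. Hence the two configurations have different relevant-job sets and are not equivalent, so the simplified property of $f$ gives you nothing about how the two runs relate at that step. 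The base case already exhibits this: at $I_{x-\tilde x}=I_{x-\Gamma-1}$ in $I$ the relevant set is generally nonempty (it consists of jobs released in $[R_{x-2\Gamma-1},R_{x-\Gamma-1}]$), whereas in $\tilde I$ at $I_0$ nothing has been released. Consequently $f$ may schedule the jobs that \emph{do} survive to time $R_x$ differently in the two runs (their $f_j$ values will diverge), and there is no reason the resulting end-configurations are equivalent. The restriction after Lemma~\ref{lem:irrelevant-jobs} does not rescue this: it only says that jobs are moved to their safety nets \emph{once they become irrelevant}, but at the intermediate times in question they are still relevant and compete for machine time.

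The paper avoids this obstacle by a completely different, non-constructive argument. It does not try to build $\tilde I$ directly from $I$. Instead it observes that for each interval $I_x$ the set $\C_x^f$ of equivalence classes realized by $f$ at time $R_x$ is one of only finitely many subsets of a finite set; by pigeonhole, $\C_{\bar x}^f=\C_{\bar x'}^f$ for some $\bar x<\bar x'\le E$, and since $f$ is simplified this forces $\C_{\bar x+k}^f=\C_{\bar x'+k}^f$ for all $k$. Any end-configuration reached far out is therefore equivalent to a realistic one reached within the first $E$ intervals. This cycling argument never needs to align two runs step by step and so sidesteps the mismatch of intermediate relevant sets entirely; the price is that the resulting constant $E$ is the number of subsets of equivalence classes, far larger than your claimed $\Gamma+s+1$.
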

\begin{proof}
  Consider a simplified algorithm map~$f$. For each interval $I_{x}$,
  denote by $\C_{x}^{f}$ the set of realistic equivalence classes for
  $I_{x}$, i.e., the equivalence classes which have a realistic
  representative for $I_{x}$. Since there are constantly many
  equivalence classes and thus constantly many \emph{sets }of 
  equivalence classes, there must be a constant~$\R$ independent of~$f$
  such that~$\C_{\bar{x}}^{f}=\C_{\bar{x}'}^{f}$ for
  some~$\bar{x}<\bar{x}'\le\R$. Since~$f$ is simplified it can be
  shown by induction that $\C_{\bar{x}+k}^{f}=\C_{\bar{x}'+k}^{f}$ for
  any $k\in\mathbb{N}$, i.e.,~$f$ \emph{cycles} with period
  length~$\bar{x}'-\bar{x}$.

  Consider now some instance~$I$ and let~$C$ with interval~$I_{x}$ be
  the corresponding end-configuration when~$f$ is applied to~$I$.
  If~$x\leq\R$ we are done. Otherwise there must be
  some~$k\leq\bar{x}'-\bar{x}$ such
  that~$\C_{\bar{x}+k}^{f}=\C_{x}^{f}$ since~$f$ cycles with this
  period length. Hence, by definition of~$\C_{\bar{x}+k}^{f}$ there
  must be a realistic end-configuration~$\tilde{C}$ which is
  equivalent to~$C$ for the interval~$I_{\tilde{x}}$
  with~$\tilde{x}:=\bar{x}+k\leq\R$. \end{proof}
\begin{lem}
  \label{lem:approx-comp-factor} Let~$f$ be a simplified algorithm
  map. There is an algorithm which approximates~$\rho_{f}$ up to
  a factor~$1+\epsilon$, i.e., it computes a value~$\rho'$
  with~$\rho'\le\rho_{f}\le(1+O(\epsilon))\rho'$.
\end{lem}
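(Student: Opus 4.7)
The plan is to reduce the computation of $\rho_{f}=\sup_{\I} f(\I)/\opt(\I)$ to a finite enumeration over a representative set of short instances, then simply execute the simulation and the offline optimum for each representative.

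First, I would invoke Lemma~\ref{lem:worst-instance-length} to argue that the supremum is essentially attained by instances whose evaluation under $f$ reaches an end-configuration within at most $\R$ intervals. Concretely, for an arbitrary instance $\I$ with end-configuration $C$ there is an equivalent realistic end-configuration $\tilde{C}$ reached by $f$ on some instance $\tilde{\I}$ at time $R_{\tilde{x}}$ with $\tilde{x}\le\R$. Since $f$ is a simplified algorithm map, the entire schedules produced on $\I$ and $\tilde{\I}$ are equivalent on their relevant jobs, and by Lemma~\ref{lem:irrelevant-jobs} the irrelevant jobs contribute only an $O(\eps)$ fraction to either objective. Because equivalence multiplies all relevant weights by a common $\ee^{y}$ and all relevant times by a common $\ee^{\tilde{x}-x}$, these scalars cancel in the ratio $f/\opt$ (the same equivalence applies to any offline schedule, hence to the offline optimum). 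Consequently $f(\tilde{\I})/\opt(\tilde{\I}) = (\eo)\, f(\I)/\opt(\I)$, and it suffices to consider instances spanning at most $\R$ intervals.

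Next, I would enumerate all such short instances up to equivalence. By Lemma~\ref{lem:constant-num-jobs-released} there are at most $\Delta\cdot\R$ jobs; by Lemma~\ref{lem:first-rounding} and Lemma~\ref{lem:constant-num-proc-times} each job's release date and processing time, expressed as powers of $\e$, lie in a constant-size set once we normalize the interval index; and by Definition~\ref{def:irrelevant_jobs} the weights of relevant jobs occupy only a constant range relative to one another, so after fixing the overall weight scale the weights also come from a constant-size set. Hence there are only constantly many equivalence classes of short instances; fix one representative $\I_{1},\ldots,\I_{N}$ from each.

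For each $\I_{k}$ I would compute $f(\I_{k})$ by direct simulation, which is finitely describable by Lemma~\ref{lem:const-numb-maps} and runs over constantly many intervals. I would compute $\opt(\I_{k})$ by brute force over the schedules respecting the preemption grid of Lemma~\ref{lem:large-job-atoms}, which is a finite search since $\I_{k}$ has constant size (an offline PTAS would work equally well). Output $\rho':=\max_{k} f(\I_{k})/\opt(\I_{k})$.

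The main obstacle is the two-sided guarantee $\rho'\le\rho_{f}\le(1+O(\eps))\rho'$. The lower inequality is immediate since each $\I_{k}$ is a valid instance. For the upper inequality, one takes an instance $\I$ whose ratio is within $1+\eps$ of $\rho_{f}$, applies the reduction of the first paragraph to obtain an equivalent short $\tilde{\I}$, and matches $\tilde{\I}$ to the representative $\I_{k}$ of its equivalence class; the delicate point is to verify that the equivalence of interval-schedules extends to the offline optimum (so that $\opt(\tilde{\I})$ and $\opt(\I_{k})$ agree up to the same scaling) and that the irrelevant-job contribution lies within the $1+O(\eps)$ slack on both sides of the ratio. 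Collecting the $\eo$ factors and rescaling $\eps$ by a constant yields the claimed $\rho'\le\rho_{f}\le(1+\eps)\rho'$.
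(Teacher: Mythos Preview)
Your proposal is correct and takes essentially the same approach as the paper: both use Lemma~\ref{lem:worst-instance-length} to bound the horizon to $\R$ intervals, invoke Lemma~\ref{lem:irrelevant-jobs} to justify that only the relevant jobs matter for the ratio, and then carry out a finite enumeration. The paper organizes the enumeration slightly differently---iteratively constructing the sets $\C_{0}^{f},\dots,\C_{\R}^{f}$ of realistic configuration equivalence classes by forward simulation and computing $r(C):=val_{C}(J_{\Rel}(C))/\opt(J_{\Rel}(C))$ for each end-configuration therein---whereas you enumerate short-instance equivalence classes directly and simulate $f$ on each representative, but this is the same underlying computation.
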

\begin{proof}[Proof sketch.]
  By Lemma~\ref{lem:irrelevant-jobs}, the relevant jobs in a
  configuration dominate the entire objective value. In particular, we do
  not need to know the irrelevant jobs of a configuration if we only
  want to approximate its objective value up to a factor of $\eo$. For
  an end-configuration $C$ denote by $val_{C}(J_{\Rel}(C))$ the
  objective value of the jobs in $J_{\Rel}(C)$ in the history of $C$.
  We define $r(C):=val_{C}(J_{\Rel}(C))/\opt(J_{\Rel}(C))$ to be the
  achieved competitive ratio of $C$ when restricted to the relevant
  jobs. According to Lemma~\ref{lem:worst-instance-length}, it
  suffices to construct the sets $\C_{0}^{f},...,\C_{\R}^{f}$ in order
  to approximate the competitive ratio of all end-configurations in
  these sets. We start with $\C_{0}^{f}$ and determine $f(C)$ for one
  representant $C$ of each equivalence class $\C\in\C_{0}^{f}$.  Based
  on this we determine the set $\C_{1}^{f}$. We continue inductively
  to construct all sets $\C_{x}^{f}$ with $x\le\R$.

  We define $r_{\max}$ to be the maximum ratio $r(C)$ for an
  end-configuration $C\in\cup_{0\le x\le\R}\C_{x}^{f}$. Due to
  Lemma~\ref{lem:worst-instance-length} and
  Lemma~\ref{lem:irrelevant-jobs} the value~$r_{\max}$ implies the
  required~$\rho'$ fulfilling the properties claimed in this lemma.
\end{proof}

Our main algorithm works as follows. We first enumerate all simplified
algorithm maps. For each simplified algorithm map $f$ we approximate~$\rho_{f}$
using Lemma~\ref{lem:approx-comp-factor}. We output the map~$f$ with
the minimum~(approximated) competitive ratio.  Note that the
resulting online algorithm has polynomial running time: All
simplifications of a given instance can be done efficiently and for a
given configuration, the equivalence class of the schedule for the
next interval can be found in a look-up table of constant size.

\begin{thm}\label{thm:online-approx-scheme-Pm-pmtn}
  For any $m\in \mathbb{N}$ we obtain a competitive-ratio approximation scheme for
  \abc{Pm}{r_{j},pmtn}{\sum w_{j}C_{j}}.
\end{thm}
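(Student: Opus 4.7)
The plan is to execute the enumeration scheme sketched at the end of Section~\ref{sec:Abstraction-online-algos}: iterate over all simplified algorithm maps, estimate the competitive ratio of each via Lemma~\ref{lem:approx-comp-factor}, and output both the map attaining the minimum estimate and the estimate itself as the required $\rho'$.

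First, I would reduce the problem to simplified instances and simplified algorithm maps. Applying Lemmas~\ref{lem:first-rounding} through \ref{lem:irrelevant-jobs} and then Lemma~\ref{lem:split-the-instance-pmtn} in succession loses only a factor $1+O(\eps)$ each, so at a combined $1+O(\eps)$ loss every online algorithm may be assumed to operate within the restricted instance class. Lemma~\ref{lem:equal-confs} then allows a further restriction to algorithm maps that treat equivalent configurations equivalently, and Lemma~\ref{lem:const-numb-maps} ensures that the resulting family $\AM$ of simplified algorithm maps is finite and that each of its members has a finite description suitable for enumeration.

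Second, I would choose $f^\ast \in \AM$ as follows. For every $f \in \AM$, Lemma~\ref{lem:worst-instance-length} bounds the length of a worst-case realistic end-configuration by a constant $\R$, so Lemma~\ref{lem:approx-comp-factor} yields, in finite time, a value $\rho'_f$ with $\rho'_f \le \rho_f \le (1+O(\eps))\,\rho'_f$. Let $f^\ast$ minimize $\rho'_f$ over $\AM$ and set $\rho' := \rho'_{f^\ast}$. Write $\rho^\ast_{\AM} := \min_{f \in \AM} \rho_f$. Since each $f \in \AM$ is itself an online algorithm, $\rho^\ast \le \rho^\ast_{\AM}$; conversely, by the chain of simplifications and Lemma~\ref{lem:equal-confs}, for any online algorithm $\alg$ there is some $f \in \AM$ with $\rho_f \le (1+O(\eps))\,\rho_{\alg}$, whence $\rho^\ast_{\AM} \le (1+O(\eps))\,\rho^\ast$. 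Combining, $\rho_{f^\ast} \le (1+O(\eps))\,\rho'_{f^\ast} \le (1+O(\eps))\,\rho'_f \le (1+O(\eps))\,\rho_f$ for every $f \in \AM$, so $\rho_{f^\ast} \le (1+O(\eps))\,\rho^\ast$; and $\rho' \le \rho_{f^\ast}$ together with $\rho^\ast \le \rho_{f^\ast} \le (1+O(\eps))\,\rho'$ gives the two-sided $(1+O(\eps))$-approximation of $\rho^\ast$. A standard rescaling of $\eps$ at the outset converts every $1+O(\eps)$ factor into the $1+\eps$ demanded by the definition of a competitive-ratio approximation scheme.

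Finally, although the one-time preprocessing that produces $f^\ast$ is computationally extravagant, the resulting online algorithm itself is polynomial: each incoming job triggers the simplifications of Section~\ref{sec:simplifications}, after which the current configuration is classified into one of the constantly many equivalence classes and $f^\ast$ is evaluated by table lookup. The delicate point is not the enumeration but the inequality $\rho^\ast_{\AM} \le (1+O(\eps))\,\rho^\ast$: one must verify that the losses accumulated across Lemmas~\ref{lem:first-rounding}--\ref{lem:split-the-instance-pmtn} and \ref{lem:equal-confs} compose into a single $1+O(\eps)$ factor that is independent of the instance, which is secured by formulating each simplification as a transformation of the schedule produced by an arbitrary online algorithm rather than of the input alone.
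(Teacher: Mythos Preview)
Your proposal is correct and follows essentially the same approach as the paper: enumerate the finitely many simplified algorithm maps (Lemma~\ref{lem:const-numb-maps}), approximate each map's competitive ratio via Lemma~\ref{lem:approx-comp-factor}, and output the minimizer together with its estimate. You have in fact made explicit the two-sided inequality $\rho'\le(1+O(\eps))\rho^\ast$ and $\rho^\ast\le(1+O(\eps))\rho'$ and the polynomial running time of the resulting online algorithm, which the paper only states informally in the paragraph preceding the theorem.
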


\section{Extensions to Other Settings}
\label{sec:Extensions}

\paragraph{Non-preemptive Scheduling.}
\label{sec:nonpmtn}

When preemption is not allowed, the definition of the safety
net~(Lemma~\ref{lem:safety-net}) needs to be adjusted since we cannot
ensure that at the end of each interval~$I_{x+s}$ there is a machine idle.
However, we can guarantee that there is a reserved space
\emph{somewhere} in~$[R_{x},R_{x+s})$ to process the small and big
jobs in~$S_{x} \cup L_{x}$. Furthermore, we cannot enforce that a big
job~$j$ is processed for exactly a certain multiple of~$p_{j}\N$ in
each interval~(Lemma~\ref{lem:large-job-atoms}). To solve this, we
pretend that we could preempt~$j$ and ensure that after~$j$ has been
preempted its machine stays idle until~$j$ continues.
Next, we can no longer assume that each part can be treated independently
(Lemma~\ref{lem:split-the-instance-pmtn}).
Since some of the remaining jobs
at the end of a part
may have already started processing, we cannot simply move them to their
safety net. 
Here we use the following~simplification.

\begin{lem}\label{lem:split-the-instance-non-pmtn}%
  Let~$\first{i}$ denote the job that is released first in
  part~$\Part{i}$.  At~$\e$ loss, we can restrict to instances such
  that~$\sum_{\ell=1}^{i-1} rw(\Part{\ell}) \le
  \frac{\epsilon}{\ee^{s}}\cdot rw(\first{i})$, i.e.,~$\first{i}$ dominates
all previous parts.
\end{lem}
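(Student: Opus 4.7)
The plan is to iteratively merge adjacent parts from the given decomposition until the required dominance condition holds for every remaining part. I scan the parts $\Part{1}, \Part{2}, \ldots$ in order; whenever $\Part{i}$ violates $\sum_{\ell=1}^{i-1} rw(\Part{\ell}) \le \frac{\eps}{\ee^{s}}\, rw(\first{i})$, I merge $\Part{i}$ into $\Part{i-1}$, collapsing the two into a single new part. The earliest released job of this merged part is $\first{i-1}$, which (by the inductive progress of the scan) already satisfied its own dominance condition against $\Part{1},\ldots,\Part{i-2}$; consequently the merged part inherits the required condition, and the condition for every later $\Part{j}$ remains valid because the cumulative release weight on its left-hand side is only re-grouped, not altered in value.

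For the loss bound, I observe that the only genuine cost introduced by merging two adjacent parts (compared to the preemptive case, where Lemma~\ref{lem:split-the-instance-pmtn} allows treating each part in isolation) is the non-preemptive spillover effect: a job of the former $\Part{i-1}$ may still be executing when jobs of the former $\Part{i}$ are released. By Lemma~\ref{lem:safety-net}, the safety-net subintervals reserved inside each release interval already absorb this spillover without delaying any subsequent release by more than a factor $\ee^{s}$. Each merge therefore charges at most a multiplicative $1+O(\eps)$ overhead against the objective contribution already accumulated in the preceding parts.

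The main obstacle is controlling this overhead across potentially many merges. I would handle it by noting that a merge is triggered only when $rw(\first{i})$ is at most a $\ee^{s}/\eps$ factor of $\sum_{\ell<i} rw(\Part{\ell})$, so the per-merge overhead is bounded by the objective contribution already present in the earlier parts; the charges then telescope under the geometric structure of the part release weights guaranteed by the chain of dominance conditions among the remaining parts (analogous to the telescoping argument used in Lemma~\ref{lem:dominated-periods}). Summing over all merges yields the claimed total multiplicative loss of $\ee$.
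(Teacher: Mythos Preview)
Your merging approach does not establish the lemma. The statement ``at $1+\eps$ loss we can restrict to instances with property $P$'' means: given any online algorithm $\alg$ with competitive ratio $\rho$ on instances satisfying $P$, one must build an online algorithm $\alg'$ with ratio at most $(1+\eps)\rho$ on \emph{all} instances. Merging adjacent parts only yields a coarser partition of the \emph{same} instance; it does not produce a new instance to which $\alg$ can be applied. The partition into parts is determined by the instance itself via the insignificant periods of Lemma~\ref{lem:dominated-periods}, so if you hand a merged block to $\alg$ as a standalone sub-instance, $\alg$ re-splits it at the original boundary and the dominance condition still fails internally. Worse, a merged block contains an insignificant period in its interior, so the growth hypothesis of Lemma~\ref{lem:K-periods} fails there, and the ``last $\Gamma$ intervals dominate'' conclusion---the entire downstream purpose of this lemma---is lost. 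Your ``spillover cost per merge'' and the telescoping in the last paragraph do not correspond to any well-defined quantity: merging is pure relabelling with zero cost, while the genuine non-preemptive obstruction (jobs still running across a part boundary) lives at the boundaries you \emph{keep}, not the ones you erase.

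The paper takes a completely different route: it leaves the partition untouched and instead \emph{scales the weights} of all jobs in each part $\Part{i}$ by a common factor $v_i$, chosen just large enough that $\first{i}$ dominates the (already scaled) earlier parts. This yields a genuine new instance $\bar{I}$ that satisfies the condition, on which $\alg$ is run; the resulting schedule is then reused for the original weights. Because uniform scaling within a part does not affect which schedule $\alg$ selects for it, one obtains $v_i\cdot\alg'(\I|\Part{i})\le\alg(\bar{I}(i))$ together with $\opt(\bar{I}(i))\le(1+\eps)\,v_i\cdot\opt(\Part{i})$, and these combine to the claimed bound. This weight-scaling step is the missing idea in your attempt.
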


Therefore, at $\e$ loss it is enough to consider only the currently running jobs from the previous part
and the last $\Gamma$ intervals from the current part when taking decisions.
Finally, we add some minor modifications to handle the case that a
currently running job is dominated by some other job.
With these adjustments, we have only constantly many
equivalence classes for interval-schedules and configurations, which allows us
to construct a competitive-ratio approximation scheme as in
Section~\ref{sec:Abstraction-online-algos}.

\begin{thm}\label{thm:non-pmtn-algo}
  For any $m\in \mathbb{N}$ we obtain a competitive-ratio approximation scheme for \abc{Pm}{r_j}{\sum
    w_jC_j}\,.
\end{thm}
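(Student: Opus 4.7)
The plan is to follow the template of Section~\ref{sec:Abstraction-online-algos} almost verbatim, after re-establishing the simplifications of Section~\ref{sec:simplifications} in the non-preemptive setting using the three adjustments highlighted in Section~\ref{sec:nonpmtn}. The target is to show that, despite the absence of preemption, configurations and interval-schedules still fall into only constantly many equivalence classes; once this is secured, Lemmas~\ref{lem:equal-confs}--\ref{lem:approx-comp-factor} and the enumeration argument apply with essentially no change.

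First, I would verify that Lemmas~\ref{lem:first-rounding}--\ref{lem:constant-num-jobs-released} carry over: their proofs hinge only on geometric rounding and on time-stretching to create a slack of size $\eps\cdot R_{x-1}$ inside each interval~$I_x$, and neither operation requires preemption. Next, I would re-prove Lemma~\ref{lem:safety-net} in its weakened form: rather than reserving an idle subinterval at the end of $I_{x+s-1}$, the non-preemptive safety net is a subinterval located somewhere inside $[R_x,R_{x+s})$ whose length suffices to execute all of $S_x\cup L_x$ contiguously; the standard time-stretch argument still delivers such a contiguous chunk without violating non-preemption. Finally, Lemma~\ref{lem:large-job-atoms} is replaced by the simulated-preemption variant: a large job $j$ may be ``interrupted'' only after having been processed for a multiple of $p_j\cdot\N$, and whenever it is interrupted its machine must remain idle until $j$ resumes, so non-preemption is preserved. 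This keeps the number of partially processed large jobs of each type bounded by a constant and makes their progress values $\N$-discrete, losing only an additional $1+O(\eps)$ factor absorbed by the time-stretch.

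The main new ingredient, and the step I expect to require the most care, is the part decomposition. Lemma~\ref{lem:split-the-instance-pmtn} used the fact that in the preemptive case the jobs of an insignificant period can be pushed to their safety nets, making each part a self-contained subinstance; non-preemptively, the jobs running at the part boundary cannot be moved. Here I would invoke Lemma~\ref{lem:split-the-instance-non-pmtn}: after an additional time-stretch we may assume $\sum_{\ell=1}^{i-1} rw(\Part{\ell})\le\frac{\eps}{\ee^s}\cdot rw(\first{i})$, so the contribution of all preceding parts is dominated by the release weight of the first job of~$\Part{i}$. This lets me charge both the completion of any job of~$\Part{i-1}$ still running at the start of~$\Part{i}$ and any idle time its blocking induces on $\first{i}$ to $w_{\first{i}}\cdot r_{\first{i}}$, losing only $1+O(\eps)$ per part and hence in the overall ratio. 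Within a single part, Lemmas~\ref{lem:dominated-periods}--\ref{lem:irrelevant-jobs} extend unchanged, with one tweak to Definition~\ref{def:irrelevant_jobs}: a currently running job that becomes dominated cannot be dispatched to its safety net, so I would simply let it run to completion and charge the extra $p_j$ of machine time to the dominating relevant job~$j'$, which absorbs the cost via the gap factor $\eps/(\Delta\cdot\Gamma\cdot(1+\eps)^{\Gamma+s})$ built into Definition~\ref{def:irrelevant_jobs}.

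With all simplifications in place, the abstraction of Section~\ref{sec:Abstraction-online-algos} is reused. The only additional datum that a non-preemptive configuration must carry is, for each machine, the identity and remaining processing time of the large job currently occupying it (if any); by the constant number of rounded processing times per interval and the bounded number of relevant jobs in the last $\Gamma$ intervals, this introduces only constantly many extra equivalence classes, so the total number of configuration and interval-schedule classes remains constant. Lemmas~\ref{lem:const-numb-maps}--\ref{lem:approx-comp-factor} then give that there are only constantly many simplified algorithm maps and that the competitive ratio of each can be approximated to within a factor $1+\eps$ by enumerating the realistic equivalence classes up to the constant horizon~$\R$ of Lemma~\ref{lem:worst-instance-length}. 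Outputting the map with the smallest approximate ratio yields the claimed competitive-ratio approximation scheme and proves Theorem~\ref{thm:non-pmtn-algo}. The subtle point I would scrutinize most closely is the charging argument for boundary-running jobs and for the machine idleness they cause, where the non-preemptive case genuinely departs from the preemptive one.
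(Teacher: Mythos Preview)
Your proposal is correct and follows essentially the same approach as the paper: the three non-preemptive adjustments you list (the relocated safety net somewhere in $[R_x,R_{x+s})$, the simulated preemption with enforced machine idleness, and the part-boundary handling via Lemma~\ref{lem:split-the-instance-non-pmtn}) are exactly those of Section~\ref{sec:nonpmtn}, and your addition of the currently running job on each machine to the configuration data matches the paper's remark that one must track the running jobs from the previous part together with the last~$\Gamma$ intervals. The one inaccuracy is that Lemma~\ref{lem:split-the-instance-non-pmtn} is established by \emph{weight scaling} (multiplying the weights of each new part so that $\first{i}$ dominates the cumulative prior release weight), not by an additional time-stretch; but since you invoke the lemma as stated rather than re-prove it, this does not create a gap in your argument.
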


\paragraph{Scheduling on Related Machines.}
\label{sec:related}

In this setting, each machine~$i$ has associated a
speed~$s_i$, such that processing job~$j$ on machine~$i$
takes~$p_{j}/s_{i}$ time units. W.l.o.g. the slowest machine has unit
speed.  Let~$s_{\max}$ denote the maximum speed in an instance. An
adjusted version of Lemma~\ref{lem:first-rounding} ensures that at
$1+\epsilon$ loss~$r_{j} \ge \epsilon\, p_{j}/s_{\max}$ for all jobs
$j$~(rather than~$r_{j} \ge \epsilon p_{j}$). Furthermore, we can
bound the number of distinct processing times and the number of
released jobs of each interval, using similar arguments as in the
unit-speed~case.

\begin{lem}
  \label{lem:related-volume-bound}
  At $1+O(\epsilon)$ loss we can restrict to instances where for each
  release date the number of released jobs and the number of distinct
  processing times is bounded by a constant depending only
  on~$\epsilon$,~$m$, and~$s_{\max}$.
\end{lem}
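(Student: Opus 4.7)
The plan is to mirror the proofs of Lemmas~\ref{lem:constant-num-proc-times} and~\ref{lem:constant-num-jobs-released}, adapting every step that implicitly assumed unit speeds to incorporate the factor $s_{\max}$. From the adjusted version of Lemma~\ref{lem:first-rounding} we immediately obtain the upper bound $p_j \leq r_j\cdot s_{\max}/\epsilon = R_x\cdot s_{\max}/\epsilon$ for every job released at $R_x$. Because processing times are powers of $1+\epsilon$, a matching lower bound of the form $p_j \geq c\cdot R_x/s_{\max}$ would leave only $O\!\left(\log_{1+\epsilon}(s_{\max}^{2}/(\epsilon\, c))\right)$ many distinct processing times per release date, which is a constant in the claimed parameters.

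To establish such a lower bound, I reuse the time-stretch technique: shifting the work done in each $I_x$ into $I_{x+1}$ costs a factor $1+\epsilon$ in the objective and frees an $\epsilon$-fraction of each interval. On related machines this free space has total processing capacity at least $\epsilon R_{x-1}\cdot\sum_i s_i$, which is comfortably enough to absorb any ``very small'' jobs, namely those with $p_j \leq \epsilon^{c}R_x/s_{\max}$ for an appropriate constant $c=c(\epsilon,m,s_{\max})$, after grouping them into packs via Smith's Rule exactly as in Lemma~\ref{lem:constant-num-proc-times}. The remaining jobs satisfy the desired lower bound.

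For the number of released jobs at $R_x$, I apply the (adjusted) safety-net argument: all jobs released at $R_x$ must fit into a subinterval of $I_{x+s-1}$ of length $\Theta(\epsilon R_{x+s-1})$. The aggregate work capacity of that window over all $m$ machines is at most $m\cdot s_{\max}\cdot \epsilon R_{x+s-1}$, while each job has processing time at least $\epsilon^{c}R_x/s_{\max}$ by the previous step. Dividing, the number of jobs released at any single $R_x$ is at most $m\cdot s_{\max}^{2}\cdot \epsilon^{1-c}\cdot(1+\epsilon)^{s-1}$, a constant depending only on $\epsilon, m, s_{\max}$.

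The main obstacle is the correct bookkeeping of speed factors in the two opposite directions of the argument: the free space produced by time-stretching is most conservatively measured against the \emph{slowest} machine (speed~$1$), whereas the work that the safety net must absorb is bounded above using the \emph{fastest} machine (speed~$s_{\max}$). Both steps therefore degrade by factors of $s_{\max}$ compared to the unit-speed case, and the $s_{\max}^{2}$ inflation in the final constants is exactly why $s_{\max}$ appears in the lemma statement. Once these factors are tracked, the redefinition of ``large'' and ``small'' jobs, the Smith's Rule packing of tiny jobs, and the safety-net bound all carry over from the unit-speed proofs without further change.
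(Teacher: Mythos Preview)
Your overall plan—adapt the unit-speed arguments while tracking factors of $s_{\max}$—matches what the paper intends (it simply says ``using similar arguments as in the unit-speed case''). The gap is in how you bound the \emph{number} of released jobs. You obtain it by dividing the safety-net capacity by the minimum job size, but the safety net of Lemma~\ref{lem:safety-net} is not an a priori constraint on the instance; it is established only \emph{after} the total released volume $p(S_x)+p(L_x)$ has already been bounded, and the constant $s$ is then chosen so that this volume fits into the freed portion of $I_{x+s-1}$. In the unit-speed proof that volume bound comes from two ingredients you do not mention: an exchange argument showing that at most $m/\epsilon^{2}+m$ large jobs of each size can be touched within $I_x$, so all further large jobs can have their release date pushed to $R_{x+1}$ at no cost (Lemma~\ref{lem:number-large-jobs}); and the observation that, with small jobs ordered by Smith's rule, any overflow beyond capacity $m\cdot I_x$ can likewise be pushed to $R_{x+1}$, giving $p(S_x)\le m\cdot I_x$ (Lemma~\ref{lem:smiths-rule}). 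Your packing of tiny jobs also leans on the safety net, since one needs a bound on how many packs straddle interval boundaries. Without these two preliminary steps the safety net is not available, and your division argument has nothing to divide.

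For related machines the repair is to adapt those two lemmas directly: the processing capacity of an interval grows by at most a factor $s_{\max}$, yielding roughly $m\,s_{\max}/\epsilon^{2}$ touchable large jobs of each size and $p(S_x)\le m\,s_{\max}\cdot I_x$. The safety net (placed on the fastest machine, as the paper remarks just after the lemma) and the tiny-job packing then follow, and the bounds on the number of jobs and on distinct processing times are read off from these intermediate quantities—exactly as in the identical-machine case—rather than from the safety net itself.
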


We establish the safety net for the jobs of each release date $R_{x}$
only on the {\em fastest} machine and thereby ensure the condition of
Lemma~\ref{lem:safety-net} in the related machine setting.
For the non-preemptive setting we
incorporate the adjustments introduced in Section~\ref{sec:nonpmtn}.
Since at $1+\eps$ loss we can round the speeds of the machines to powers of $1+\eps$
we obtain the following result.

\begin{thm}
  For any $m\in \mathbb{N}$ we obtain competitive-ratio approximation schemes for \abc{Qm}{r_j, pmtn}{\sum
    w_jC_j} and \abc{Qm}{r_j}{\sum w_jC_j}, assuming that the speeds
  of any two machines differ by at most a constant factor.
\end{thm}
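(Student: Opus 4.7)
The plan is to lift the competitive-ratio approximation scheme of Section~\ref{sec:Abstraction-online-algos} to related machines by combining the preemptive framework with the structural modifications already introduced earlier in this section, and for the non-preemptive statement also with the adjustments from Section~\ref{sec:nonpmtn}. The first preparatory step would be one additional rounding: at $\e$ loss, round every speed $s_i$ to a power of~$\e$. Because by hypothesis $s_{\max}/s_{\min}$ is bounded by a constant, this leaves only $O(1)$ distinct speeds. Together with the adjusted Lemma~\ref{lem:first-rounding} (with $r_j \ge \eps p_j/s_{\max}$) and Lemma~\ref{lem:related-volume-bound}, this guarantees that within every interval $I_x$ both the number of distinct processing times and the total number of jobs released at $R_x$ are bounded by a constant depending only on $\eps$, $m$, and $s_{\max}$.

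Next I would re-establish the simplifications of Section~\ref{sec:simplifications} in the related-machine setting. The safety net (Lemma~\ref{lem:safety-net}) is built only on the \emph{fastest} machine; since all speeds lie within a constant factor, a subinterval of $I_{x+s-1}$ on that machine is long enough to absorb the jobs released at~$R_x$, and the remainder of the argument is unchanged. The discretization of preemption (Lemma~\ref{lem:large-job-atoms}), the partition into periods and parts (Lemmas~\ref{lem:dominated-periods}--\ref{lem:split-the-instance-pmtn}), the $K$-periods estimate (Lemma~\ref{lem:K-periods}), and the relevance/irrelevance analysis (Definition~\ref{def:irrelevant_jobs} and Lemma~\ref{lem:irrelevant-jobs}) all depend only on release weights, safety-net style completion-time bounds, and the atomicity of intervals, so they transfer without modification. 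For the non-preemptive version I would incorporate the adjustments of Section~\ref{sec:nonpmtn}: replace the trailing safety net by a reserved subinterval of $[R_x, R_{x+s})$ on the fastest machine, use the pretend-preemption-with-idle-machine trick for large jobs, and replace Lemma~\ref{lem:split-the-instance-pmtn} by the stronger dominance condition of Lemma~\ref{lem:split-the-instance-non-pmtn} so that only the currently running jobs of the previous part plus the last $\Gamma$ intervals of the current part matter.

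Once the instance and the admissible schedules have been cut down to a constant-size repertoire per interval, the abstraction of Section~\ref{sec:Abstraction-online-algos} applies almost verbatim. Interval-schedules and configurations already encode per-machine processing amounts $q_{ij}$, so the same notion of equivalence (a bijection on relevant jobs, scaling of release dates/processing times/finished parts by $(\e)^{x'-x}$ and of weights by $\ee^y$) yields only constantly many equivalence classes. The counting of Lemma~\ref{lem:const-numb-maps}, the cycling argument of Lemma~\ref{lem:worst-instance-length}, and the competitive-ratio estimation of Lemma~\ref{lem:approx-comp-factor} then carry over to produce a simplified algorithm map of nearly optimal competitive ratio, and a $(1+\eps)$-estimate of $\rho^*$, for both \abc{Qm}{r_j,pmtn}{\sum w_jC_j} and \abc{Qm}{r_j}{\sum w_jC_j}. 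The main obstacle is purely bookkeeping: one must verify that \emph{every} constant appearing in the simplifications—the large/small threshold, the number of distinct processing times per interval, the safety-net length $s$, the relevance depth $\Gamma$, the preemption quantum $\N$, and the enumeration depth $\R$—depends only on $\eps$, $m$, and $s_{\max}$, and in particular does not blow up with the instance size; the bounded-speed-ratio hypothesis is precisely what is needed to keep the speed-dependent factors under control.
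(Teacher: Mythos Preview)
Your proposal is correct and follows essentially the same approach as the paper: round speeds to powers of $1+\eps$, adapt Lemma~\ref{lem:first-rounding} to $r_j\ge\eps p_j/s_{\max}$, invoke Lemma~\ref{lem:related-volume-bound}, place the safety net on the fastest machine, and for the non-preemptive case graft in the adjustments of Section~\ref{sec:nonpmtn}; then the abstraction of Section~\ref{sec:Abstraction-online-algos} applies unchanged. Your write-up is in fact more explicit than the paper's own sketch, and your closing remark that the bounded-speed-ratio hypothesis is exactly what keeps all the structural constants finite is the correct diagnosis of why the assumption is needed.
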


In the preemptive setting we can strengthen the result and give a
competitive-ratio approximation scheme for the case that machine speeds are part
of the input, that is, we obtain a nearly optimal competitive ratio
for {\em any} speed vector. The key is to bound the variety of
different speeds. To that end, we show that at~$\e$ loss a very fast
machine can simulate~$m-1$ very slow machines.

\begin{lem}\label{lem:Qm-pmtn-speed-bound}
  For \abc{Qm}{r_j, pmtn}{\sum w_jC_j}, at $1+O(\epsilon)$ loss, we
  can restrict to instances in which $s_{\max}$ is bounded by
  $m/\epsilon$.
\end{lem}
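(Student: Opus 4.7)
The plan is to reduce any instance $I$ to a ``compressed'' instance $I'$ by discarding the slowest machines, and show that the offline optimum and the optimal online competitive ratio each change by at most a factor $\eo$. Call a machine \emph{slow} if $s_i<\eps\,s_{\max}/m$ and \emph{fast} otherwise; the $s_{\max}$-machine itself is always fast, and after removing the slow machines the minimum remaining speed is at least $\eps\,s_{\max}/m$, so rescaling it to unit speed yields $s_{\max}\le m/\eps$ as required. The crucial arithmetic fact is that the total rate lost, $\sum_{\text{slow}}s_i \le m\cdot \eps\,s_{\max}/m = \eps\,s_{\max}$, matches exactly the spare capacity that the single fastest machine gains under a $\ee$-time-stretch.

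The main step is to prove $\opt(I')\le (1+O(\eps))\,\opt(I)$; the reverse inequality is trivial since removing machines cannot decrease the optimum. Take an optimal preemptive schedule $\sigma$ for $I$ and stretch its time axis by $\ee$, so that each fast machine is busy for a $1/\ee$ fraction and idle for an $\eps/\ee$ fraction of the stretched time. At every infinitesimal instant $t$ of $\sigma$, the jobs that $\sigma$ was running on slow machines are, by definition, \emph{not} being processed on any fast machine, so we may divert each such job into the idle portion of the corresponding stretched slot on the $s_{\max}$-machine. Since the total rate of diverted work is at most $\eps\,s_{\max}$ and the $s_{\max}$-machine offers spare rate exactly $\eps\,s_{\max}$, the reassignment fits; and since a diverted job is simultaneously not being processed by any fast machine in the replay, no job ever runs on two machines at once. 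Every job $j$ therefore completes by stretched time $\ee\cdot C_j$, giving objective at most $\ee\cdot\opt(I)$.

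For the online direction, given any online algorithm $\alg'$ tailored for instances with $s_{\max}\le m/\eps$, I define $\alg$ on a general instance $I$ by discarding the slow machines up front (all speeds are known at time zero) and simulating $\alg'$ on the reduced machine set. The resulting schedule is feasible for $I$, so
\[
\alg(I)\;=\;\alg'(I')\;\le\;\rho_{\alg'}\cdot\opt(I')\;\le\;(1+O(\eps))\,\rho_{\alg'}\cdot\opt(I),
\]
whence $\rho_{\alg}\le(1+O(\eps))\,\rho_{\alg'}$, which is exactly the content of the lemma.

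The hard part will be verifying the feasibility of the time-stretch construction, namely respecting that a single job can occupy at most one machine at a time while rerouting slow-machine work onto the fast machines. The observation that a job running on a slow machine in $\sigma$ is \emph{free} on every fast machine at that instant eliminates any conflict when it is placed on the $s_{\max}$-machine during its $\eps/\ee$-slice of idle time; accumulating these pointwise reassignments into a bona fide preemptive schedule is then a routine application of a McNaughton-style wrap-around within each stretched subinterval, and release-date feasibility is preserved because stretched start times only move later.
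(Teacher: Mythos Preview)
Your argument is correct and follows essentially the same route as the paper: declare machines with speed below $\eps\,s_{\max}/m$ slow, stretch time to create an $\eps$-fraction of idle capacity on the fastest machine, and absorb all slow-machine work there, using that the aggregate slow rate is at most $\eps\,s_{\max}$. Your write-up is more explicit than the paper's on two points---the one-job-one-machine feasibility check and the online wrapper that simulates $\alg'$ on the reduced machine set---but these are elaborations of the same construction rather than a different method.
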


\noindent As speeds are geometrically rounded, we have for each
value~$m$ only finitely many speed vectors. Thus, our enumeration
scheme finds a nearly optimal online algorithm with a particular
routine for each speed~vector.

\begin{thm}
  For any $m\in \mathbb{N}$ we obtain a competitive-ratio approximation scheme for \abc{Qm}{r_j,pmtn}{\sum
    w_jC_j}\,.
\end{thm}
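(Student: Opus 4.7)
The plan is to reduce the arbitrary-speed case to the bounded-speed result proved just above by combining Lemma~\ref{lem:Qm-pmtn-speed-bound} with the adjusted rounding of Lemma~\ref{lem:first-rounding}. First I would apply Lemma~\ref{lem:Qm-pmtn-speed-bound} to restrict, at $1+O(\epsilon)$ loss, to instances with $s_{\max}\leq m/\epsilon$. Next, using the related-machines variant of Lemma~\ref{lem:first-rounding}, I would round each speed to the nearest power of $1+\epsilon$ and normalize so the slowest machine has unit speed, again at $1+O(\epsilon)$ loss. After these transformations every machine speed lies in the finite set $\{(1+\epsilon)^0,(1+\epsilon)^1,\ldots,(1+\epsilon)^{N}\}$ with $N=\lceil\log_{1+\epsilon}(m/\epsilon)\rceil$, so the total number of distinct rounded speed vectors is at most $(N+1)^m$, a finite constant depending only on $m$ and $\epsilon$.

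The key observation is that the set of machines (together with their speeds) is part of the input and is fully known to the online algorithm from the very start; only the jobs arrive online. Hence an optimal online algorithm is free to behave differently for each speed vector, and $\rho^{*}=\max_{\vec{s}}\rho^{*}_{\vec{s}}$, where the maximum ranges over the finitely many rounded speed vectors and $\rho^{*}_{\vec{s}}$ denotes the optimal competitive ratio on instances with speed vector $\vec{s}$. For every fixed $\vec{s}$ in this finite collection, any two speeds differ by at most the constant factor $(1+\epsilon)^{N}$, so the preceding theorem for bounded-speed ratios applies: using the enumeration of simplified algorithm maps together with Lemma~\ref{lem:approx-comp-factor}, I would compute for each $\vec{s}$ an algorithm map $f_{\vec{s}}$ and a value $\rho'_{\vec{s}}$ with $\rho'_{\vec{s}}\leq\rho^{*}_{\vec{s}}\leq(1+O(\epsilon))\rho'_{\vec{s}}$.

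The final online algorithm reads the input speed vector, rounds it canonically to some $\vec{s}$, and then executes the precomputed map $f_{\vec{s}}$; its competitive ratio is at most $(1+O(\epsilon))\max_{\vec{s}}\rho'_{\vec{s}}$, and setting $\rho':=\max_{\vec{s}}\rho'_{\vec{s}}$ delivers a $(1+O(\epsilon))$-approximation of $\rho^{*}$ in the sense required by the definition of a competitive-ratio approximation scheme. The main technical point, which I would check carefully, is that Lemma~\ref{lem:Qm-pmtn-speed-bound}'s reduction to $s_{\max}\leq m/\epsilon$ interacts cleanly with the Section~\ref{sec:simplifications} simplifications and with the equivalence-class abstraction for each fixed $\vec{s}$; since that lemma only rescales/absorbs the tail of very fast machines and preserves the remaining structure, the whole framework of Section~\ref{sec:Abstraction-online-algos} goes through unchanged for each $\vec{s}$, and everything else is just bookkeeping across the finitely many speed vectors.
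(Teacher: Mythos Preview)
Your proposal is correct and follows essentially the same route as the paper: bound the speed ratio via Lemma~\ref{lem:Qm-pmtn-speed-bound}, geometrically round the speeds to obtain finitely many speed vectors, and then run the bounded-speed enumeration scheme separately for each vector, dispatching at run time on the (rounded) input speeds. One small slip: Lemma~\ref{lem:Qm-pmtn-speed-bound} absorbs the very \emph{slow} machines onto the fastest one (not the tail of very fast machines), but this does not affect your argument.
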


\paragraph{Preemptive Scheduling on Unrelated Machines.}

When each job~$j$ has its individual processing time~$p_{ij}$ on
machine~$i$, the problem complexity increases significantly. We
restrict to preemptive scheduling and show how to decrease the
complexity to apply our approximation scheme. The key is to bound the
range of the finite processing times for each job (which is
unfortunately not possible in the non-preemptive case,
see~\cite{afrati99} for a counterexample).

\begin{lem}\label{lem:Rm-range-processing-times}
  At~$\e$ loss we can %
  restrict to instances in which for each
  job~$j$ the ratio of any two of its finite processing times is bounded
  by~$m/\eps$.
\end{lem}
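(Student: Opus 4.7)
The plan is to define the modified instance $I'$ by forbidding, for each job $j$, every machine $i$ with $p_{ij} > (m/\eps)\cdot p_j^\ast$, where $p_j^\ast := \min_{i'} p_{i'j}$. Any feasible preemptive schedule for $I'$ is automatically feasible for $I$ with identical cost, so it suffices to show that $\opt(I') \leq (1+O(\eps))\,\opt(I)$. I would start from an optimal preemptive schedule $\sigma$ for $I$ and transform it into a schedule $\sigma'$ for $I'$ by rerouting, for every job $j$, the work done on forbidden machines onto a fastest machine $i_j^\ast$ attaining $p_j^\ast$.

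The key numerical observation is that if $\sigma$ processes $j$ on a forbidden machine $i$ for $q_{ij}$ time units, then the same fraction of $j$ can be reproduced on $i_j^\ast$ in only $q_{ij}\cdot p_j^\ast/p_{ij} < (\eps/m)\, q_{ij}$ time units. Summing the extra load demanded by all jobs $j$ with $i_j^\ast = k$ and $C_j(\sigma) \leq t$, the total is at most $(\eps/m)\cdot mt = \eps t$, since the total processing across all $m$ machines in $[0,t]$ is bounded by $mt$. To accommodate this, I would first apply a standard time-stretch by $1+\eps$ to $\sigma$, producing a schedule $\sigma_1$ in which every $j$ finishes by $(1+\eps)C_j(\sigma)$ and every machine carries an $\eps/(1+\eps)$ fraction of idle time in every prefix $[0,(1+\eps)t]$. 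The $\eps t$ bound on the extra load to be inserted on $k$ before $(1+\eps)t$ matches exactly the $\eps t$ idle time available on $k$ in $\sigma_1$, so the reinjected work fits without pushing any completion time beyond $(1+O(\eps))C_j(\sigma)$.

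The hard part will be inserting the extra work on $i_j^\ast$ while respecting the rule that each job is processed by at most one machine at any instant: during the idle slots of $i_j^\ast$ we commandeer, $j$ must simultaneously be idle on every other machine. I plan to address this with a companion bound on the free time of each individual job: after time-stretching, $j$ is being processed for the same total amount $\sum_i q_{ij} \leq C_j(\sigma)-r_j$ but within the wider window $[r_j,(1+\eps)C_j(\sigma)]$, so at least $\eps C_j(\sigma)$ units of its window are free, which comfortably dominates the at most $(\eps/m)\sum_{i\in F_j}q_{ij}$ units of new work we must place. A bipartite Hall-type feasibility argument on the joint (machine-idle, job-free) slots, equivalent to a fractional matching/LP-rounding in the preemptive unrelated-machines polytope, then gives a consistent placement; a second $1+\eps$ time-stretch applied uniformly serves as a safety net if the matching is tight. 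The resulting $\sigma'$ has objective at most $(1+O(\eps))\sum_j w_jC_j(\sigma) = (1+O(\eps))\,\opt(I)$, establishing the lemma.
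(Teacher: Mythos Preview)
Your core numerical observation---that work on a forbidden machine converts to less than an $\eps/m$-fraction on a fastest machine, so the total rerouted load on any destination machine within any prefix $[0,t]$ is at most $\eps t$---is exactly what the paper uses. The difference lies in how the idle time is created and, consequently, how the ``one machine at a time'' constraint is handled.

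The paper works inside the interval structure $I_x=[(1+\eps)^x,(1+\eps)^{x+1})$ already set up in Section~\ref{sec:simplifications}. It applies time-stretching twice in the interval-shift sense (push all work from $I_x$ into $I_{x+1}$), which carves out a contiguous idle block of length $\eps\,|I_x|$ at the end of every $I_x$ \emph{simultaneously on all machines}. Then, interval by interval, whatever each job $j$ did on its forbidden machines within $I_x$ is replayed on a fastest machine for $j$ inside that block. Because the block is idle on every machine at once, no job is active anywhere else during it, so the single-machine-per-job constraint is satisfied automatically---no matching argument is needed at all.

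Your continuous $(1+\eps)$-stretch yields the same total idle budget but scattered rather than synchronized, which is what forces you into the Hall-type feasibility step. That step is only sketched: verifying the machine-side and job-side marginals separately is not Hall's condition, and the appeal to ``the preemptive unrelated-machines polytope'' together with a second safety-net stretch is vague. It is plausibly completable, but it is also entirely avoidable once you use the interval-based stretch that the surrounding framework already provides.
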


The above lemma allows us to introduce the notion of \emph{job
  classes}. Two jobs~$j,j'$ are of the same class if they have finite
processing times on exactly the same machines and
~$p_{ij}/p_{ij'}=p_{i'j}/p_{i'j'}$ for any two such machines~$i$
and~$i'$.  For fixed $m$, the number of different job classes is
bounded by a constant $W$.

For each job class, we define large and small tasks similar to the
identical machine case: for each job~$j$ we define a
value~$\tilde{p}_{j}:=\max_{i}\{p_{ij}|p_{ij}<\infty\}$ and say a job
is \emph{large} if~
$\tilde{p}_{j}\ge \epsilon^{2}r_{j}/W$
and
\emph{small} otherwise. For each job class separately, we perform the
adjustments
of Section~\ref{sec:simplifications}. This yields the following lemma.

\begin{lem}
  At $1+O(\epsilon)$ loss we can restrict to instances and schedules such that
  \begin{compactitem}
  \item for each job class, the number of distinct values
    $\tilde{p}_j$ of jobs $j$ with the same release date is bounded by
    a constant,
  \item for each job class, the number of jobs with the same release
    date is bounded by a constant $\tilde{\Delta}$,
  \item a large job~$j$ is only preempted at integer multiples of
    $\tilde{p}_{j} \cdot \tilde{\mu}$ for some constant~$\tilde{\mu}$
    and small jobs are never preempted and finish in the same interval
    where they start.
  \end{compactitem}
\end{lem}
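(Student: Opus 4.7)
The plan is to mirror the three simplification lemmas of Section~\ref{sec:simplifications} --- Lemmas~\ref{lem:constant-num-proc-times}, \ref{lem:constant-num-jobs-released}, and~\ref{lem:large-job-atoms} --- but to apply each of them separately inside every one of the $W$ job classes. Because $W$ depends only on $m$ and $\eps$, compounding the per-class losses still yields a combined factor of $1+O(\eps)$.

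For the first bullet I would apply the adapted geometric rounding (as in Lemma~\ref{lem:first-rounding}), rounding every $\tilde{p}_j$ to a power of $1+\eps$. By Lemma~\ref{lem:Rm-range-processing-times}, the finite values $p_{ij}$ of a single job lie within a factor of $m/\eps$ of one another, so $\tilde{p}_j$ captures the common order of magnitude. For large jobs I have $\tilde{p}_j \ge \eps^2 r_j/W$ on one side and $\tilde{p}_j \le r_j/\eps$ from the release-date condition on the other, squeezing $\tilde{p}_j$ into a window of multiplicative width $W/\eps^3$; this forces at most $O(\log_{1+\eps}(W/\eps^3)) = O(1)$ admissible values per release date and class. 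For small jobs I would reuse the Smith-style grouping of Lemma~\ref{lem:constant-num-proc-times} inside each class, producing packs of constant aggregate length that can be treated as single jobs.

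For the second bullet I would time-stretch by a factor $1+\eps$, opening $\eps\cdot|I_{x-1}|$ units of slack in each interval $I_x$ on every machine. Working one class at a time, I discard the lightest jobs of each release-date batch beyond a constant threshold $\tilde\Delta$ and reinsert them into their safety nets; by the same averaging argument as in Lemma~\ref{lem:constant-num-jobs-released} the retained jobs carry almost all of the release weight of the class, so the loss per class is $1+O(\eps)$, and iterating over the $W$ classes gives the claimed bound. For the third bullet I would port Lemma~\ref{lem:large-job-atoms} directly: inside a fixed class every compatible $p_{ij}$ is a fixed rational multiple of $\tilde{p}_j$, so a uniform atomic quantum $\tilde{p}_j\cdot\tilde{\mu}$ is well defined on every machine on which the job can run. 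Rounding cumulative processing up to the next atomic boundary, and forcing small jobs to start and finish within a single interval, is paid for by the slack manufactured in the preceding time-stretch.

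The main obstacle is that the three modifications for different classes act on the same physical machines and must not, jointly, exceed machine capacity; simply composing the single-class constructions could create overbooking. The remedy is to perform $W$ successive time-stretches, one dedicated to each class, so that every class operates inside its own reservoir of free capacity on each machine. Since $W$ is a constant depending only on $m$ and $\eps$, the total loss from this iterated stretching is $(1+\eps)^{O(W)} = 1+O(\eps)$, which is absorbed into the claimed bound.
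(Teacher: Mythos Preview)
Your high-level plan --- apply the Section~\ref{sec:simplifications} simplifications separately within each of the $W$ job classes and absorb the compounded loss since $W=O_{\eps,m}(1)$ --- is exactly what the paper does; its entire argument is the single sentence ``For each job class separately, we perform the adjustments of Section~\ref{sec:simplifications}.'' Your observation about potential overbooking across classes and the cure via $W$ iterated time-stretches makes explicit a point the paper leaves implicit.

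One inaccuracy worth fixing: your mechanism for the second bullet is not what Lemma~\ref{lem:constant-num-jobs-released} actually does, and as written it is circular. The safety net of Lemma~\ref{lem:safety-net} is sized using the bound $p(S_x)+p(L_x)=O(R_x)$, which in turn relies on already having bounded the number of jobs per release date; so you cannot invoke the safety net to establish that bound. Moreover, Lemma~\ref{lem:constant-num-jobs-released} is not an ``averaging'' argument about release weight: for large jobs it uses that at most $m/\eps^2+m$ jobs per size can be touched within $I_x$ and pushes the remaining (lightest) ones to the \emph{next release date} at zero loss; for small jobs it caps $p(S_x)\le m\,|I_x|$ via Smith's rule and then packs tiny jobs. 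Replace ``reinsert into safety nets'' by ``shift to release date $R_{x+1}$'' and drop the release-weight justification, and your sketch goes through per class exactly as in the identical-machine case.
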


The above lemmas imply that both, the number of equivalence classes of
configurations and the number of equivalence classes for
interval-schedules are bounded by constants. Thus, we can apply
the enumeration scheme from Section~\ref{sec:Abstraction-online-algos}.

\begin{thm}
 For any $m\in \mathbb{N}$ we obtain a competitive-ratio approximation scheme for \abc{Rm}{r_j,
    pmtn}{\sum w_jC_j}\,.
\end{thm}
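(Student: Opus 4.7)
The plan is to follow the blueprint of Section~\ref{sec:Abstraction-online-algos} essentially verbatim, with the structural lemmas of this paragraph playing the roles of Lemmas~\ref{lem:constant-num-proc-times}, \ref{lem:constant-num-jobs-released}, and \ref{lem:large-job-atoms} in the identical-machines case. First I would verify that the simplifications from Section~\ref{sec:simplifications} carry over: the geometric rounding of release dates, weights, and all finite processing times; the partition into intervals $I_x$; and the irrelevant-jobs machinery go through unchanged because they only depend on the time axis and on weights, not on the machine environment. For the safety net, one exploits Lemma~\ref{lem:Rm-range-processing-times}: since for each job the finite processing times differ by at most a factor $m/\eps$, reserving a subinterval of length $O(\eps)\cdot R_{x+s-1}$ on every machine suffices to process all jobs released at $R_x$ on any machine that can execute them. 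The definitions of ``periods'', ``parts'', and the dominated-parts lemma use only the release weights and therefore transfer directly, yielding the analogue of Lemma~\ref{lem:split-the-instance-pmtn}.

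Second, I would set up the abstraction of configurations and interval-schedules, enriching each job record with its class identifier (there are only $W$ classes) and, for interval-schedules, the values $q_{ij}$ describing how much of job $j$ is processed on each machine $i$ during $I_x$. The equivalence relation of Definitions~\ref{def:equivalence-interval-schedule} and the subsequent one extends in the obvious way: the bijection $\sigma$ must additionally preserve job class, and the scaling by $(1+\eps)^{x'-x}$ now applies uniformly to every component $p_{ij}$ of the processing-time vector (which is legitimate because $\sigma$ preserves the class, so the ratios $p_{ij}/p_{ij'}$ are already matched).

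Third, I would verify that the set of equivalence classes is finite. By the displayed three-item lemma just before the theorem, within each release interval and each class there are only constantly many distinct $\tilde{p}_j$, only $\tilde{\Delta}$ jobs, and large-job preemption occurs at integer multiples of $\tilde{p}_j\cdot\tilde{\mu}$ while small jobs run to completion in their starting interval. Combining this with the bound $W$ on the number of classes and with the irrelevance/safety-net arguments, an analogue of Lemma~\ref{lem:const-numb-maps} gives that simplified algorithm maps on this domain form a finite set. The cycling argument of Lemma~\ref{lem:worst-instance-length} is unchanged and provides the constant $E$ bounding the span of worst-case instances.

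Finally, I would invoke the enumeration strategy of Lemma~\ref{lem:approx-comp-factor}: for each simplified algorithm map $f$, build iteratively the sets $\mathcal{C}_0^f,\dots,\mathcal{C}_E^f$, compute $r(C)$ for each realistic end-configuration, and output the map minimising the maximum ratio; this yields both the nearly optimal online algorithm and the $(1+\eps)$-estimate $\rho'$ required by the definition of a competitive-ratio approximation scheme. I expect the main technical obstacle to be the safety-net adjustment across multiple machines: one must make sure that a single reserved window works uniformly no matter which machine ultimately processes a given job, which is exactly what Lemma~\ref{lem:Rm-range-processing-times} enables; without the bound $m/\eps$ on the ratio of finite processing times, the size of a class-specific safety net could blow up and the constant-many-equivalence-classes property would fail, which is also why the non-preemptive unrelated-machines case is not covered.
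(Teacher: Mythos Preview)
Your proposal is correct and follows essentially the same approach as the paper: establish the $Rm$-specific structural simplifications (bounded processing-time ratios via Lemma~\ref{lem:Rm-range-processing-times}, finitely many job classes, and the three-item lemma bounding jobs, processing-time variety, and preemption granularity per class), then observe that the number of equivalence classes of configurations and interval-schedules is bounded by a constant so that the enumeration scheme of Section~\ref{sec:Abstraction-online-algos} applies verbatim. Your write-up in fact supplies more detail than the paper itself, which simply asserts that the lemmas imply the boundedness and then invokes the scheme; your explicit remarks on how the equivalence relation must preserve job class and why the non-preemptive case fails are accurate and consistent with the paper's reasoning.
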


\section{Randomized algorithms} \label{sec:randomized}

When algorithms are allowed to make random choices and we consider
expected values of schedules, we can restrict to instances which span
only constantly many periods. Assuming the simplifications of
Section~\ref{sec:simplifications}, this allows a restriction to
instances with a constant number of jobs.

\begin{lem}\label{lem:rand-const-periods}
  For randomized algorithms, at~$\e$ loss we can restrict to instances
  in which all jobs are released in at most~$\ee^{s}/\eps$
  consecutive periods.
\end{lem}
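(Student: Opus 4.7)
The plan has two stages. First, by Lemma~\ref{lem:split-the-instance-pmtn} I would reduce, at $1+O(\eps)$ loss, to instances consisting of a single part, so that every period $Q_i$ satisfies the significance condition $rw(Q_{i+1}) > \frac{\eps}{\ee^s}\sum_{\ell\le i} rw(Q_\ell)$. It then remains to bound the length of such a single part.

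For the second stage I would partition the periods of a single part into consecutive windows of $L := \ee^s/\eps$ periods each, treat each window $W$ as a stand-alone sub-instance, and show that the expected competitive ratio of $\alg$ on the full part is at most $(1+O(\eps))$ times $\max_{W} \Exp{\alg(W)}/\opt(W)$. The proof of this window decomposition rests on two ingredients. First, Lemma~\ref{lem:K-periods} guarantees that only the last $K$ periods of history influence the current objective contribution, so that splitting the schedule at window boundaries distorts both $\Exp{\alg(I)}$ and $\opt(I)$ by only a $1+O(\eps)$ factor. Second, the safety-net bound $rw(J) \le \sum_j w_j C_j \le \ee^s \cdot rw(J)$ localizes each window's contribution to within a constant factor of its release weight, so that the ratio on the full part becomes a convex combination of per-window ratios, which is upper-bounded by the maximum.

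The main obstacle, and the step that genuinely uses randomization, is handling the algorithm's state at the left endpoint of each window: on the stand-alone instance $W$ the algorithm starts fresh, while inside the part it inherits a configuration induced by the prefix. Working with expected values, the distribution of $\alg$'s \emph{relevant} configuration at the window boundary (in the sense of Lemma~\ref{lem:equal-confs}) can be reproduced by a randomized warm-up of $O(1)$ periods whose outcome distribution matches $\alg$'s. In the deterministic case one would have to hit a single specific configuration, which is typically infeasible from a short prefix; here, averaging over the warm-up's coin flips suffices. This warm-up extends the effective instance length by only a constant, absorbed into $\ee^s/\eps$ for small $\eps$, so the window witnessing the maximum ratio yields the desired sub-instance of at most $\ee^s/\eps$ consecutive release periods.
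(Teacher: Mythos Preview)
Your proposal has a genuine gap at the warm-up step. The distribution of $\alg$'s configuration at a window boundary is determined by the \emph{adversary's} prefix together with $\alg$'s coin flips along that prefix; a short synthetic warm-up cannot reproduce it, because you do not control the prefix. Your sentence ``averaging over the warm-up's coin flips suffices'' conflates randomness of the algorithm with randomness of the input: the adversary fixes the jobs before the window, and the resulting state distribution is a function of $\alg$'s randomness on \emph{that} history, not on a fabricated $O(1)$-period one. Invoking Lemma~\ref{lem:equal-confs} does not help either: it lets you forget irrelevant jobs, but the relevant configuration at the start of your warm-up still depends on the previous $\Gamma$ intervals, whose relevant configuration in turn depends on $\Gamma$ more, and the regress does not close in $O(1)$ periods. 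Also note that deterministically cutting at every $L$-th period (and clearing via safety nets) is exactly what the adversary can exploit by concentrating release weight on those periods, so your decomposition needs some device to make the cuts cheap; the warm-up does not provide one.

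The paper's proof uses randomization in a completely different place and is much simpler. Given $\alg$ for short instances, build $\alg'$ for arbitrary $\I$ as follows: pick $o\in\{0,\dots,M-1\}$ uniformly at random with $M=\lceil\ee^{s}/\eps\rceil$, move every job released in a period $Q_i$ with $i\equiv o\pmod M$ to its safety net, and run $\alg$ independently on each resulting block of at most $M$ periods. Since each period is cleared with probability $1/M$, linearity of expectation gives that the expected extra cost of all cleared jobs is at most $\ee^{s}\cdot\frac{1}{M}\sum_{Q} rw(Q)\le \eps\cdot rw(\I)\le \eps\cdot\opt(\I)$. There is no state-matching and no appeal to Lemma~\ref{lem:split-the-instance-pmtn}; the random offset is precisely what makes the cuts cheap in expectation, and is the reason the lemma is stated for randomized algorithms only.
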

\begin{proof}[Proof idea.] 
Beginning at a randomly chosen period~$Q_i$
  with~$i\in [0,M)$, with~$M:=\lceil\ee^{s}/\eps\rceil$, we move
  all jobs released in~$Q_{i+kM}$,~$k=0,1,\ldots$, to their safety
  net. At~$\e$ loss, this gives us a partition into parts, at the end
  of which no job remains, and we can treat each part independently.
\end{proof}

A randomized online algorithm can be viewed as a function that maps
every possible configuration~$C$ to a probability distribution of
interval-schedules which are feasible for~$C$. To apply our
algorithmic 
framework from the deterministic setting
that enumerates
all algorithm maps, we discretize the probability space and define
discretized algorithm maps. To this end, let~$\bar{\Gamma}$ denote the
maximum number of intervals in instances with at~most~$\ee^{s}/\eps$~periods.
\begin{defn}[Discretized algorithm maps]
  Let~$\bar{\C}$ be the set of configurations for intervals~$I_{x}$
  with~$x\le\bar{\Gamma}$, let~$\bar{\mathcal{S}}$ be the set of
  interval-schedules for intervals~$I_{x}$ with~$x\le\bar{\Gamma}$,
  and let~$\delta>0$. A \emph{$\delta$-discretized algorithm map }is a
  function~$f:\bar{\C}\times\bar{\mathcal{S}}\rightarrow[0,1]$ such
  that
  $f(C,S)=k\cdot \delta$ with some~$k\in\mathbb{N}_{0}$ for all
  ~$C\in\bar{\C}$ and~$S\in\bar{\mathcal{S}}$, and
  $\sum_{S\in\bar{\mathcal{S}}}f(C,S)=1$ for all ~$C\in\bar{\C}$.
\end{defn}
By restricting to $\delta$-discretized
algorithm maps we do not lose too much in the competitive ratio.

\begin{lem}\label{lem:discretize-prob}
  There is a value~$\delta>0$ such that for any (randomized) algorithm
  map~$f$ there is a~$\delta$-discretized randomized algorithm map~$g$
  with~$\rho_{g}\le\rho_{f}\ee$.
\end{lem}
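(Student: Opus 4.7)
The plan is to construct $g$ by rounding each probability $f(C,S)$ to a nearby multiple of $\delta$ and to bound the resulting change in expected objective value via a coupling argument. The finiteness statements of Lemma~\ref{lem:rand-const-periods} and the preceding simplifications are crucial: the sets $\bar{\C}$ and $\bar{\mathcal{S}}$ have cardinalities bounded by constants $M$ and $N$ depending only on $\eps$ and $m$, and the horizon $\bar{\Gamma}$ of relevant intervals is likewise constant.

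First, I would define $g$ as follows. For each $C \in \bar{\C}$, round every value $f(C,S)$ down to the nearest multiple of $\delta$ (leaving zero values untouched, so $g$ assigns mass zero to schedules that are infeasible for $C$), and then add the residual probability, which is at most $N\delta$, to a single fixed feasible schedule $S^*_C$. By construction $g$ is a $\delta$-discretized algorithm map, and $\sum_{S}|f(C,S)-g(C,S)| \leq 2N\delta$ for every $C$, i.e.\ the total variation between the one-step distributions is at most $N\delta$.

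Next, I would compare the distributions over full execution traces induced on any fixed instance $\I$. The execution of a randomized algorithm map on $\I$ is a stochastic process of at most $\bar{\Gamma}$ steps, the choice at step $i$ being governed by the conditional distribution $f(C_i,\cdot)$ or $g(C_i,\cdot)$ on the current configuration. Using the standard step-by-step maximal coupling of the two processes, the probability that they ever diverge is bounded by $\bar{\Gamma}\cdot N\delta$ via a union bound. Since by Lemma~\ref{lem:safety-net} every realizable schedule under our simplifications satisfies $C_j \leq r_j\ee^{s}$ for all jobs $j$, the objective value of any such schedule is at most $\ee^{s}\cdot rw(J)\leq \ee^{s}\cdot \opt(\I)$. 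Combining these observations,
\[
\bigl|\Exp{f(\I)}-\Exp{g(\I)}\bigr| \;\leq\; \bar{\Gamma}\,N\,\delta\cdot \ee^{s}\cdot \opt(\I).
\]
Choosing $\delta := \eps/(\bar{\Gamma}\,N\,\ee^{s})$ (or any smaller multiple of $\eps$) therefore yields $\Exp{g(\I)}\leq \Exp{f(\I)} + \eps\cdot \opt(\I) \leq (\rho_f+\eps)\opt(\I) \leq \ee\,\rho_f\,\opt(\I)$, using $\rho_f\geq 1$. Since this bound holds uniformly in $\I$, the desired inequality $\rho_g\leq \ee\,\rho_f$ follows.

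The main obstacle is that once the two coupled processes disagree at some step they typically evolve through entirely different configurations thereafter, so a naive step-by-step comparison of objective values breaks down. The maximal coupling sidesteps this by charging the entire cost discrepancy to the one-time event of divergence and exploiting the universal upper bound $\ee^{s}\cdot \opt(\I)$ on schedule values. One must also verify that the rounding preserves the support of each conditional distribution, so $g(C,\cdot)$ remains concentrated on interval-schedules feasible for $C$; this holds by construction because we only round the values $f(C,S)$ with $S$ feasible for $C$ and redistribute residuals to a feasible $S^*_C$.
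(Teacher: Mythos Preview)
Your argument is correct and takes a genuinely different route from the paper. The paper does not couple the two processes; instead it fixes a threshold $k\delta$, observes that any full schedule $S(\I)$ with $P_f(S(\I))\ge k\delta$ must have $f(C_x,S_x)\ge k\delta$ at every step, and hence $P_g(S(\I))\le\bigl(\tfrac{k+1}{k}\bigr)^{\bar\Gamma}P_f(S(\I))$. It then splits $\Exp{g(\I)}$ into a high-probability part, controlled multiplicatively by $\Exp{f(\I)}$, and a low-probability part, bounded by $|\mathcal{S}(\I)|\cdot k\delta\cdot\ee^{s}\cdot rw(\I)$; finally it tunes $k$ and $\delta$ separately. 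Your maximal-coupling argument is cleaner: it replaces the two-parameter choice and the case split by a single total-variation bound per step and a union bound over $\bar\Gamma$ steps, and it exploits exactly the same two structural facts (finitely many interval-schedules, and the universal upper bound $\ee^{s}\cdot\opt(\I)$ on any simplified schedule). The paper's approach gives slightly more explicit control over the multiplicative factor via the $\bigl(\tfrac{k+1}{k}\bigr)^{\bar\Gamma}$ term, while yours is shorter and avoids enumerating full schedules.

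One small point to tidy: the definition of a $\delta$-discretized map implicitly forces $1/\delta\in\mathbb{N}$ (otherwise $\sum_S g(C,S)=1$ cannot hold with all summands multiples of $\delta$), and you need this so that the residual you add to $S^*_C$ is itself a multiple of $\delta$. Your chosen value $\delta=\eps/(\bar\Gamma N\ee^{s})$ need not satisfy this, so you should instead pick $\delta$ to be the largest reciprocal of an integer not exceeding that quantity; the rest of the argument is unaffected.
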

\begin{proof}[Proof idea.] 
Let~$f$ be a randomized algorithm map and
  let~$\delta>0$ such that $1/\delta \in \mathbb{N}$. We define a
  new~$\delta$-discretized algorithm map~$g$. For each
  configuration~$C$ we define the values~$g(C,S)$ such
  that~$\left\lfloor f(C,S)/\delta \right\rfloor \cdot\delta\le
  g(C,S)\le\left\lceil f(C,S)/\delta\right\rceil \cdot\delta$
  and~$\sum_{S\in\mathcal{S}}g(C,S)=1$. To see
  that~$\rho_{g}\le\ee\rho_{f}$, consider an instance~$\I$ and a
  possible schedule~$S(\I)$ for $\I$. There is a probability~$p$
  that~$f$ outputs~$S(\I)$. We show that the schedules which have large
  probability~$p$ dominate~$\Exp{f(\I)}/\opt(\I)$. We show further
  that if $p$ is sufficiently large, the probability that~$g$
  produces~$S(\I)$ is in~$[p/(1+\eps),p(1+\eps))$, which implies the
  Lemma.
\end{proof}

Like in the deterministic case, we can now show that at $1+\eps$ loss
it suffices to restrict to \emph{simplified $\delta$-discretized
  algorithm maps} which treat equivalent configurations equivalently,
similar to Lemma~\ref{lem:equal-confs} (replacing~$\Gamma$
by~$\bar{\Gamma}$ in Definition~\ref{def:irrelevant_jobs} of the
irrelevant jobs).
As there are only constantly many of these maps, we enumerate all of
them, test each map for its competitive ratio, and select the best of
them.

\begin{thm} We obtain randomized competitive-ratio approximation schemes for \abc{Pm}{r_j,
    (pmtn)}{\sum w_jC_j},
      \abc{Qm}{r_j, pmtn}{\sum w_jC_j}, and
  \abc{Rm}{r_j, pmtn}{\sum w_jC_j}\, and for \abc{Qm}{r_j}{\sum
    w_jC_j} with a bounded range of speeds for any fixed $m\in\mathbb{N}$.
\end{thm}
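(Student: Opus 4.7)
The plan is to mirror the enumeration argument from the deterministic case (Theorems \ref{thm:online-approx-scheme-Pm-pmtn} and the analogues of Section~\ref{sec:Extensions}), but now performed over the finite space of simplified $\delta$-discretized algorithm maps. First, I would apply all simplifications accumulated so far. For each of the four machine environments, Section~\ref{sec:simplifications} (together with the corresponding adjustments in Section~\ref{sec:Extensions}, including the bound on $s_{\max}$ from Lemma~\ref{lem:Qm-pmtn-speed-bound} in the case $Qm\mid r_j,pmtn\mid \sum w_jC_j$ and the bounded-speed assumption for $Qm\mid r_j\mid \sum w_jC_j$) reduces the per-interval structure to constantly many equivalence classes. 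On top of this, Lemma~\ref{lem:rand-const-periods} restricts attention, at $1+\eps$ loss, to instances that span at most $M=\lceil(1+\eps)^s/\eps\rceil$ consecutive periods; combined with the bound~$\Delta$ (or its analogues for $Q$ and $R$) on jobs per interval, this bounds the total number of jobs in a worst-case instance by a constant.

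Second, I would apply Lemma~\ref{lem:discretize-prob} to pay another $1+\eps$ factor in the competitive ratio in exchange for restricting attention to $\delta$-discretized algorithm maps. The domain $\bar\C$ and range $\bar{\mathcal S}$ of such a map consist of only constantly many equivalence classes, and each value $f(C,S)$ is forced into the finite set $\{0,\delta,2\delta,\ldots,1\}$, so in principle the total number of $\delta$-discretized algorithm maps is already finite once we factor through equivalence on the configuration side.

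Third, to formalise the ``factoring through equivalence'' step, I would introduce \emph{simplified $\delta$-discretized algorithm maps} as those that, for equivalent configurations $C\sim C'$, induce matching probability distributions over equivalent interval-schedules: there exists a bijection between the supports of $f(C,\cdot)$ and $f(C',\cdot)$ that respects the equivalence of Definition~\ref{def:equivalence-interval-schedule} and preserves probabilities. The randomized analogue of Lemma~\ref{lem:equal-confs} then says that, at $1+O(\eps)$ loss, one can restrict to simplified $\delta$-discretized algorithm maps. The argument reuses the representative-based construction in the proof of Lemma~\ref{lem:equal-confs}: pick a realistic representative per equivalence class of $\bar\C$, copy its output distribution to every equivalent configuration, and use the irrelevant-jobs bound (Lemma~\ref{lem:irrelevant-jobs}, with $\Gamma$ replaced by $\bar\Gamma$) together with linearity of expectation to argue that the expected competitive ratio grows by at most $1+O(\eps)$. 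The main obstacle is this step: care is needed because randomization links different branches of the computation tree, so one has to verify that mapping each branch to an equivalent one preserves joint distributions and does not introduce correlations that corrupt the per-branch $1+O(\eps)$ bookkeeping.

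Fourth, having reduced to a finite set $\mathcal F$ of simplified $\delta$-discretized algorithm maps, I would compute, for each $f\in\mathcal F$, an approximation of $\rho_f$ to within $1+\eps$. Since worst-case instances have constantly many jobs and span at most $\bar\Gamma$ intervals, there are only finitely many instance equivalence classes to test (cf.\ Lemma~\ref{lem:worst-instance-length}), and for every such instance the expectation $\Exp{f(\I)}$ is a finite sum of rational probabilities times computable schedule values, so it can be evaluated exactly. The output is the map $f^\star\in\mathcal F$ minimising this approximate competitive ratio together with the corresponding value $\rho'$. Combining the $1+\eps$ losses from the simplifications, Lemma~\ref{lem:rand-const-periods}, Lemma~\ref{lem:discretize-prob}, the randomized analogue of Lemma~\ref{lem:equal-confs}, and the approximation of $\rho_{f^\star}$, and readjusting $\eps$, yields the claimed competitive-ratio approximation schemes in all four settings.
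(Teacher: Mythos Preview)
Your proposal is correct and follows essentially the same approach as the paper: restrict to constant-length instances via Lemma~\ref{lem:rand-const-periods}, discretize the probabilities via Lemma~\ref{lem:discretize-prob}, pass to simplified $\delta$-discretized maps by the randomized analogue of Lemma~\ref{lem:equal-confs} (with $\Gamma$ replaced by $\bar\Gamma$), and then enumerate and evaluate the resulting finite family. The only minor remark is that in the randomized setting Lemma~\ref{lem:worst-instance-length} is not actually needed, since Lemma~\ref{lem:rand-const-periods} already bounds the horizon to $\bar\Gamma$ intervals directly; your step four works for that simpler reason.
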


\section{General Min-Sum Objectives and Makespan}
\label{sec:other-objectives}
\vspace{-0.5ex}

In this section we briefly argue how the techniques presented above for minimizing
$\sum_{j}w_{j}C_{j}$ can be used for constructing online algorithm
schemes for other scheduling problems with jobs arriving online over time, namely for minimizing $\sum_{j\in J}w_jf(C_{j})$, with~$f(x)=k\cdot x^{\alpha}$ and constant~$\alpha\geq 1, k>0$, and the makespan.

Since monomial functions~$f$ have the property that~$f((\e) C_j)\leq (\eo) f(C_j)$,
the arguments in previous sections apply almost directly to the generalized min-sum objective.
In each step of simplification and abstraction, we have an increased loss in the performance guarantee, but it is covered~by~the~$O(\eps)$-term.

Consider now the makespan objective. The {\em simplifications within intervals} of Section~\ref{sec:simplifications} are based on arguing on completion times of individual jobs, and clearly hold also for the last job. Thus, they directly apply to makespan minimization. We simplify the definition of {\em irrelevant history} by omitting the partition of the instance into parts and we define a job $j$ to be \emph{irrelevant at time $R_{x}$ }if $r_{j}\le R_{x-s}$.

Based on this definition, we define equivalence
classes of configurations~(ignoring weights and previous interval-schedules)~and again restrict
to algorithm maps~$f$ with $f(C)\sim f(C')$ for any two equivalent
configurations~(Lemma~3.6).
Lemmas 3.7--3.9 then hold accordingly and yield a competitive-ratio approximation scheme. 
Finally, the adjustments of Sections~\ref{sec:Extensions} and~\ref{sec:randomized} can be made 
accordingly. Without the partition of the instance into parts, 
  this even becomes easier in the non-preemptive setting. 
  Thus, we can state the following result.
\begin{thm}
    For any $m \in \mathbb{N}$ there are deterministic and randomized competitive-ratio approximation schemes for preemptive and non-preemptive scheduling, on $m$ identical, related~(with bounded speed ratio when non-preemptive), and unrelated machines~(only preemptive) for the objectives of minimizing~$C_{\max}$ and minimizing~$\sum_{j\in J}w_j f(C_j)$, with~$f(x)=k\cdot x^{\alpha}$ and constant~$\alpha\geq 1,k>0$.
\end{thm}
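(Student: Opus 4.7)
The plan is to rebuild the entire deterministic/randomized pipeline of Sections~\ref{sec:simplifications}--\ref{sec:randomized} twice: once for the monomial min-sum objective, and once for the makespan objective, in each case verifying that only $1+O(\eps)$ is lost at each step and that the enumeration over simplified algorithm maps still runs over a finite set. I would not re-prove each lemma from scratch, but instead check, step by step, which of the structural properties used in the $\sum w_j C_j$ argument still hold and what needs to be adjusted.

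For $\sum_{j} w_j f(C_j)$ with $f(x)=k x^\alpha$, the crucial observation is the contraction-like identity $f((1+\eps)x) \le (1+O(\eps))\,f(x)$ (where the hidden constant depends only on $\alpha$). First I would re-examine the geometric rounding of release dates and processing times, the time-stretching arguments, the safety-net construction, and the splitting into parts (Lemmas~\ref{lem:first-rounding}--\ref{lem:split-the-instance-pmtn}), observing that each time a completion time is shifted by a factor $1+\eps$, the contribution $w_j f(C_j)$ grows by only $1+O(\eps)$, so the loss remains inside the $O(\eps)$ slack. The definitions of release weight, relevant/irrelevant jobs, and dominated jobs would be restated with $r_j w_j$ replaced by $f(r_j)\, w_j$, and the exponential growth estimate of Lemma~\ref{lem:K-periods} would be adapted (only the constant $K$ changes). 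The main obstacle is the equivalence relation on configurations and interval-schedules: the rescaling $r_{\sigma(j)} = r_j (1+\eps)^{x'-x}$ causes $f$-values to scale by $(1+\eps)^{\alpha(x'-x)}$ rather than $(1+\eps)^{x'-x}$, so I would modify the definition to track the compound scaling factor $(1+\eps)^{\alpha(x'-x)}(1+\eps)^y$ on the contribution $w_j f(C_j)$; the finiteness of the equivalence-class structure is preserved because we still only scale by integer powers of $1+\eps$. Once the equivalence classes are redefined, Lemmas~\ref{lem:equal-confs}--\ref{lem:approx-comp-factor} carry over verbatim, giving a deterministic competitive-ratio approximation scheme for \abc{Pm}{r_j,(pmtn)}{\sum w_j f(C_j)}.

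For $C_{\max}$, I would exploit that only the last completion time matters, so the partition into parts driven by release weights becomes unnecessary and the notion of job dominance collapses. I would define a job $j$ to be irrelevant at time $R_x$ simply whenever $r_j \le R_{x-s}$, since by the safety net such a job is guaranteed to be completed well before the makespan of any later-released job can be realised, and its completion time can influence $C_{\max}$ only through a multiplicative $1+O(\eps)$. Equivalence of configurations and interval-schedules is defined ignoring weights and the pre-history interval-schedules (keeping only the $\sigma$-rescaling of release dates, processing times, and finished parts). Then the analogues of Lemmas~\ref{lem:equal-confs}, \ref{lem:const-numb-maps}, \ref{lem:worst-instance-length}, \ref{lem:approx-comp-factor} follow by the same cycling/look-up argument, since the domain and the target of an algorithm map still have only constantly many equivalence classes.

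Finally I would transport these two specialised pipelines through the machine-environment extensions of Section~\ref{sec:Extensions} and the randomised framework of Section~\ref{sec:randomized}. The non-preemptive adjustments (Lemma~\ref{lem:split-the-instance-non-pmtn} and the pseudo-preemption of large jobs) apply unchanged; for related machines the speed-boundedness or the bound $s_{\max} \le m/\eps$ of Lemma~\ref{lem:Qm-pmtn-speed-bound} is used exactly as before; for unrelated machines the class decomposition of Lemma~\ref{lem:Rm-range-processing-times} still gives only constantly many job classes. The randomisation step, i.e.\ Lemma~\ref{lem:rand-const-periods} and the $\delta$-discretisation of Lemma~\ref{lem:discretize-prob}, uses only the multiplicative $1+O(\eps)$ robustness of the objective, which holds for $\sum w_j f(C_j)$ and $C_{\max}$ alike; in the makespan case Lemma~\ref{lem:rand-const-periods} in fact simplifies because we can truncate the instance at the first safety-net gap after a random shift. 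Combining everything, the enumeration of simplified (and, in the randomised case, $\delta$-discretised) algorithm maps yields the required competitive-ratio approximation schemes for all stated combinations of machine environment and objective. I expect the re-definition of the equivalence relation under the $\alpha$-power scaling to be the single most delicate point; every other change is essentially bookkeeping of $O(\eps)$ losses.
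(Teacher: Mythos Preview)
Your proposal is correct and follows essentially the same route as the paper: the key identity $f((1+\eps)x)\le(1+O(\eps))f(x)$ for monomials, the simplified irrelevance criterion $r_j\le R_{x-s}$ for $C_{\max}$ together with dropping the partition into parts, and then porting Sections~\ref{sec:Extensions}--\ref{sec:randomized} verbatim. Two minor remarks: your worry about redefining the equivalence relation for the $\alpha$-power scaling is unnecessary, since under the existing relation both $val_C(J_{\Rel}(C))$ and $\opt(J_{\Rel}(C))$ scale by the same factor $(1+\eps)^{\alpha(x'-x)+y}$ and the ratio $r(C)$ is unchanged; and for $C_{\max}$ the paper actually drops Lemma~\ref{lem:split-the-instance-non-pmtn} in the non-preemptive case (no parts, so no first-job domination is needed), which makes that branch simpler than you suggest.
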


\section{Conclusions}
We introduce the concept of competitive-ratio approximation schemes
that compute online algorithms with a competitive ratio arbitrarily
close to the best possible competitive ratio. We provide such schemes
for various problem variants of scheduling jobs online to minimize
the weighted sum of completion times, arbitrary monomial cost
functions, and the makespan.

The techniques derived in this paper provide a new and interesting view on the
behavior of online algorithms. We believe that they contribute to the
understanding of such algorithms and possibly open a new line of
research in which they yield even further insights. In particular, it
seems promising that our methods could also be applied to other online
problems than scheduling jobs arriving online over time.

\newpage

\bibliography{scheduling}
\bibliographystyle{abbrv}
\newpage
\appendix

\section{Related work}\label{app:related_work}

\paragraph{Sum of weighted completion times.} The offline variants of nearly all
problems under consideration are \npc-hard.  This is true already for the special case of
a single
machine~\cite{labetoulle84,lenstra77}. Two restricted
 variants can be solved optimally in polynomial time. {\em Smith's
   Rule} solves the
 problem~\abc{1}{}{\sum w_jC_j} to optimality by scheduling jobs in non-increasing order
 of weight-to-processing-time ratios~\cite{smith56}. Furthermore,
 scheduling by shortest remaining processing times yields an optimal
 schedule for~\abc{1}{r_j,pmtn}{\sum w_jC_j}~\cite{schrage68}.
However, for the other settings polynomial-time approximation schemes have been
developed~\cite{afrati99}, even when the number of machines is part of
the input.

The online setting has been a highly active field of research in the
past fifteen years. A whole sequence of papers appeared introducing
new algorithms, new relaxations and analytical techniques that
decreased the gaps between lower and upper bounds on the optimal
competitive
ratio~\cite{goemans02,schulzS02,hallSSW97,sitters10-ipco,sitters10-orl,andersonP04,
  chekuriMNS01,hoogeveenV96,correaW09,goemans97,megowS04,megow06-diss,chungNS10,
  schulzS02-unrelated,liuL09,luSS03,stougieV02,phillipsST98,chakrabartiPSSSW96,
  seiden00,epsteinvS01}.  We do not intend to give a detailed history
of developments; instead, we refer the reader to %
overviews, e.g., in~\cite{megow06-diss,correaW09}.
Table~\ref{tab:results} summarizes the current
state-of-the-art on best known lower and upper bounds on the optimal competitive ratios.
Interestingly, despite the considerable effort, optimal competitive
ratios are known only for~\abc{1}{r_j, pmtn}{\sum
  C_j}~\cite{schrage68} and for non-preemptive single-machine
scheduling~\cite{andersonP04,stougieV02,hoogeveenV96,chekuriMNS01}.
In all other scheduling settings remain unsatisfactory, even quite significant gaps.

\begin{table}[htb]
{\footnotesize
\begin{tabularx}{\linewidth}{l@{\hspace*{9ex}}X@{\hspace*{9ex}}X@{\hspace*{9ex}}X@{\hspace*{9ex}}X}
  \toprule
  & \multicolumn{2}{c}{deterministic\hspace*{1.5cm}} & \multicolumn{2}{c}{randomized\hspace*{.5cm}}\\
  problem & lower bounds & upper bounds & lower bounds & upper bounds\\
  \midrule
  \abc{1}{r_j, pmtn}{\sum C_j} & $1$ & $1$ \hfill \cite{schrage68} & $1$ & $1$ \hfill \cite{schrage68}\\
  \abc{1}{r_j, pmtn}{\sum w_jC_j} & $1.073$ \hfill \cite{epsteinvS01} & $1.566$ \hfill \cite{sitters10-orl} & $1.038$ \hfill \cite{epsteinvS01} & $4/3$ \hfill \cite{schulzS02}\\
  \abc{1}{r_j}{\sum C_j} & $2$ \hfill \cite{hoogeveenV96} & $2$ \hfill \cite{hoogeveenV96} & $\frac{e}{e-1}\approx 1.58\ \ $ \hfill \cite{stougieV02} & $\frac{e}{e-1}$ \hfill \cite{chekuriMNS01}\\
  \abc{1}{r_j}{\sum w_jC_j} & $2$ \hfill \cite{hoogeveenV96} & $2$ \hfill \cite{andersonP04} & $\frac{e}{e-1}$ \hfill \cite{stougieV02} & $1.686$ \hfill \cite{goemans02}\\
  \midrule
  \abc{P}{r_j, pmtn}{\sum C_j} & $1.047$ \hfill \cite{hoogeveenV96}& $5/4$ \hfill \cite{sitters10-ipco} & $1$ & $5/4$ \hfill \cite{sitters10-ipco}\\
  \abc{P}{r_j, pmtn}{\sum w_jC_j} &  $1.047$ \hfill  \cite{vestjens97diss} & $1.791$ \hfill \cite{sitters10-ipco} & $1$ & $1.791$ \hfill \cite{sitters10-ipco}\\
  \abc{P}{r_j}{\sum w_jC_j} & $1.309$ \footnotemark[1]
  \hfill \cite{vestjens97diss} & $1.791$ \hfill \cite{sitters10-ipco} & $1.157$ \hfill \cite{seiden00} & $1.791$ \hfill \cite{sitters10-ipco}\\
  \midrule
  \abc{R}{r_j}{\sum w_jC_j} & $1.309$ \hfill \cite{vestjens97diss}  & $8$ \hfill \cite{hallSSW97} & $1.157$ \hfill \cite{seiden00} & $8$ \hfill \cite{hallSSW97}\\
  \bottomrule
\end{tabularx}
\caption{\small %
Lower and upper bounds on the %
competitive ratio for deterministic and randomized online algorithms.}
\label{tab:results}
}
\end{table}

\footnotetext[1]{For $m=1,2,3,4,5, \dots 100$ the lower bound is
   $LB=2, 1.520, 1.414, 1.373, 1.364, \dots 1.312$.}

\paragraph{More general min-sum (completion time) objectives.} Recently, there has been an increasing interest in studying generalized cost functions. So far, this research has focussed on offline problems. The most general case is when each job may have its individual non-decreasing cost function~$f_j$. For scheduling on a single machine with release dates and preemption,~$1|r_j,pmtn|\sum f_j$, Bansal and Pruhs~\cite{bansalP10} gave a randomized~$\OO(\log \log (nP))$-approximation, where~$P=\max_{j\in J}p_j$. In the case that all jobs have identical release dates, the approximation factor reduces to~$16$. Cheung and Shmoys~\cite{cheungS11} improved this latter result and gave a deterministic~$(2+\eps)$-approximation. This result applies also on a machine of varying speed.

The more restricted problem with a global cost function~$1|r_j,pmtn|\sum w_jf(C_j)$ has been studied by Epstein~et al.~\cite{epsteinLMMMSS12} in the context of universal solutions. They gave an algorithm that produces for any job instance one scheduling solution that is a~$(4+\eps)$-approximation for any cost function and even under unreliable machine behavior. H{\"o}hn and Jacobs~\cite{hoehnJ12} studied the same problem without release dates. They analyzed the performance of Smith's Rule~\cite{smith56} and gave tight approximation guarantees for all convex and all concave functions~$f$.

\paragraph{Makespan.} The online makespan minimization problem has been extensively studied in a different online paradigm where jobs arrive one by one (see \cite{FleischerWahl2000,RudinChandrasekaran2003} and references therein). Our model, in which jobs arrive online over time, is much less studied. In the identical parallel machine environment, Chen and Vestjens~\cite{chenV97} give nearly tight bounds on the optimal competitive ratio,~$1.347 \leq \rho^* \leq 3/2$, using a natural online variant of the well-known largest processing time first algorithm.

In the offline setting, polynomial time approximation schemes are known for identical \cite{HochbaumShmoys87} and uniform machines \cite{HochbaumShmoys88}. For unrelated machines, the problem is NP-hard to approximate with a better ratio than~$3/2$ and a $2$-approximation is known \cite{LenstraShmoysTardos90}. If the number of machines is bounded by a constant there is a~PTAS~\cite{LenstraShmoysTardos90}.
\section{Proofs of Section~\ref{sec:simplifications}} %

First, we will show that the number of distinct processing
times of large jobs in each interval can be upper-bounded by a constant.
To achieve this, we partition the jobs of an instance into large and small jobs.
With respect to a release date $R_{x}$ we say that a job $j$ with $r_{j}=R_{x}$
is \emph{large} if $p_{j}\ge\epsilon^{2}I_{x}=\epsilon^{3}R_{x}$
and \emph{small} otherwise.
Abusing notation, we refer to~$|I_{x}|$ also by~$I_{x}$. Note that $I_x=\eps\cdot(1+\eps)^x$.
\begin{lem}
\label{lem:sizes-large-jobs}
The number of distinct processing times
of jobs in each set $L_x$ is bounded by $4\log_{\e}\frac{1}{\varepsilon}$.\end{lem}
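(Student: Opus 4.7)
The plan is to show that large jobs in $L_x$ have processing times confined to a bounded multiplicative range, and then count the powers of $1+\eps$ in this range.

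First, I would establish the lower bound on $p_j$ directly from the definition: every $j\in L_x$ satisfies $r_j = R_x$ and, by the definition of ``large'' given just before the lemma statement, $p_j \geq \eps^2 I_x = \eps^3 R_x$. Next, I would invoke Lemma~\ref{lem:first-rounding}, which after the geometric rounding guarantees that $r_j \geq \eps\cdot p_j$ for every job $j$. Combined with $r_j = R_x$, this yields the upper bound $p_j \leq R_x/\eps$. Hence every processing time of a job in $L_x$ lies in the interval $[\eps^3 R_x,\, R_x/\eps]$, whose ratio of endpoints is $1/\eps^4$.

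The final step is counting: Lemma~\ref{lem:first-rounding} also ensures that all processing times are powers of $1+\eps$. The number of powers of $1+\eps$ contained in an interval whose endpoint ratio is $1/\eps^4$ is at most $\log_{1+\eps}(1/\eps^4) = 4\log_{1+\eps}(1/\eps)$, which gives the claimed bound.

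There is no real obstacle here; the argument is a direct two-sided bound on $p_j$ followed by a counting of rounded values. The only mild subtlety is verifying the bookkeeping between $I_x$ and $R_x$ (namely $I_x = \eps R_x$, up to the comment in the preceding paragraph) so that the large/small threshold $\eps^2 I_x$ translates cleanly into $\eps^3 R_x$ and matches the exponents in the final count; once that is in place, the conclusion is immediate.
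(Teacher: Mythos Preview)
Your proposal is correct and matches the paper's proof essentially line for line: the paper also bounds $p_j$ between $\eps^3 R_x$ and $R_x/\eps$ (using the definition of large and Lemma~\ref{lem:first-rounding}, respectively) and then counts the powers of $1+\eps$ in that range to obtain $4\log_{1+\eps}(1/\eps)$.
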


\begin{proof}%
  For any $j\in L_{x}$ the processing time $p_{j}$ is a power of
  $1+\epsilon$, say $p_{j}=(1+\epsilon)^{y}$. Hence, we have that
  $\varepsilon^{3}\ee^{x}<\ee^{y}\le\frac{1}{\varepsilon}\ee^{x}$.
  The number of integers $y$ which satisfy the above inequalities is
  bounded by $4\log_{\e}\frac{1}{\varepsilon}$, which yields the
  constant claimed in the lemma.
\end{proof}

Furthermore, we can bound the number of large jobs of each job size
which are released at the same time.

\begin{lem}
\label{lem:number-large-jobs}Without loss, we can restrict to instances with $|L_{x}|\le(m/\epsilon^{2}+m)4\log_{\e}\frac{1}{\varepsilon}$
for each set~$L_{x}$.\end{lem}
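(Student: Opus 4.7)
The plan is to prove a per-size bound---at most $m/\eps^{2}+m$ jobs of each distinct processing time appearing in $L_x$---and combine it with Lemma~\ref{lem:sizes-large-jobs}, which caps the number of such sizes by $4\log_{\e}(1/\eps)$. Fix a release date $R_x$ and a large processing time $p\in[\eps^{3}R_x,\,R_x/\eps]$, and let $L_x^p\subseteq L_x$ denote the jobs of size $p$ released at $R_x$. The main observation is a work-volume bound: at most $mt$ units of work can be processed on $m$ machines in the interval $[R_x,R_x+t]$, so the $k$-th smallest completion time among the jobs of $L_x^p$ satisfies $C_{(k)}\ge R_x+kp/m$. For $k\ge m/\eps^{2}+1$, using $p\ge\eps^{3}R_x$, this gives $C_{(k)}\ge R_x+\eps R_x=R_{x+1}$. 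Hence in every feasible schedule, all but the first $m/\eps^{2}$ jobs of $L_x^p$ finish no earlier than $R_{x+1}$.

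Given an instance $\I$ violating the target bound, I would iterate over $x$ from smallest to largest and over each size $p$: whenever $|L_x^p|>m/\eps^{2}+m$, choose an ``excess'' subset $E\subseteq L_x^p$ of the appropriate size and redefine the release dates of the jobs in $E$ to $R_{x+1}$, producing a modified instance $\I'$. Since the jobs of $L_x^p$ are interchangeable in $\I$, one may match $E$ with whichever jobs happen to finish latest in any given optimal schedule for $\I$; by the completion-time bound these complete at or after $R_{x+1}$, so the schedule remains feasible after the deferral, yielding $\opt(\I')\le\opt(\I)$. Conversely, any schedule feasible for $\I'$ is also feasible for $\I$ (release dates only get earlier), so $\opt(\I)=\opt(\I')$. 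The deferral is also implementable online: an algorithm designed for the restricted instance class is turned into one for arbitrary instances by buffering the excess same-size jobs at each release date and revealing them at $R_{x+1}$, and the resulting schedule is feasible for the original instance with the same cost, preserving the competitive ratio exactly.

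The main obstacle is to ensure the work-volume bound $C_{(k)}\ge R_x+kp/m$ remains valid when the jobs of $L_x^p$ are interleaved with jobs of other sizes or other release dates. This will be handled by a monotonicity argument: any machine-time spent after $R_x$ on jobs outside $L_x^p$ can only delay the completions of jobs in $L_x^p$, so the lower bound obtained by considering $L_x^p$ in isolation is a fortiori a valid lower bound inside any full schedule. A minor technical point is the $+m$ slack in the target bound, which absorbs the integer gap between $kp/m$ and $\lceil k/m\rceil\cdot p$ and provides headroom for the non-preemptive case, where an extra $m$ jobs may be ``in progress'' on the $m$ machines but not yet complete.
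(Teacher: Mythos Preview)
Your plan mirrors the paper's: bound the jobs of each fixed size $p$ in $L_x$ (the paper writes $L_{x,p}$ for your $L_x^p$) and multiply by the number of sizes from Lemma~\ref{lem:sizes-large-jobs}. The problematic step is ``these complete at or after $R_{x+1}$, so the schedule remains feasible after the deferral.'' Completing no earlier than $R_{x+1}$ does not mean a job was untouched before $R_{x+1}$; in the preemptive setting arbitrarily many jobs of $L_{x,p}$ may have received partial processing during $I_x$, and once their release dates are shifted to $R_{x+1}$ that processing becomes illegal. Your bound $C_{(k)}\ge R_x+kp/m$ controls finish times, not start times, so it does not by itself certify that an optimal schedule for $\I$ is feasible for $\I'$. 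The paper closes this by first observing, via an exchange among identical-type jobs, that without loss at most $m$ jobs of $L_{x,p}$ are partially processed at any moment; combined with the volume bound this shows that at most $m/\epsilon^{2}+m$ jobs of $L_{x,p}$ are \emph{touched} within $I_x$, and the untouched ones can then be deferred with literally no change to the schedule. That exchange is the real content of the ``$+m$'', in the preemptive case just as in the non-preemptive one you mention at the end---it is not an integer-rounding artifact.

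A smaller point: the jobs in $L_{x,p}$ share $r_j$ and $p_j$ but in general have different weights, so they are interchangeable only with respect to feasibility, not cost. You therefore cannot freely ``match $E$ with whichever jobs happen to finish latest'' after the fact; $E$ must be fixed when you modify the instance. The paper takes $E$ to consist of the lowest-weight jobs of $L_{x,p}$ and uses a further exchange (heavier jobs finish no later among identical types) to argue that in some optimum these are precisely the untouched ones.
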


\begin{proof}%
  Let $L_{x,p}\subseteq L_{x}$ denote the set of jobs in $L_{x}$ with
  processing time $p$. By an exchange argument, one can restrict to schedules
  such that at each point in time at most $m$ jobs in
  $L_{x,p}$ are partially (i.e., to some extent but not completely)
  processed.  Since $p_{j}\ge\epsilon^{2}I_{x}$ for each job $j\in
  L_{x}$, at most $m/\epsilon^{2}+m$ jobs in $L_{x,p}$ are touched
  within $I_{x}$.  By an exchange argument we can assume that they are
  the $m/\epsilon^{2}+m$ jobs with the largest weight in $L_{x,p}$.
  Hence, the release date of all other jobs in $L_{x,p}$ can be moved
  to $R_{x+1}$ without any cost. Since due to
  Lemma~\ref{lem:sizes-large-jobs} there are at most
  $4\log_{\e}\frac{1}{\varepsilon}$ distinct processing times $p$ of
  large jobs in $L_{x}$, the claim follows.
\end{proof}

We now just need to take care of the small jobs.
Denote by $w_{j}/p_{j}$ the \emph{Smith's ratio} of a job $j$.
An ordering where the jobs are ordered non-increasingly by their Smith's
ratios is an ordering according to \emph{Smith's rule}. The next lemma
shows that scheduling the small jobs according to Smith's Rule is
almost optimal and small jobs do not even need to be preempted or
to cross intervals. For a set of jobs $J$ we define $p(J):=\sum_{j\in J}p_{j}$.

\begin{lem}
\label{lem:smiths-rule}At $\e$ loss we can restrict to schedules such that
for each interval $I_{x}$ the small jobs scheduled within this interval
are chosen by Smith's Rule from the set $\bigcup_{x'\le x}S_{x'}$, no small job is preempted,
    any small job finishes in the same interval where it started and $p(S_{\!x})\le m\cdot I_{x}$ for each interval $I_{x}$.
\end{lem}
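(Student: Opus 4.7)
The plan is to establish the four simultaneous properties—Smith's-rule selection, no preemption, within-interval completion, and $p(S_x)\le m\cdot I_x$—by a sequence of local transformations of the instance and of candidate schedules, each losing only a factor $1+O(\varepsilon)$.

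First I would handle the instance-level bound $p(S_x)\le m\cdot I_x$. If $p(S_x)$ exceeds the machine capacity $m\cdot I_x$ of interval $I_x$, then no feasible schedule can complete all of $S_x$ during $I_x$ regardless. I would therefore order the jobs of $S_x$ by non-increasing Smith's ratio and greedily keep a prefix whose total processing time is at most $m\cdot I_x$, postponing the release date of each remaining job from $R_x$ to $R_{x+1}=(1+\varepsilon)R_x$. Since in any schedule the completion time of a released job is at least its release date, this transformation inflates the offline optimum (and any schedule's value) by at most a factor $1+\varepsilon$. Iterating over all $x$ yields an instance in which $p(S_x)\le m\cdot I_x$ for every interval; the geometric series over the cumulative relative losses stays bounded by $1+O(\varepsilon)$.

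Next I would enforce Smith's rule for the selection of small jobs scheduled in each interval by a standard exchange argument. Fix any near-optimal schedule. If at interval $I_x$ a machine runs a small job $j'\in S_{x'}$ while some unfinished small job $j\in S_{x''}$, with $x'',x'\le x$ and $w_j/p_j > w_{j'}/p_{j'}$, is available but not scheduled, I swap $j$ into the time slot of $j'$ (extending or truncating the slot by the $p_j-p_{j'}$ difference, which is absorbed by the $\varepsilon\cdot I_x$ free space obtained from time-stretching in the previous lemmas). Because every job in play is available throughout $I_x$ and the usual single-machine Smith-swap calculation applies per machine, this reordering does not increase $\sum w_jC_j$. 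Repeating the swap finitely many times per interval yields the Smith's-rule property.

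Finally I would remove preemptions and cross-interval processing of small jobs. Any small job scheduled in $I_x$ has $p_j\le \varepsilon^3 R_x=\varepsilon^2 I_x$, so its total processing fits easily inside the $\varepsilon\cdot I_x$ idle reserve per machine guaranteed by the time-stretching preprocessing. On each machine I collect, in Smith's-rule order, the small jobs currently crossing the boundary of $I_x$ or preempted inside it, and reschedule them contiguously at the end of $I_x$ (or push the last of them to the start of $I_{x+1}$ only if no slack remains; this happens to at most a constant number of jobs per machine per interval, because $|S_x|\le m/\varepsilon^2$). Each displaced small job sees its completion time grow by at most $I_x = \varepsilon\cdot R_x\le \varepsilon\cdot C_j$, a $1+O(\varepsilon)$ factor.

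The main obstacle is making the three transformations compatible: after the Smith-rule swaps and the non-preemption rearrangement, the interval-by-interval processing load of large jobs and of small jobs together must still be feasible. This is precisely what the time-stretching step was designed for—it creates an $\varepsilon\cdot I_x$ slack per machine per interval, which dominates the total disruption of $O(1)$ small jobs (each of length $\le\varepsilon^2 I_x$) pushed around by the swaps and the defragmentation, so feasibility and a cumulative $1+O(\varepsilon)$ loss are maintained throughout.
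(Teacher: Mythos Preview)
Your three ingredients---postpone excess small jobs, Smith-rule exchange, time-stretch to eliminate preemption---are the right ones, but the order you chose breaks the argument, and the exchange step is not clean.

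The paper proceeds in the opposite order. It first observes that in the \emph{fractional} relaxation (small jobs may be split arbitrarily, weight divided proportionally) Smith's rule is \emph{exactly} optimal for the small jobs by the standard pairwise exchange; a single time-stretch then glues the fractional pieces back together at $1+\varepsilon$ cost, yielding the no-preemption and within-interval properties. Only \emph{after} this does the paper truncate $S_x$ to a Smith-prefix of total size $\le m\cdot I_x$ and move the rest to $R_{x+1}$---and now that postponement is \emph{free}: those jobs are precisely the ones the Smith-rule schedule would not touch in $I_x$ anyway, so shifting their release date changes nothing.

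In your order the postponement in Step~1 is not free, and the ``geometric series'' claim does not hold. A job pushed from $R_x$ to $R_{x+1}$ joins the pool at $x{+}1$ and may be pushed again, repeatedly; nothing you wrote bounds the number of cascaded postponements a single job suffers, so the product of your $(1+\varepsilon)$ factors is uncontrolled. (An adversary releasing, at every interval, $m\cdot I_x$ worth of very-high-ratio small jobs would keep postponing any fixed low-ratio job indefinitely.) Your Step~2 is also not a legitimate Smith swap: you exchange jobs of \emph{different} lengths and appeal to slack to ``absorb the $p_j-p_{j'}$ difference,'' which neither preserves feasibility cleanly nor obviously keeps $\sum w_jC_j$ non-increasing. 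Both problems vanish if you follow the paper's route: establish Smith's rule first via the divisible relaxation (where the exchange is exact and length-preserving), time-stretch once, and only then observe that the $p(S_x)\le m\cdot I_x$ truncation costs nothing.
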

\begin{proof}%
  By an exchange argument one can show that it is optimal to schedule
  the small jobs by Smith's Rule if they can be arbitrarily divided
  into smaller jobs (where the weight is divided proportional to the
  processing time of the smaller jobs). Start with such a schedule and
  stretch time once. The gained free space is enough to finish
  all small jobs which are partially scheduled in each interval.

  For the last claim of the lemma, note that the total
  processing time in each interval $I_{x}$ is $mI_{x}$. Order the
  small jobs non-increasingly by their Smith's Ratios and pick them
  until the total processing time of picked jobs just does not exceed
  $mI_{x}$. The release date of all other jobs in $S_{x}$ can be
  safely moved to $R_{x+1}$ since due to our modifications we would
  not schedule them in $I_{x}$ anyway.
\end{proof}

\begin{lem-safety-net}[restated]%
  There is a constant $s$ such that at $\eo$
  loss we can restrict to schedules such that for each interval
  $I_{x}$ there is a subinterval of $I_{x+s-1}$ %
  which is large enough to process all jobs released at~$R_x$ and during which only jobs in~$R_x$ are executed.
  We call this subinterval the \emph{safety net} of
  interval $I_{x}$. We can assume that each job released at~$R_x$ finishes before time~$R_{x+s}$.
\end{lem-safety-net}
\begin{proof}%
  By Lemmas \ref{lem:smiths-rule} and $\ref{lem:number-large-jobs}$ we bound $p(S_{x})+p(L_{x})$ by

  \begin{eqnarray*}
    p(S_{x})+p(L_{x}) & \le & m\cdot I_{x}+(m/\epsilon^{2}+m) \cdot
    \left(4\log_{\e}\frac{1}{\varepsilon}\right) \cdot\frac{1}{\varepsilon}\ee^{x}\\
    & \le & m\cdot\ee^{x}\left(\epsilon+\frac{8}{\epsilon^{3}}\log_{\e}\frac{1}{\varepsilon}\right)\\
    & = & \varepsilon\cdot I_{x+s-1}
  \end{eqnarray*}
  for a suitable constant $s$, depending on $\epsilon$ and $m$. Stretching time once, we gain enough free space at the end of each
  interval $I_{x+s-1}$ to establish the safety net for each job set
  $p(S_{x})+p(L_{x})$.%

  {}
\end{proof}

\begin{lem}\label{lem:tiny-jobs}
  There is a constant $d$ such that %
  we can at $\eo$ loss restrict to instances such that
  $p_{j}>\frac{\epsilon}{2d}\cdot I_{x}$ for each job $j \in S_x \cup L_x$.
\end{lem}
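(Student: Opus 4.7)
The plan is to consolidate very short jobs into \emph{packs} that play the role of single pseudo-jobs whose processing time exceeds the required threshold. Call $j\in S_x\cup L_x$ \emph{tiny} if $p_j\le\frac{\epsilon}{2d}I_x$, and pick $d\ge 1/\epsilon$. Every large job then automatically satisfies $p_j\ge\epsilon^{2}I_x>\frac{\epsilon}{2d}I_x$ and is not tiny; hence only (a subset of) the small jobs need to be aggregated.

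For each release date $R_x$, I would sort the tiny jobs in $S_x$ non-increasingly by Smith's ratio $w_j/p_j$ and greedily form packs by appending consecutive jobs until the running total first exceeds $\frac{\epsilon}{2d}I_x$, then begin a new pack. Because each tiny job contributes at most $\frac{\epsilon}{2d}I_x$, every pack has total processing time in $\bigl(\frac{\epsilon}{2d}I_x,\frac{\epsilon}{d}I_x\bigr]$, except possibly the final pack per release date, which I merge with its predecessor (absorbing the constant-factor inflation into $d$). Each pack is then replaced by a single pseudo-small-job with release date $R_x$ and processing time and weight equal to the sums of its constituents; rounding that processing time up to the next power of $1+\epsilon$ costs another $\eo$ by Lemma~\ref{lem:first-rounding}. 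The modified instance now satisfies $p_j>\frac{\epsilon}{2d}I_x$ for every surviving job.

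To bound the loss, I would invoke Lemma~\ref{lem:smiths-rule} and assume small jobs are scheduled non-preemptively in Smith's order. The Smith's ratio of a pack is a convex combination of the ratios of its constituents, so bundling preserves the global Smith ordering; its only effect is that every $j$ inside a pack $P$ must now complete at the pack's completion time, an increase in $C_j$ of at most $p(P)\le\frac{\epsilon}{d}I_x$. Using $I_x=\epsilon R_x=\epsilon r_j$ for every tiny $j\in S_x$ and $\opt(\I)\ge\sum_j w_j r_j$, the total excess cost is at most
\[
\sum_{j\text{ tiny}}w_j\cdot\tfrac{\epsilon}{d}I_{x(j)}\;=\;\tfrac{\epsilon^{2}}{d}\sum_{j\text{ tiny}}w_j r_j\;\le\;\tfrac{\epsilon^{2}}{d}\,\opt(\I)\;=\;O(\epsilon)\,\opt(\I),
\]
which is exactly the permitted slack.

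The main obstacle I anticipate is verifying consistency with the subsequent simplifications: one must check that the aggregated packs still respect the bulk volume bound $p(S_x)\le mI_x$ from Lemma~\ref{lem:smiths-rule} and do not disturb the safety-net construction of Lemma~\ref{lem:safety-net}. Both inheritances go through because every pack transparently preserves the release date and total processing time of its constituents, so any structural invariant phrased in release-date-wise aggregate quantities survives the transformation unchanged.
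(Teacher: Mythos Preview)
Your packing construction---sorting tiny jobs of each $S_x$ by Smith's ratio and grouping consecutive ones into packs of length $\Theta(\frac{\epsilon}{d}I_x)$---is exactly the paper's. The gap is in the loss bound.

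The assertion that ``bundling preserves the global Smith ordering'' is only true \emph{within} one release date. Across release dates it fails: if a pack $P$ from $R_x$ contains jobs with ratios $\rho_1>\rho_2$ and some job $b$ from $R_y$ satisfies $\rho_1>\rho(b)>\rho_2$, then in the unpacked Smith schedule $b$ sits strictly between constituents of $P$, whereas after packing $\rho(P)$ may land on either side of $\rho(b)$. Consequently the claim that each $C_j$ increases by at most $p(P)$ is unjustified: a job $j\in P$ can be delayed not only by the rest of its own pack but also by every foreign pack whose ratio interval straddles $\rho(j)$, and in the other direction the gap $C_{\text{last}(P)}-C_j$ in the original schedule already contains all interleaved work from other release dates, which need not be bounded by $p(P)$.

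The paper avoids comparing the two Smith schedules directly. It observes that in any schedule obeying Lemma~\ref{lem:smiths-rule} and the safety net, at the end of each interval there is at most one partially processed pack from each of the at most $s$ alive release dates (because the completed tiny jobs of each $R_{x'}$ always form a Smith-order prefix). One further time-stretch creates $\epsilon I_x$ free space per interval, which for $d$ chosen large enough relative to $s$ and $m$ suffices to finish all straddling packs; this forces every pack to run contiguously on one machine within one interval at $1+O(\epsilon)$ loss. Once contiguity is established, treating packs as single jobs is lossless, and your final inequality $\sum_j w_j\cdot\frac{\epsilon}{d}I_{x(j)}\le O(\epsilon)\,\opt(\I)$ is not even needed. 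Replace the ordering-preservation step by this time-stretching argument and the proof goes through.
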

\begin{proof}%
We call a job $j$ \emph{tiny} if $p_j \le \frac{\epsilon}{2d}\cdot I_{x}$.
  Let $T_{x}=\{j_{1},j_{2},...,j_{|T_{x}|}\}$ denote all tiny jobs
  released at $R_{x}$. W.l.o.g.~assume that they are ordered
  non-increasingly by their Smith's Ratios $w_{j}/p_{j}$. Let $\ell$
  be the largest integer such that
  $\sum_{i=1}^{\ell}p_{i}\le\frac{\epsilon}{d}\cdot I_{x}$.  We define
  the pack $P_{x}^{1}:=\{j_{1},...,j_{\ell}\}$. We denote by
  $\sum_{i=1}^{\ell}p_{i}$ the processing time of pack $P_{x}^{1}$ and
  by $\sum_{i=1}^{\ell}w_{i}$ its weight. We continue iteratively
  until we assigned all tiny jobs to packs. By definition of the
  processing time of tiny jobs, the processing time of all but
  possibly the last pack released at time $R_{x}$ is in the interval
  $[\frac{\epsilon}{2d}\cdot I_{x},\frac{\epsilon}{d}\cdot I_{x}]$.

  Using timestretching, we can show that at $1+O(\epsilon)$ loss all
  tiny jobs of the same pack are scheduled in the same interval on the
  same machine. Here we use that in any schedule obeying Smith's Rule
and using the safety net (see Lemma~\ref{lem:safety-net})
in each interval there is at most one partially but unfinished pack from
each of at most $s$ previous release dates.
Hence, we can treat the packs as single jobs whose
  processing time and weight matches the respective values of the
  packs. Also, at $\e$ loss we can ensure that also the very last pack
  has a processing time in $[\frac{\epsilon}{2d}\cdot
  I_{x},\frac{\epsilon}{d}\cdot I_{x}]$.  Finally, at $1+O(\epsilon)$
  loss we can ensure that the processing times and weights of the new
  jobs (which replace the packs) are powers of $\e$.
\end{proof}

\begin{lem}\label{lem:sizes-small-jobs}
Assume that there is a constant $d$ such that
$p_{j}>\frac{\epsilon}{2d}\cdot I_{x}$ for each job $j \in S_x$.
Then at $\eo$ loss, the number of distinct processing times of jobs each set $S_x$ is upper-bounded by $(\log_{1+\eps}\eps\cdot 2d)$.
\end{lem}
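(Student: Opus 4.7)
The plan is to mimic the argument of Lemma~\ref{lem:sizes-large-jobs}, with the roles of ``large'' and ``small'' swapped: both an upper and a lower bound on $p_j$ for $j\in S_x$ are already available, so the claim reduces to counting how many powers of $1+\eps$ fit between them.

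First, by Lemma~\ref{lem:first-rounding} every processing time is a power of $1+\eps$, so for $j\in S_x$ I write $p_j=(1+\eps)^{y}$ for some integer $y$. By definition of $S_x$ we have $p_j<\eps^3 R_x=\eps^2 I_x$ (recalling $I_x=\eps R_x$), while the standing hypothesis gives $p_j>\tfrac{\eps}{2d}I_x$. Combining,
\[
\tfrac{\eps}{2d}\,I_x \;<\; (1+\eps)^{y} \;<\; \eps^2\, I_x,
\]
and applying $\log_{1+\eps}$ confines $y$ to an open interval of length
\[
\log_{1+\eps}\!\Bigl(\tfrac{\eps^2 I_x}{(\eps/2d)\,I_x}\Bigr)=\log_{1+\eps}(2d\eps)=\log_{1+\eps}(\eps\cdot 2d).
\]
The number of integers in such an interval is therefore at most this log, which is exactly the claimed bound on the number of distinct processing times in $S_x$.

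There is no real obstacle here: the argument is a one-line counting step, and no additional $\eo$ loss is incurred beyond what already enables the two ingredients, namely Lemma~\ref{lem:first-rounding} for the rounding to powers of $1+\eps$ and Lemma~\ref{lem:tiny-jobs} for the lower bound $p_j>\tfrac{\eps}{2d}I_x$. The only subtlety worth noting is that the interval $(\tfrac{\eps}{2d}I_x,\,\eps^2 I_x)$ must be non-empty, i.e.\ $d>1/(2\eps)$; this is automatic, since the packs built in Lemma~\ref{lem:tiny-jobs} have processing time up to $\tfrac{\eps}{d}I_x$ and still need to qualify as small, which forces $\tfrac{\eps}{d}\le\eps^2$, i.e.\ $d\ge 1/\eps$. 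As constants in this paper may depend on $\eps$ and $m$, this poses no difficulty and the lemma follows.
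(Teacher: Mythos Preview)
Your argument is correct and essentially identical to the paper's: both sandwich $p_j=(1+\eps)^y$ between the lower bound $\tfrac{\eps}{2d}I_x$ from the hypothesis and the upper bound $\eps^3 R_x$ from the definition of $S_x$, then count the powers of $1+\eps$ in that range to obtain $\log_{1+\eps}(2d\eps)$. Your additional remarks on why the interval is non-empty and why no further $1+O(\eps)$ loss is incurred are fine elaborations but not present in the paper's terse version.
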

\begin{proof}%
From the previous lemmas, we have $$\frac{e^2}{2d}\cdot (1+\eps)^x<(1+\eps)^y<\eps^3(1+\eps)^x.$$
The number of integers $y$ satisfying these inequalities is upper-bounded by the claimed constant.
\end{proof}

Lemma \ref{lem:constant-num-proc-times} now follows from the lemmas \ref{lem:sizes-large-jobs} and \ref{lem:sizes-small-jobs}.
Lemma \ref{lem:constant-num-jobs-released} follows from lemmas \ref{lem:smiths-rule}, \ref{lem:tiny-jobs} and \ref{lem:number-large-jobs}.
Next, we prove Lemma \ref{lem:large-job-atoms}:

\begin{proof}[Proof of Lemma \ref{lem:large-job-atoms}]
  The claim about the number of partially processed jobs of each type can be
  assumed without any loss. For the extent of processing, note that due to Lemmas~\ref{lem:number-large-jobs},
  \ref{lem:safety-net}, and \ref{lem:tiny-jobs} there is a
  constant $c$ such that at each time $R_{x}$ the total processing
  time of unfinished large jobs is bounded by $c\cdot R_{x}$. We
  stretch time once. The gained space is sufficient to schedule
  $p_{j}\cdot\N$ processing units of each unfinished large job $j$
  (for an appropriately chosen universal constant $\N$). This allows
  us to enforce the claim.
  The claim about the non-preemptive behavior of small jobs follows from Lemma~\ref{lem:smiths-rule}.
\end{proof}

\begin{proof}[Proof of Lemma \ref{lem:dominated-periods}]
In any schedule the jobs in $\cup_{i=0}^{p-1}Q_{k+i}$ contribute at
least $\sum_{i=0}^{p-1}rw(Q_{k+i})$ towards the objective. If we move
all jobs in $Q_{k+p}$ to their safety nets, they contribute at most

\begin{eqnarray*}
\sum_{j\in Q_{k+p}}r_{j}\ee^{s}\cdot w_{j} & = & \ee^{s}\cdot rw(Q_{k+p})\\
 & \le & \epsilon\cdot\sum_{i=0}^{p-1}rw(Q_{k+i})\\
 & \le & \epsilon\cdot OPT\end{eqnarray*}
to the objective.
\end{proof}

\begin{proof}[Proof of Lemma \ref{lem:split-the-instance-pmtn}]
  We modify a given online algorithm such that each part is treated as
  a separate instance. To bound the cost in the competitive ratio, we
  show that
$\frac{\alg(\I)}{\opt(\I)}\le\max_{i}\{\frac{\alg(\Part{i})}{\opt(\Part{i})}\}
\cdot(1+O(\epsilon)).$
  By the above lemmas, there is a $(1+O(\epsilon))$-approximative
  (offline) solution in which at the end of each part $\Part{i}$ each job
  has either completed or has been moved to its safety net. Denote
  this solution by $\opt'(I)$ and by $\opt'(\Part{i})$ its respective part
  for each part $\Part{i}$. Note that $\opt'(\I)=\sum_{i}\opt'(\Part{i})$. Then,

 $$\frac{\sum_{i=1}^{k}\alg(\Part{i}))}{\opt(\I)} \le
\frac{\sum_{i=1}^{k}\alg(\Part{i}))}{\sum_{i=1}^{k}\opt'(\Part{i})}
\cdot(1+O(\epsilon)) \le
\max_{i=1,...,k}\{\frac{\alg(\Part{i})}{\opt(\Part{i})}\}\cdot(1+O(\epsilon)).$$

\end{proof}

\begin{proof}[Proof of Lemma \ref{lem:K-periods}]
We show that $\ee^{s}\sum_{i=1}^{p-K-1}rw(Q_{i})<\epsilon\cdot\sum_{i=p-K}^{p}rw(Q_{i})$
for a sufficiently large value~$K$. This will then be the claimed constant.
Let $\delta':=\frac{\epsilon}{\ee^{s}}$.
By assumption, we have that
$rw(Q_{i+1})>\delta'\cdot\sum_{\ell=1}^{i}rw(Q_{\ell})$ for each~$i$.
This implies that $\frac{rw(Q_{i+1})}{\sum_{\ell=1}^{i+1}rw(Q_{\ell})}>\frac{\delta'}{1+\delta'}$
for each~$i$. Hence,

\[
\frac{\sum_{\ell=1}^{i}rw(Q_{\ell})}{rw(Q_{i+1})+\sum_{\ell=1}^{i}rw(Q_{\ell})}\le1-\frac{\delta'}{1+\delta'}<1\]
for each $i$ and hence,

\[
\sum_{\ell=1}^{i}rw(Q_{\ell})\le(1-\frac{\delta'}{1+\delta'})\sum_{\ell=1}^{i+1}rw(Q_{\ell}).\]
In other words, if we remove $Q_{i+1}$ from $\cup_{\ell=1}^{i+1}Q_{\ell}$
then the total release weight of the set decreases by a factor of
at least $1-\delta'/(1+\delta')<1$. For any $K$ this implies that
\[
\sum_{i=1}^{p-K-1}rw(Q_{i})<\left(1-\frac{\delta'}{1+\delta'}\right)^{K}\sum_{\ell=1}^{p}rw(Q_{\ell})\]
 and hence \[
\sum_{i=1}^{p-K-1}rw(Q_{i})<\frac{1}{1-\left(1-\frac{\delta'}{1+\delta'}\right)^{K}}\left(1-\frac{\delta'}{1+\delta'}\right)^{K}\sum_{i=p-K}^{p}rw(Q_{i}).\]
By choosing $K$ sufficiently large, the claim follows.
\end{proof}

\begin{proof}[Proof of Lemma \ref{lem:irrelevant-jobs}]

We partition $\Ir_{x}(J)$ into two groups: $\Ir_{x}^{\mathrm{old}}(J):=\{j\in\Ir_{x}(J)|r_{j}<R_{x-\Gamma}\}$
and $\Ir_{x}^{\mathrm{new}}(J):=\{j\in\Ir_{x}(J)|r_{j}\ge R_{x-\Gamma}\}$.
Lemma~\ref{lem:K-periods} implies that \begin{equation}
(1+\epsilon)^{s}rw(\Ir_{x}^{\mathrm{old}}(J))\le\epsilon\cdot rw(\Ir_{x}^{\mathrm{new}}(J)\cup\Rel_{x}(J))\label{eq:IR_x_old}\end{equation}
 (recall that the former value is an upper bound on the total weighted
completion time of the jobs in $\Ir_{x}^{\mathrm{old}}(J)$). For
every job $j\in\Ir_{x}^{\mathrm{new}}(J)$ there must be a job $j'\in\Ir_{x}^{\mathrm{old}}(J)\cup\Rel_{x}(J)$
such that $w_{j}<\frac{\epsilon}{\Delta\Gamma\cdot(1+\epsilon)^{\Gamma+s}}w_{j'}$.
We say that such a job $j'$ \emph{dominates} $j$. At most $\Delta$
jobs are released at the beginning of each interval and hence $|\Ir_{x}^{\mathrm{new}}(J)|\le\Delta\Gamma$.
In particular, if $\mathrm{dom}(j')$ denotes all jobs in $\Ir_{x}^{\mathrm{new}}(J)$
which are dominated by $j'$ then

\[
\sum_{j\in\mathrm{dom}(j')}w_{j}r_{j}\le\Delta\Gamma\frac{\epsilon}{\Delta\Gamma\cdot(1+\epsilon)^{\Gamma+s}}w_{j'}r_{j'}\cdot(1+\epsilon)^{\Gamma}\]
This implies that

\begin{eqnarray*}
(1+\epsilon)^{s}rw(\Ir_{x}^{\mathrm{new}}(J)) & \le & (1+\epsilon)^{s}\sum_{j\in\Ir_{x}^{\mathrm{new}}(J)}w_{j}r_{j}\\
 & \le & \sum_{j'\in\Ir_{x}^{\mathrm{old}}(J)\cup\Rel_{x}(J)}\Delta\Gamma\frac{\epsilon}{\Delta\Gamma\cdot(1+\epsilon)^{\Gamma+s}}w_{j'}r_{j'}\cdot(1+\epsilon)^{\Gamma+s}\\
 & \le & \epsilon\cdot\sum_{j'\in\Ir_{x}^{\mathrm{old}}(J)\cup\Rel_{x}(J)}w_{j'}r_{j'}\\
 & = & \epsilon\cdot(rw(\Ir_{x}^{\mathrm{old}}(J))+rw(\Rel_{x}(J)))\end{eqnarray*}
Together with Inequality~\ref{eq:IR_x_old} this implies that

\begin{eqnarray*}
(1+\epsilon)^{s}rw(\Ir_{x}(J)) & = & (1+\epsilon)^{s}(rw(\Ir_{x}^{\mathrm{new}}(J)+rw(\Ir_{x}^{\mathrm{old}}(J)))\\
 & \le & \left(\epsilon\cdot(rw(\Ir_{x}^{\mathrm{old}}(J))+rw(\Rel_{x}(J))\right)+\left(\epsilon\cdot rw(\Ir_{x}^{\mathrm{new}}(J)\cup\Rel_{x}(J))\right)\\
 & \le & 2\epsilon\cdot rw(\Rel_{x}(J))+\epsilon(rw(\Ir_{x}(J))\end{eqnarray*}
and the latter inequality implies that

\begin{eqnarray*}
\sum_{j\in\Ir_{x}(J)}C_{j}w_{j} & \le & (1+\epsilon)^{s}rw(\Ir_{x}(J))\\
 & \le & 2\epsilon\frac{(1+\epsilon)^{s}}{(1+\epsilon)^{s}-\epsilon}rw(\Rel_{x}(J))\\
 & \le & 3\epsilon\cdot rw(\Rel_{x}(J))\end{eqnarray*}

\end{proof}

\section{Proofs of Section~\ref{sec:Extensions}}

\begin{lem}
  In the non-preemptive setting, at~$1+O(\eps)$ loss we can ensure
  that at the end of each interval~$I_{x}$,
  \begin{compactitem}
  \item there are at most~$m$ large jobs from each type which are
    partially (i.e., neither fully nor not at all) processed, and
  \item for each partially but not completely processed large job~$j$
    there is a value~$k_{x,j}$ such that~$j$ is processed for at
    least~$k_{x,j}\cdot p_{j}\cdot\N$ time units in~$I_{x}$,
  \item we calculate the objective with adjusted completion
    times~$\bar{C}_{j}=R_{c(j)}$ for some value~$c(j)$ for each
    job~$j$ such that~$\sum_{x<c(j)}k_{x,j}\cdot p_{j}\cdot\N\ge
    p_{j}$.
  \end{compactitem}
\end{lem}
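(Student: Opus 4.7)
The plan is to mirror the proof of the preemptive analog (Lemma~\ref{lem:large-job-atoms}) by first constructing a \emph{virtual} schedule that has the desired quantum structure as if preemption were allowed, and then converting it into an honest non-preemptive schedule while measuring cost via the adjusted completion times~$\bar C_j$.

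First, I would dispose of the ``at most $m$ partially processed jobs of each type'' bullet by the same exchange argument as in the preemptive case; swapping processing among same-type jobs to consolidate partial work onto at most $m$ representatives does not need preemption and carries over verbatim. For the second and third bullets, I would combine Lemmas~\ref{lem:number-large-jobs}, \ref{lem:safety-net}, and \ref{lem:tiny-jobs} (together with Lemma~\ref{lem:constant-num-proc-times}) to bound the total remaining processing time of unfinished large jobs at time~$R_x$ by $c\cdot R_x$ for a constant $c=c(\eps,m)$. Time-stretching once yields free space $\eps\cdot I_{x-1}$ in each interval~$I_x$. Choosing $\N=\Theta(\eps/c)$, this free space can accommodate one virtual quantum of length $p_j\cdot\N$ per unfinished large job~$j$ per interval. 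I set $k_{x,j}$ equal to the number of such virtual quanta scheduled for~$j$ in~$I_x$, so that~$j$ receives at least $k_{x,j}\cdot p_j\cdot\N$ time units of virtual processing in~$I_x$, as required.

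To conform to non-preemption, I would then concatenate all virtual quanta belonging to a single large job~$j$ into one contiguous block of length~$p_j$ on one of the machines where~$j$'s quanta lived (using the interchangeability of same-type jobs and the fact that the cumulative virtual processing first reaches~$p_j$ at interval~$R_{c(j)}$ chosen so that $\sum_{x<c(j)}k_{x,j}\cdot p_j\cdot\N\ge p_j$). The resulting non-preemptive completion time satisfies $C_j\le\bar C_j:=R_{c(j)}$, so using~$\bar C_j$ in the objective only overestimates the true cost and the inequality in the third bullet holds by construction. Finally, since an unfinished large job is guaranteed at least one quantum in every interval until~$\bar C_j$, the index~$c(j)$ exceeds the release-interval of~$j$ by at most $\lceil 1/\N\rceil$, a constant; combined with the safety net of Lemma~\ref{lem:safety-net}, this yields $\bar C_j/C_j\le(1+\eps)^{O(1)}=1+O(\eps)$.

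The main obstacle I anticipate is the concatenation step: one must verify that pulling~$j$'s quanta into a single contiguous block on a chosen machine does not collide with other large jobs' quanta or contiguous blocks, and that all resulting machine assignments remain feasible under the release-date and interval constraints from Section~\ref{sec:simplifications}. Calibrating~$\N$ small enough to fit all quanta simultaneously in the stretched free space, while keeping $1/\N$ a small constant so that the gap between~$\bar C_j$ and~$C_j$ stays within the $1+O(\eps)$ budget, is the delicate balance at the heart of the~argument.
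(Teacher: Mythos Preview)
Your overall direction is sound, but you make the argument harder than necessary in two places, and the paper's route sidesteps precisely the obstacle you flag.

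For the first bullet, no exchange argument is needed in the non-preemptive setting: a job that is partially processed at time $R_x$ is, by definition of non-preemption, currently running on some machine, so at any moment there are at most $m$ partially processed jobs in total (hence at most $m$ of each type). The paper simply observes that this property holds for \emph{any} non-preemptive schedule and is listed only for clarity.

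For the remaining two bullets, the paper works in the opposite direction from you. Rather than building a virtual preemptive schedule with quanta and then trying to concatenate them back into contiguous blocks---the step you rightly identify as delicate---the paper keeps the non-preemptive schedule fixed and overlays a quantum \emph{accounting} on it. As described in Section~\ref{sec:nonpmtn}, one ``pretends that we could preempt $j$ and ensure that after $j$ has been preempted its machine stays idle until $j$ continues'': the job runs contiguously in reality, but for bookkeeping the processing of $j$ within each interval is rounded down to a multiple of $p_j\cdot\N$ (this defines $k_{x,j}$), and the leftover fraction of that interval on $j$'s machine is declared virtually idle. The adjusted completion time $\bar C_j=R_{c(j)}$ is then the first interval endpoint at which the cumulative rounded-down processing reaches $p_j$; by construction $\bar C_j\ge C_j$, and the same time-stretch bound from the proof of Lemma~\ref{lem:large-job-atoms} controls the overshoot by $1+O(\eps)$. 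Since nothing is physically rearranged, the feasibility concern about colliding concatenated blocks never arises, and the proof reduces verbatim to that of Lemma~\ref{lem:large-job-atoms}.
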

\begin{proof}
  Note that the first property holds for any non-preemptive schedule
  and is listed here only for the sake of clarity. The other two
  properties can be shown similiarly as in the proof of
  Lemma~\ref{lem:large-job-atoms}.
\end{proof}

\begin{proof}[Proof of Lemma~\ref{lem:split-the-instance-non-pmtn}]
  Assume that we have an online algorithm~$\alg$ with competitive
  factor~$\rho_{\alg}$ on instances in which for every~$i$ the first
  job~$\first{i}$ released in part~$\Part{i}$
  satisfies~$\sum_{\ell=1}^{i-1} rw(\Part{\ell}) \le w_{\!\first{i}}
  \epsilon/\ee^{s}$~(i.e.,~$\first{i}$ dominates all previously
  released parts).  Based on $\alg$ we construct a new
  algorithm~$\alg'$ for arbitrary instances with competitive ratio at
  most~$\ee \rho_{\alg}$: When a new part~$\Part{i}$ begins, we scale
  the weights of all jobs in~$\Part{i}$ such
  that~$\sum_{\ell=1}^{i-1}rw(\Part{\ell}) \le w'_{\!\first{i}}
  \epsilon/\ee^{s}$, where the values~$w'_{\!j}$ denote the adjusted
  weights. Denote by~$\bar{I}(i)$ the resulting instance up to~(and
  including) part~$\Part{i}$. We schedule the resulting instance
  using~$\alg$.  We take the computed schedule for each part
  $\Part{i}$ and use it for the jobs with their original weight,
  obtaining a new algorithm~$\alg'$. The following calculations shows
  that this procedure costs only a factor $\e$. To this end, we proof
  that for any instance $I$ it holds that
$$
\frac{\alg'(\I)}{\opt(\I)} \le
\max_{i}\frac{\alg(\bar{I}(i))}{\opt(\bar{I}(i))}\cdot(1+O(\epsilon))
\le (1+O(\epsilon))\rho_{\alg}.
$$
For each~$\Part{i}$ we define~$\alg'(\I|\Part{i})$ to be the amount that the
jobs in~$\Part{i}$ contribute in~$\alg'(\I)$. Similarly, we
define~$\opt(\I|\Part{i})$. We have that
  \[
  \frac{\alg'(\I)}{\opt(\I)} \le
  \max_{i}\frac{\alg'(\I|\Part{i})}{\opt(\I|\Part{i})} \le
  \max_{i}\frac{\alg'(\I|\Part{i})}{\opt(\Part{i})}.
  \]
  We claim that for each~$i$ holds~$\frac{\alg'(\I|\Part{i})}{\opt(\Part{i})}
  \le (1+O(\epsilon))\cdot \frac{\alg(\bar{I}(i))}{\opt(\bar{I}(i))}$.
  For each part~$\Part{i}$ let~$v_{i}$ denote the scale factor of the
  weight of each job in~$\bar{I}(i)$ in comparison to its original
  weight.  The optimum for the instance~$\bar{I}(i)$ can be bounded by
  \begin{eqnarray*}
    \opt(\bar{I}(i)) & \le & \opt(\Part{i})\cdot v_{i} +
    \ee^{s}\sum_{\ell=1}^{i-1}rw(\Part{\ell}) \leq
    \opt(\Part{i})\cdot v_{i}+\epsilon\cdot r_{\first{i}} \cdot w'_{\first{i}}
    \le \ee \opt(\Part{i})\cdot v_{i}.
  \end{eqnarray*}
  Furthermore holds by construction~$\alg'(\I|\Part{i})\cdot
  v_{i}\le\alg(\bar{I}^{i})$.
  Thus,~$\max_{i}\frac{\alg'(I|\Part{i})}{\opt(\Part{i})} \le
  \max_{i}\frac{\alg(\bar{I}^{i})}{\opt(\bar{I}^{i})}\cdot(1+O(\epsilon))$.
\end{proof}

\begin{proof}[Proof of Lemma~\ref{lem:Qm-pmtn-speed-bound}]
  Given a schedule on related machines with speed
  values~$s_{1},...,s_{\max}$, we stretch time twice. Thus, we
  gain in each interval~$I_{x}$ free space of size~$\varepsilon I_{x}$
  on the fastest machine.  For each machine whose speed is at
  most~$\frac{\epsilon}{m}s_{\max}$,
we take its schedule of the
  interval~$I_{x}$ and simulate it on the fastest machine.
  Thus, those slow machines are not needed and can be removed. The
  remaining machines have speeds in~$[\frac{\epsilon}{m}\,
  s_{\max},s_{\max}]$. Assuming the slowest machines has unit speed
  gives the desired bound.
\end{proof}

\begin{proof}[Proof of Lemma~\ref{lem:Rm-range-processing-times}]
  Consider a schedule for an instance which does not satisfy the
  property.  We stretch time twice and thus we gain a free
  space of~$\epsilon I_{x}$ in each interval~$I_{x}$. Consider
  some~$I_{x}$ and a job~$j$ which is scheduled in~$I_{x}$.  Let~$i$
  be a fastest machine for~$j$. We remove the processing volume of~$j$
  scheduled in~$I_x$ on slow machines~$i'$ with~$p_{i'j} >
  \frac{m}{\eps}\, p_{ij}$ and schedule it on~$i$ in the gained free
  space. This way, we obtain a feasible schedule even if a job never
  runs on a machine where it is slow. Thus, we can
  set~$p_{i'j}=\infty$ if there is a fast machine~$i$ such that
  $p_{ij} \le \frac{\epsilon}{m}p_{i'j}$.
\end{proof}

\section{Proofs of Section~\ref{sec:randomized}}

\begin{proof}[Proof of Lemma~\ref{lem:rand-const-periods}]
  Let~$\alg$ be a randomized online algorithm with a competitive ratio
  of~$\rho_{\alg}$ on instances which span at most~$\ee^{s}/\eps$
  periods. We construct a new randomized online algorithm~$\alg'$ which
  works on arbitrary instances~$\I$ such
  that~$\rho_{\alg'}\le\rho_{\alg}(1+\eps)$.  At the beginning of~$\alg'$,
  we choose an offset~$o\in\{0,...,M-1\}$ uniformly at random
  with~$M:=\left\lceil \ee^{s}/\eps\right\rceil$.  In
  instance~$\I$, we move all jobs to their safety net which are released in
  periods~$Q\in\mathcal{Q}:=\{Q_{i}|i\equiv o\bmod M\}$. This splits
  the instance into parts~$\Part{0},...,\Part{k}$ where each part~$\Part{\ell}$
  consists of the periods~$Q_{o+(\ell-1)\cdot M},...,Q_{o+\ell\cdot
    M-1}$.  Note that at the end of each part no job remains. We need
  to bound the increase in the total expected cost caused by moving
  all jobs in periods in~$\mathcal{Q}$ to their safety nets. This
  increase is bounded by

 \begin{eqnarray*}
   \mathbb{E}\left[\sum_{Q\in\mathcal{Q}}\sum_{j\in Q}\ee^{s}r_{j}\cdot
w_{j}\right] & \le &
\ee^{s}\mathbb{E}\left[\sum_{Q\in\mathcal{Q}}rw(Q)\right]\\
   & \le & \ee^{s}\frac{1}{M}\sum_{Q\in \I}rw(Q)\\
   & \le & \eps\cdot rw(\I)\\
   & \le & \eps\cdot \opt(\I)\,.
 \end{eqnarray*}

 Thus, the total expected cost of the computed schedule is
 \begin{eqnarray*}
   \mathbb{E}\left[\eps\cdot \opt(\I)+\sum_{i=1}^{k}\alg(\Part{i})\right] & \le
& \eps\cdot \opt(\I)+\sum_{i=1}^{k}\rho_{\alg}\cdot \opt(\Part{i})\\
   & \le & \eps\cdot \opt(\I)+\rho_{\alg}\cdot \opt(\I)\\
   & \le & (\rho_{\alg}+\eps)\cdot \opt(\I)\\
   & \le & \rho_{\alg}(1+\eps)\cdot \opt(\I).
 \end{eqnarray*}
 Thus, at~$1+\eps$ loss in the competitive ratio we can restrict to parts~$I_i$ which span a constant number of periods.
\end{proof}

\begin{proof}[Proof of Lemma~\ref{lem:discretize-prob}]
  Consider an instance~$\I$. Let~$\delta>0$ and~$k\in\mathbb{N}$ be values to be determined later
  with the property that~$1/\delta\in\mathbb{N}$. For each configuration~$C$ and each
  interval-schedule $S$ we
  define a value~$g(C,S)$ such that~$\left\lfloor
    \frac{f(C,S)}{\delta}\right\rfloor \cdot\delta\le
  g(C,S)\le\left\lceil \frac{f(C,S)}{\delta}\right\rceil \cdot\delta$
  and~$\sum_{S\in\mathcal{S}}g(C,S)=1$. Now we want to
  bound~$\rho_{g}$.

  The idea is that for determining the
  ratio~$\Exp{g(\I)}/\opt(\I)$ %
  it suffices to consider schedules~$S(\I)$ which are computed with
  sufficiently large probability. We show that also~$f$ computes them
  with almost the same probability. Let~$S(\I)$ denote a schedule for
  the entire instance~$\I$. We denote by~$P_{f}(S(\I))$
  and~$P_{g}(S(\I))$ the probability that~$f$ and~$g$ compute the
  schedule~$S(\I)$ when given the instance~$\I$. Assume
  that~$P_{f}(S(\I))\ge k\cdot\delta$.  Denote by~$C_{0},...,C_{\bar{\Gamma}-1}$
  the configurations that algorithms are faced with when
  computing~$S(\I)$, i.e., each configuration~$C_{x}$ contains the
  jobs which are released but unfinished at the beginning of
  interval~$I_{x}$ in~$S(\I)$ and as history the schedule~$S(\I)$
  restricted to the intervals~$I_{0},...,I_{x-1}$. Denote by~$S_{\!x}$
  the schedule of~$S(\I)$ in the interval~$I_{x}$.
  Hence,~$P_{f}(S(\I))=\prod_{x=0}^{\bar{\Gamma}-1}f(C_{x},S_{\!x})$.  Note that
  from $P_{f}(S(\I))\ge k\cdot\delta$ follows that~$f(C_{x},S_{\!x})\ge
  k\cdot\delta$ for all~$x$. For these schedules,~$P_{g}(S(\I))$ is
  not much larger since
  \begin{eqnarray*}
    P_{g}(S(\I)) & = & \prod_{x=0}^{\bar{\Gamma}-1}g(C_{x},S_{\!x})
    \le  \prod_{x=0}^{\bar{\Gamma}-1}\frac{k+1}{k}f(C_{x},S_{\!x})
    \le  \left(\frac{k+1}{k}\right)^{\bar{\Gamma}}P_{f}(S(\I)).
  \end{eqnarray*}
  Let~$\mathcal{S}(\I)$ denote the set of all schedules for~$\I$. We
  partition~$\mathcal{S}(\I)$ into
  schedule sets~$\mathcal{S}_{H}^{g}(\I):=\{S(\I)|P_{g}(S(\I))\ge
  k\cdot\delta\}$
  and~$\mathcal{S}_{L}^{g}(\I):=\mathcal{S}(\I)\setminus\mathcal{S}_{H}(\I)$.
    We estimate the expected value of a schedule computed by algorithm map~$g$ on~$\I$ by

  \begin{eqnarray*}
    \Exp{g(\I)} & = & \sum_{S(\I)\in\mathcal{S}_{H}^{g}(\I)}P_{g}(S(\I))\cdot S(\I) + \sum_{S(\I)\in\mathcal{S}_{L}^{g}(\I)}P_{g}(S(\I)) \cdot S(\I)\\
    & \le &
\sum_{S(\I)\in\mathcal{S}_{H}^{g}(\I)}P_{f}(S(\I))\cdot\left(\frac{k+1}{k}
\right)^{\bar{\Gamma}} \cdot S(\I) + |\mathcal{S}(\I)| \cdot k\cdot\delta\cdot\ee^{s} \cdot rw(\I)\\
    & \le &
\left(\frac{k+1}{k}\right)^{\bar{\Gamma}}\sum_{S(\I)\in\mathcal{S}_{H}^{g}(\I)}P_{f}
(S(\I))\cdot S(\I) + |\mathcal{S}(\I)|\cdot k\cdot\delta\cdot\ee^{s}\cdot rw(\I)\\
    & \le & \left(\frac{k+1}{k}\right)^{\bar{\Gamma}} \Exp{f(\I)} + |\mathcal{S}(\I)|\cdot
k\cdot\delta\cdot\ee^{s}\cdot rw(\I).
  \end{eqnarray*}
  We choose~$k$ such that~$\left(\frac{k+1}{k}\right)^{\bar{\Gamma}}\le1+\eps/2$
  and~$\delta$ such that~$|\mathcal{S}(\I)|\cdot
  k\cdot\delta\cdot\ee^{s}\le\eps/2$ for all instances~$\I$~(note
  here that~$|\mathcal{S}(\I)|$ can be upper bounded by a value
  independent of~$\I$ since our instances contain only constantly many
  jobs). This yields
    \[
  \frac{\Exp{g(\I)}}{\opt(\I)} \le
  (1+\eps/2)\cdot \frac{\Exp{f(\I)}}{\opt(\I)} + \eps/2\cdot\frac{rw(\I)}{\opt(\I)}    \le(1+\eps) \cdot \frac{\Exp{f(\I)}}{\opt(\I)}\,,
  \]
  and we conclude that~$\rho_{g}\le\ee\rho_{f}$.
\end{proof}

\section{Competitive-Ratio Approximation Schemes for Minimizing $C_{\max}$ (cf. Section \ref{sec:other-objectives})}
Consider the objective of minimizing the makespan. The {\em simplifications within intervals} of Section~\ref{sec:simplifications} are based on arguing on completion times of individual jobs, and clearly hold also for the last job. Thus, they directly apply to minimizing the makespan. 

Furthermore, we simplify the definition of {\em irrelevant history} in Section~\ref{sec:simplifications} by omitting the partition of the instance into parts. %
We observe that when then last job is released at time $R_{x^{*}}$ then all jobs $j$ with
$r_{j}\le R_{x^{*}-s}$ are irrelevant for the objective: such a job
$j$ finishes at time $R_{x^{*}}$ the latest in any schedule (due
to the safety net) and $OPT\ge R_{x^{*}}$. Therefore, we define a
job $j$ to be \emph{irrelevant at time $R_{x}$ }if $r_{j}\le R_{x-s}$.
We keep Definition~\ref{def:equivalence-interval-schedule} for the equivalence relation of schedules as it is except for the notion of job weights which are not important for the makespan.
Based on the above definition for relevant jobs we define equivalence
classes of configurations. With this definition, we can still restrict
to algorithm maps $f$ with $f(C)\sim f(C')$ for any two equivalent
configurations $C,C'$ (Lemma~\ref{lem:equal-confs}). Lemmas \ref{lem:const-numb-maps} to \ref{lem:approx-comp-factor} then hold accordingly.
Finally, note that since we do not split the instance into parts,
we do not need (an adjusted version of) Lemma~\ref{lem:split-the-instance-non-pmtn} in the non-preemptive
case.
\begin{thm}
For any $m\in \mathbb{N}$ we obtain competitive-ratio approximation schemes for~\abc{Pm}{r_{j},(pmtn)}{C_{\max}}, \abc{Qm}{r_{j},pmtn}{C_{\max}}, and \abc{Rm}{r_{j},pmtn}{C_{\max}}, and for \abc{Qm}{r_{j}}{C_{\max}} assuming a constant range of machine speeds.
\end{thm}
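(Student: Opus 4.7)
The plan is to follow the deterministic competitive-ratio approximation scheme of Sections~\ref{sec:simplifications}--\ref{sec:Extensions} essentially verbatim, adapting only those ingredients that were tailored to the min-sum objective. The overall strategy remains to (i) simplify instances and admissible schedules at $1+O(\eps)$ loss so that configurations and interval-schedules fall into constantly many equivalence classes, (ii) abstract online algorithms as algorithm maps on these finite sets, and (iii) enumerate the finitely many simplified algorithm maps, approximating each one's competitive ratio by simulation on a bounded-length instance.

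First I would reuse the within-interval simplifications of Section~\ref{sec:simplifications}. Lemmas~\ref{lem:first-rounding}--\ref{lem:large-job-atoms} are all derived by time-stretching, geometric rounding, pack formation, and preemption coarsening applied to individual job completion times; since~$C_{\max}$ is just one such completion time, the~$1+O(\eps)$ loss transfers automatically. The extensions of Section~\ref{sec:Extensions} (adjusted safety net for the non-preemptive setting, speed-range reduction via Lemma~\ref{lem:Qm-pmtn-speed-bound} for preemptive related machines, and finite-processing-time-range reduction via Lemma~\ref{lem:Rm-range-processing-times} for unrelated machines) rest on the same kind of arguments and carry over unchanged, including the bounded-speed-ratio hypothesis needed for the non-preemptive related setting.

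Next I would install the simpler definition of irrelevance: job~$j$ is \emph{irrelevant at time~$R_x$} iff~$r_j \le R_{x-s}$. The justification is that if~$R_{x^*}$ is the largest release date in the instance, then by Lemma~\ref{lem:safety-net} every job~$j$ with~$r_j \le R_{x^*-s}$ completes by~$R_{x^*}$, whereas~$\opt \ge R_{x^*}$; so such jobs cannot influence the makespan beyond a~$1+O(\eps)$ factor. With this notion in hand, the equivalence relations of Definition~\ref{def:equivalence-interval-schedule} and its configuration analogue are retained as scalings by powers of~$1+\eps$, dropping the weight-scaling component since weights do not enter~$C_{\max}$. Equivalent configurations now share a bounded-size head of relevant jobs plus a bounded-size schedule history, so the proof of Lemma~\ref{lem:equal-confs}---redirecting an algorithm map to respond identically to all members of an equivalence class at~$1+O(\eps)$ loss---goes through, using the makespan-specific irrelevance estimate in place of Lemma~\ref{lem:irrelevant-jobs}.

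Once there are finitely many simplified algorithm maps (Lemma~\ref{lem:const-numb-maps}), Lemma~\ref{lem:worst-instance-length} provides a constant~$E$ such that every end-configuration is equivalent to one reached within~$E$ intervals, and Lemma~\ref{lem:approx-comp-factor} approximates each~$\rho_f$ to within~$1+\eps$ by simulating up to that horizon; enumerating all maps and returning the best yields the scheme. The non-preemptive case is in fact cleaner than for~$\sum w_j C_j$, because the release-weight partition into parts disappears and no analogue of Lemma~\ref{lem:split-the-instance-non-pmtn} is needed. The main subtlety I anticipate is re-verifying the cycling argument behind Lemma~\ref{lem:worst-instance-length}: with the coarser (weight-free) equivalence relation one has to confirm inductively that the sets~$\C_x^f$ of realistic equivalence classes still evolve deterministically under a simplified map, so that pigeonhole produces a finite cycle length and hence a bounded simulation horizon for approximating each~$\rho_f$. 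Once this is in place, the theorem follows uniformly for all four problem variants listed.
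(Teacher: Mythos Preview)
Your proposal is correct and follows essentially the same route as the paper: reuse the within-interval simplifications and the machine-environment extensions verbatim, replace the release-weight machinery by the simpler irrelevance criterion $r_j\le R_{x-s}$ justified via the safety net and $\opt\ge R_{x^*}$, drop weights from the equivalence relations, and rerun Lemmas~\ref{lem:equal-confs}--\ref{lem:approx-comp-factor}. Your observation that the non-preemptive case no longer needs an analogue of Lemma~\ref{lem:split-the-instance-non-pmtn} matches the paper exactly, and the subtlety you flag about the cycling argument is indeed the only point that requires re-checking but poses no difficulty once the equivalence classes are finite.
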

For constructing randomized online algorithm schemes for minimizing
the makespan, similarly to Lemma~\ref{lem:rand-const-periods} we can show
that we can restrict our attention to instances which span only a constant
number of periods.

\begin{lem}\label{lem:rand-const-periods-makespan} For randomized
algorithms for minimizing the makespan, at~$\eo$ loss we can restrict to instances in which all
jobs are released in at most~$\ee^{s}/\eps$ consecutive periods.
\end{lem}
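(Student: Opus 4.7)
I would mirror the proof of Lemma~\ref{lem:rand-const-periods}. Let $\alg$ be a randomized online algorithm with competitive ratio $\rho_\alg$ on instances spanning at most $M := \lceil \ee^{s}/\eps \rceil$ consecutive periods. From $\alg$, I construct a randomized algorithm $\alg'$ for arbitrary instances satisfying $\rho_{\alg'} \leq (1+\eps)\rho_\alg$. At the start, $\alg'$ samples an offset $o \in \{0,\ldots,M-1\}$ uniformly at random. Every job released in a period $Q_i$ with $i \equiv o \pmod{M}$ is moved to its safety net (i.e.\ scheduled inside $I_{x_j+s-1}$). The remaining jobs form consecutive parts $P_1,\ldots,P_L$, each spanning fewer than $M$ periods and separated by a safety-net period at the end of which no ``part'' job is still pending. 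A fresh, independent copy of $\alg$ is applied to each part, so that the parts can be treated independently on the timeline.

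\textbf{Analysis.} The makespan $\alg'(\I)$ equals the maximum of the individual part makespans $\alg(P_\ell)$ and the safety-net completion times. Since parts live in disjoint time windows, the overall makespan is governed by the \emph{latest} of these contributions. For each $\ell$, $\mathbb{E}[\alg(P_\ell)] \leq \rho_\alg \cdot OPT(P_\ell) \leq \rho_\alg \cdot OPT(\I)$ by monotonicity of makespan under subinstance restriction. Each safety-netted job $j$ completes by time $\ee^{s} r_j$, and since $OPT(\I) \geq R_{x^*}$ for the release time $R_{x^*}$ of the last-released job, the safety-net contribution is at most $\ee^{s} \cdot OPT(\I)$ in the worst case. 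Averaging over $o$, the worst case occurs only when $o \equiv k^* \pmod{M}$ (i.e.\ the last period itself is safety-netted), which has probability $1/M$. Hence the expected excess over $\rho_\alg \cdot OPT(\I)$ coming from safety nets is at most $\frac{\ee^{s}}{M} \cdot OPT(\I) = \eps \cdot OPT(\I)$, and using $\rho_\alg \geq 1$ (which holds trivially for makespan) yields $\mathbb{E}[\alg'(\I)] \leq (\rho_\alg + \eps) \cdot OPT(\I) \leq (1+\eps) \rho_\alg \cdot OPT(\I)$.

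\textbf{Main obstacle.} The principal difficulty, compared with Lemma~\ref{lem:rand-const-periods}, is that the makespan objective is a max rather than a sum: the clean linearity-of-expectation argument that gives $\mathbb{E}[\sum_\ell \alg(P_\ell)] \leq \rho_\alg \sum_\ell OPT(P_\ell) \leq \rho_\alg OPT(\I)$ in the min-sum setting does not transfer. The crucial structural observation that rescues the argument is that the parts are processed in strictly non-overlapping time windows of length less than $Ms$ intervals, so that only the final part and the final safety-net completion determine the overall makespan; the calibration $M = \lceil \ee^{s}/\eps \rceil$ precisely balances the multiplicative $\ee^{s}$ blow-up of safety-net delays against the $1/M$ probability of the unfortunate offset, producing the desired $(1+\eps)$ loss factor.
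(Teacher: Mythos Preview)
Your approach is essentially the paper's: draw a uniformly random offset $o\in\{0,\dots,M-1\}$ with $M=\lceil\ee^{s}/\eps\rceil$, move all jobs released in periods $Q_i$ with $i\equiv o\pmod M$ to their safety nets, run a fresh copy of $\alg$ on each resulting part, and separate ``bad'' offsets (where safety-netting pushes some job past $R_{x^*}$) from ``good'' ones (where the overall makespan is governed by the last part alone).

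One small correction: you assert that the only bad offset is the one that safety-nets the very last period $Q_{k^*}$. In fact there can be \emph{two} bad offsets. A safety-netted job $j$ is harmless precisely when its safety-net completion $\ee^{s}r_j$ does not exceed $R_{x^*}$, i.e.\ when $r_j\le R_{x^*-s}$. The jobs with $r_j>R_{x^*-s}$ occupy $s$ consecutive release intervals, and these $s$ intervals may straddle two periods $Q_{k^*-1}$ and $Q_{k^*}$ (for instance when $x^*=k^*s$). Hence at most two choices of $o$ are bad, and the paper's bound reads
\[
\rho_{\alg'}\;\le\;\frac{2}{M}\ee^{s}+\frac{M-2}{M}\rho_{\alg}\;\le\;2\eps+\rho_{\alg}\;\le\;(\eo)\,\rho_{\alg}.
\]
Replace your $1/M$ by $2/M$; the conclusion is unaffected.
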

\begin{proof}
Let~$\alg$ be a randomized online algorithm for minimizing the makespan
over time with a competitive ratio of~$\rho_{\alg}$ on instances
which span at most~$\ee^{s}/\eps$ periods. We construct a new randomized
online algorithm~$\alg'$ which works on arbitrary instances
such that~$\rho_{\alg'}\le\rho_{\alg}(\eo)$. Our reasoning 
is similar to the proof of Lemma~\ref{lem:rand-const-periods}.

At the beginning of~$\alg'$, we choose an offset~$o\in\{0,...,M-1\}$
uniformly at random with~$M:=\left\lceil \ee^{s}/\eps\right\rceil $.
Given an instance~$\I$, we move all jobs to their safety net which are
released in periods~$Q\in\mathcal{Q}:=\{Q_{i}|i\equiv o\bmod M\}$.
This splits the instance into parts~$\Part{0},...,\Part{k}$ where
each part~$\Part{\ell}$ consists of the periods~$Q_{o+(\ell-1)\cdot M},...,Q_{o+\ell\cdot M-1}$.
Note that at the end of each part no job remains. We present each
part separately to $\alg$.

We bound the competitive ratio $\rho_{\alg'}$ of the resulting algorithm.
Let $R_{x^{*}}:=\max_{j \in \I}r_{j}$. Moving the jobs from periods $\mathcal{Q}$
has an effect on the optimal makespan only if $o$ is chosen such
that at least one job $j$ with $r_{j}>R_{x^{*}-s}$ is moved. 
There are at most two offsets $o$ such that this
happens. In
that case, the algorithm still achieves a competitive ratio of at
most $\ee^{s}$. In all other cases, $\alg'$ achieves a competitive ratio
of at most~$\rho_{\alg}$. Thus, we can bound $\rho_{\alg'}$ by
\begin{eqnarray*}
\rho_{\alg'} & \le & \frac{2}{M}\ee^{s}+\frac{M-2}{M}\rho_{\alg}\le2\eps+\rho_{\alg}\le (\eo)\rho_{\alg}.
\end{eqnarray*}
\end{proof}

We can prove similarly as in Lemma~\ref{lem:discretize-prob} that any
randomized algorithm map $f$ can be well approximated by a discretized
randomized algorithm map~$g$. Hence, we obtain the following theorem.
\begin{thm}
For any $m\in \mathbb{N}$ we obtain randomized competitive-ratio approximation schemes for \abc{Pm}{r_{j},(pmtn)}{C_{\max}},
\abc{Qm}{r_{j},pmtn}{C_{\max}}, and \abc{Rm}{r_{j},pmtn}{C_{\max}}, and for \abc{Qm}{r_{j}}{C_{\max}} assuming a constant range of machine speeds.
\end{thm}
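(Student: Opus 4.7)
The plan is to combine the makespan adaptations already outlined in Section~\ref{sec:other-objectives} with the randomization framework of Section~\ref{sec:randomized}, and then adjust for the different machine environments as in Section~\ref{sec:Extensions}. First, I would invoke Lemma~\ref{lem:rand-const-periods-makespan}, which restricts the analysis at $1+O(\eps)$ loss to instances spanning only $\lceil\ee^s/\eps\rceil$ consecutive periods, hence (after the simplifications of Section~\ref{sec:simplifications} applied in the makespan variant from Section~\ref{sec:other-objectives}) only constantly many jobs. Together with the makespan-adapted definitions of relevant jobs, equivalence classes of configurations, and equivalence classes of interval-schedules, this yields a finite domain $\bar{\C}$ and target space $\bar{\mathcal{S}}$ bounded in size by constants depending only on $\eps$ and $m$.

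Next, I would define $\delta$-discretized randomized algorithm maps for the makespan exactly as in Section~\ref{sec:randomized}, namely functions $f\colon\bar{\C}\times\bar{\mathcal{S}}\to\{0,\delta,2\delta,\ldots,1\}$ with $\sum_{S}f(C,S)=1$ for every $C$. The discretization lemma (analogue of Lemma~\ref{lem:discretize-prob}) would be reproved with makespan playing the role of the objective: given a randomized algorithm map $f$, I would round each $f(C,S)$ to a multiple of $\delta$ (while preserving the normalization constraint), partition the set of possible full schedules into those realized by $f$ with probability at least $k\delta$ and the rest, and argue that the high-probability schedules dominate $\mathbb{E}[f(\I)]$, while the low-probability contribution is absorbed using that $\opt(\I)\ge R_{x^*}$ and that the number of possible schedules is bounded by a constant for instances spanning constantly many periods. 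Choosing $k$ with $((k+1)/k)^{\bar\Gamma}\le 1+\eps/2$ and $\delta$ small enough, one obtains $\rho_g\le(1+\eps)\rho_f$.

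With a finite and computable set of simplified $\delta$-discretized randomized algorithm maps in hand (enforcing the usual restriction that $g(C)$ and $g(C')$ are equivalently distributed for $C\sim C'$ via the makespan analogue of Lemma~\ref{lem:equal-confs}), I would enumerate all of them; for each map, approximate its competitive ratio on the finite space of (equivalence classes of) relevant instances up to a factor $1+\eps$ by computing $\max_{\I}\mathbb{E}[g(\I)]/\opt(\I)$, where $\opt(\I)$ is obtained by brute force since instances have constantly many jobs and constantly many configurations; and output the minimizer. The same loop is repeated for each machine environment by invoking the simplifications from Section~\ref{sec:Extensions}: bounded speed ratio is required in \abc{Qm}{r_j}{C_{\max}} to keep the processing-time range per interval bounded (Lemma~\ref{lem:constant-num-proc-times} analogue), and Lemma~\ref{lem:Rm-range-processing-times} together with the job-class machinery handles \abc{Rm}{r_j,pmtn}{C_{\max}}; the non-preemptive simplifications for related machines carry over because Lemma~\ref{lem:split-the-instance-non-pmtn} is not needed in the makespan setting, as noted in Section~\ref{sec:other-objectives}.

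The main obstacle I expect is the rigorous restatement of the discretization lemma for the makespan objective: in Section~\ref{sec:randomized} the low-probability schedules are controlled against $\ee^s\cdot rw(\I)\le\ee^s\cdot\opt(\I)$, and one has to find the analogous bound for makespan. The natural replacement is to use that any realizable schedule has makespan at most $R_{x^*+s}$ by the safety net, so the contribution of low-probability schedules is at most $|\bar{\mathcal{S}}|\cdot k\delta\cdot\ee^s\cdot\opt(\I)$, which can be absorbed for $\delta$ small. The remaining technical care lies in ensuring that this constant $|\bar{\mathcal{S}}|$ is bounded uniformly over all relevant instances (which follows from the simplifications), and in verifying that the equivalence relation on configurations, stripped of weight information as noted in Section~\ref{sec:other-objectives}, still yields constantly many classes in each of the machine environments considered.
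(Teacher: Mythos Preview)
Your proposal is correct and follows essentially the same approach as the paper, which simply invokes Lemma~\ref{lem:rand-const-periods-makespan} and asserts that the discretization argument of Lemma~\ref{lem:discretize-prob} carries over to the makespan objective. In fact, you spell out more carefully than the paper does the one nontrivial adaptation, namely replacing the bound $\ee^{s}\cdot rw(\I)$ for the low-probability schedules by the safety-net bound $R_{x^{*}+s}\le\ee^{s}\cdot\opt(\I)$.
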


\end{document}